\DeclarePairedDelimiter\ceil{\lceil}{\rceil}
\DeclarePairedDelimiter\floor{\lfloor}{\rfloor}
\newcommand{\drawline}[2]{\protect\tikz[baseline={(a.base)}]{
    \protect\node[inner sep=4pt,outer sep=4pt] (a) {#1};
    \protect\draw[#2] {([yshift=3.5pt]a.base west) --
      ([yshift=3.5pt]a.base east)};}}
\definecolor{aqua}{RGB}{0,255,255}
\definecolor{fuchsia}{RGB}{255,0,255}
\newcommand{\bbeta}{\boldsymbol{\beta}}
\newcommand{\bdelta}{\boldsymbol{\delta}}
\newcommand{\bmu}{\boldsymbol{\mu}}
\newcommand{\bOmega}{\boldsymbol{\Omega}}
\newcommand{\bSigma}{\boldsymbol{\Sigma}}
\newcommand{\bLambda}{\boldsymbol{\Lambda}}
\newcommand{\brho}{\scalebox{1.4}{$\boldsymbol{\rho}$}}
\newcommand{\bsigma}{\scalebox{1.1}{$\boldsymbol{\Phi}$}}
\newcommand{\onen}{\frac{1}{n}}
\newcommand{\tp}{^{\rm T}}
\newcommand{\bz}{\mathbf{0}}
\newcommand{\x}{\mathbf{x}}
\newcommand{\y}{\mathbf{y}}
\newcommand{\z}{\mathbf{z}}
\newcommand{\A}{\mathbf{A}}
\newcommand{\B}{\mathbf{B}}
\newcommand{\I}{\mathbf{I}}
\newcommand{\M}{\mathbf{M}}
\newcommand{\W}{\mathbf{W}}
\newcommand{\X}{\mathbf{X}}
\newcommand{\Z}{\mathbf{Z}}
\newcommand{\sumn}{\sum_{i=1}^{n}}
\newcommand{\sumk}{\sum_{i=1}^{k}}
\newcommand{\sumr}{\sum_{i=1}^{r}}
\newcommand{\sumrn}{\sum_{i=n-r+1}^{n}}
\newcommand{\sump}{\sum_{j=1}^{p}}
\newcommand{\onek}{\frac{1}{k}}
\newcommand{\op}{o_P(1)}
\newcommand{\Op}{O_P(1)}
\newcommand{\vr}{\mathrm{var}}
\newcommand{\Exp}{\mathrm{E}}
\newcommand{\Var}{\mathrm{V}}
\newcommand{\diag}{\mathrm{diag}}
\newtheorem{thm}{Theorem}
\newtheorem{lemma}{Lemma}
\newtheorem{alg}{Algorithm}
\theoremstyle{remark}
\newtheorem{remark}{\bf Remark}
\newcommand{\blind}{1}
\def\spacingset#1{\renewcommand{\baselinestretch}
  {#1}\small\normalsize} \spacingset{1}
\begin{document}
%%%%%%%%%%%%%%%%%%%%%%%%%%%%%%%%%%%%%%%%%%%%%%%%%%%%%%%%%%%%%%%%%%
\title{Information-Based Optimal Subdata Selection for Big Data Linear Regression}
\if0\blind{\date{}}
\else{\author{ HaiYing Wang, Min
    Yang, and John Stufken \footnote{ HaiYing Wang is Assistant
      Professor, Department of Statistics, University
      of Connecticut, Storrs, Mansfield, CT 06269
      (haiying.wang@uconn.edu).  Min
      Yang is Professor, Department of Mathematics, Statistics, and
      Computer Science, University of Illinois at Chicago, Chicago, IL
      60607 (myang2@uic.edu).  John Stufken is Charles Wexler
      Professor, School of Mathematical and Statistical Sciences,
      Arizona State University, Tempe, AZ 85287 (jstufken@asu.edu) }}
  \date{\today}}
\fi
\maketitle
% \bigskip
\begin{abstract}
  Extraordinary amounts of data are being produced in many branches of
  science.  Proven statistical methods are no longer applicable with
  extraordinary large data sets due to computational limitations.  A
  critical step in big data analysis is data reduction. Existing
  investigations in the context of linear regression focus on
  subsampling-based methods. However, not only is this approach prone
  to sampling errors, it also leads to a covariance matrix of the
  estimators that is typically bounded from below by a term that is of the order of the inverse of the subdata size.
  We propose a novel approach, termed
  information-based optimal subdata selection (IBOSS). Compared to
  leading existing subdata methods, the IBOSS approach has the following advantages:
  (i) it is significantly faster; (ii) it is suitable for distributed
  parallel computing; (iii) the variances of the slope parameter
  estimators converge to 0 as the full data size increases even if the
  subdata size is fixed, i.e., the convergence rate depends on the
  full data size; (iv) data analysis for IBOSS subdata is
  straightforward and the sampling distribution of an IBOSS estimator
  is easy to assess. Theoretical results and extensive simulations
  demonstrate that the IBOSS approach is superior to subsampling-based
  methods, sometimes by orders of magnitude.  The advantages of the
  new approach are also illustrated through analysis of real data.
\end{abstract}
\noindent%
{\it Keywords:} Massive data; D-optimality; Information matrix; Linear regression; Subdata
\vfill
\newpage
\spacingset{1.45} % DON'T change the spacing!

\section{Introduction}
Technological advances have enabled an exponential growth in data
collection and the size of data sets. Although computational resources
have also been growing rapidly, this pales in comparison to the
astonishing growth in data volume. This presents the challenge of drawing useful information and converting data into knowledge with available computational resources. We meet this challenge in the context of linear regression by identifying informative subdata, which can be fully analyzed.

For linear regression with a $n\times1$ response vector and
$n\times p$ covariate matrix, the full data volume is $n(p+1)$. In the
setting of ``big data'', this prevents computation of parameter estimates in a traditional way due to insufficient computational resources. The scenario with $p\gg n$ is usually referred to as high-dimensional data. Multiple methods for analyzing high-dimensional data have been proposed and studied, such as LASSO \citep{tibshirani1996regression,meinshausen2009lasso},
 Dantzig selector \citep{candes2007dantzig}, and sure
independence screening \citep{fan2008sure}, among others.
  We focus on the scenario with $n\gg p$ for an extremely large $n$.
 This is an important problem that arises in practice. For example, for the chemical sensors data in Section~\ref{sec:real-data}, $n=4,208,261$ and $p=15$. As another example, the airline on-time data set from the 2009 ASA Data Expo contains $n=123,534,969$ observations on 29 variables about flight arrival and departure information for all commercial flights within the USA, from October 1987 to April 2008. 
Clearly, not everyone has the computing resources to fully analyze the whole data in the aforementioned examples, and thus data reduction is a crucial step to extract useful information from the data. For the case of $n\gg p$, the required computing time for ordinary least squares (OLS) is
$O(np^2)$. This time complexity is too long for big data, and may even be beyond the computational capacity of available computing 
facilities. To address this computational limitation, data reduction is
important. 
Existing investigations in this direction focus on taking
random subsamples from the full data. Interesting 
studies include
\cite{drineas2006sampling,Drineas:11,Ma2014,PingMa2014-ICML,
  PingMa2014-JMLR}, among others. Parameter estimates are obtained based on a small random
subsample of the full data. Normalized statistical leverage scores are often used for
nonuniform subsampling probabilities, and the method is known as {\it algorithmic leveraging} \citep{PingMa2014-ICML}.

With exact statistical leverage scores, the computing time of this method is $O(np^2)$, just as for OLS on the full data. \cite{Drineas:12} developed a randomized algorithm to approximate leverage scores; it is $O(np\log n/\epsilon^2)$, and $o(np^2)$ if $\log n=o(p)$, where $\epsilon\in(0, 0.5]$. Thus the computing time for the algorithmic leveraging method is at least $O(np\log n/\epsilon^2)$. A subsampling-based method also induces sampling error and the information in the resultant subdata is typically proportional to its size.  We find that (details to be shown in
Section~\ref{sec:framework}), for a subsampling-based estimator, the
covariance matrix is bounded from below by a term that is typically of
order $1/k$, where $k$ is the subsample size. This order is only a
function of $k$, so that the variance does not go to 0 with increasing full data size $n$.

In this paper, we propose an alternative subdata selection approach, which we call {\bf i}nformation-{\bf b}ased {\bf o}ptimal {\bf s}ubdata {\bf s}election (IBOSS) method from big
data. The basic idea is to select the most informative data points deterministically so that subdata of a small size preserves most of the information contained in the full data. It is akin to the basic motivation of optimal experimental design \citep{kiefer1959}, which aims at obtaining the maximum information with a fixed budget. Traditionally, optimal design is not a data analysis tool, but focuses on data collection. The idea of ``maximizing'' an information matrix, however, can be borrowed to establish a framework to identify the most informative subdata from the full data for estimating unknown parameters. Using this framework, we will also gain more insight into the popular subsampling-based methods.

As we will show, the IBOSS approach has the following advantages compared to existing methods: 1) the computing time for the IBOSS algorithm is $O(np)$, which is significantly faster than existing methods; 2) the IBOSS algorithm is very suitable for distributed parallel computing platforms; it identifies informative data points by examining each covariate
individually; 3) the IBOSS method does not induce sampling error and the variance of estimators can go to 0 as the full data size $n$ becomes large even if the subdata size $k$ is fixed; and 4) in terms of distributional properties, IBOSS estimators inherit properties of estimators based on the full data. 

The remainder of the paper is organized as follows. In
Section~\ref{sec:framework}, we present the IBOSS framework and use it
to analyze the popular subsampling-based methods. A lower bound for
covariance matrices of subsampling-based estimators will be given. In
Section~\ref{sec:optim-crit-main}, we characterize IBOSS subdata
under the D-optimality criterion and use it to develop a computationally efficient algorithm. In Section~\ref{sec:asymptotic-analysis},
we evaluate the IBOSS algorithm by deriving its asymptotic
properties. In Section~\ref{sec:numer-exper}, the performance of the IBOSS method is examined through extensive simulations and two real data applications. We offer concluding remarks in Section~\ref{sec:concluding-remarks} and show all technical details in
the appendix. Additional numerical results are provided in the Supplementary Material.

\section{The framework}\label{sec:framework}
Let $(\z_1, y_1), ..., (\z_n, y_n)$ denote the full data, and assume the
linear regression model:
\begin{align}\label{eq:1}
  y_i&=\beta_0+\z_i\tp\bbeta_1+\varepsilon_i
       =\beta_0+\sum_{j=1}^{p}z_{ij}\beta_j+\varepsilon_i,\
       \quad i=1,...,n,% \ j=1,...,p,
\end{align}
where $\beta_0$ is the scalar intercept parameter,
$\bbeta_1=(\beta_1,\beta_2, ..., \beta_p)\tp$ is a $p$-dimensional
vector of unknown slope parameters, $\z_i=(z_{i1},...,z_{ip})\tp$ is a
covariate vector, $y_i$ is a response, and $\varepsilon_i$ is an error
term. We write $\x_i=(1, \z\tp_i)\tp$,
$\bbeta=(\beta_0, \bbeta_1\tp)\tp$, $\Z=(\z_1, ..., \z_n)\tp$,
$\X=(\x_1, ..., \x_n)\tp$, $\y=(y_1, ..., y_n)\tp$, 
assume that the $y_i$'s are uncorrelated given the covariate matrix $\Z$, and that the error terms $\varepsilon_i$'s satisfy $\Exp(\varepsilon_i)=0$ and $\Var(\varepsilon_i)=\sigma^2$.

The intercept parameter is often not of interest and could be eliminated by centralizing the full data. However, this does not work for streaming data, is not practical if the focus is on building a predictive model, and requires a computing time of $O(np)$.

When using the full data and model~\eqref{eq:1}, the least-squares estimator of $\bbeta$, which is also its best linear unbiased estimator (BLUE), is
\begin{equation*}
  \hat\bbeta_f=\left(\sumn \x_i\x_i\tp\right)^{-1}\sumn \x_iy_i.
\end{equation*}
The covariance matrix of this unbiased estimator is equal to the inverse of
\begin{equation*}
  \M_{f}=\frac{1}{\sigma^2}\sumn \x_i\x_i\tp.
\end{equation*}
This is the observed Fisher information matrix for $\bbeta$ from the full data if the $\varepsilon_i$'s are normally distributed. While we do not require the normality assumption, for simplicity we will still call $\M_{f}$ the information matrix. 
Let $(\z_1^*, y_1^*), ..., (\z_k^*, y_k^*)$ be subdata of size $k$ selected deterministically from the full data, in which the rule to determine whether a data point is included or not depends on $\Z$ only. Then the subdata also follow the linear regression model
\begin{equation}\label{eq:12}
  y_i^*=\beta_0+{\z_i^*}\tp\bbeta_1+\varepsilon^*_i,\qquad i=1, ...,k,
\end{equation}
with the same assumptions and unknown parameters as for model~\eqref{eq:1}.  
 The LS estimator
\begin{equation*}
  \hat\bbeta_s=\left(\sumk \x_i^*{\x_i^*}\tp\right)^{-1}\sumk \x_i^*y_i^*,
\end{equation*}
is the BLUE of $\bbeta$ for model~\eqref{eq:12} based on the
subdata, where $\x_i^*=(1,{\z_i^*}\tp)\tp$.  The observed
information matrix for $\bbeta$ based on the subdata is
\begin{equation*}
  \M_{s}=\frac{1}{\sigma^2}\sumk \x_i^*{\x_i^*}\tp,
\end{equation*}
which is the inverse of the covariance matrix of
$\hat\bbeta_s$, namely, 
\begin{equation}\label{eq:13}
  \Var(\hat\bbeta_s|\Z)=\M_{s}^{-1}=
  \sigma^2\left(\sumk \x_i^*{\x_i^*}\tp\right)^{-1}.
\end{equation}

To estimate a linear function of $\bbeta$ using subdata, plugging in the LS
estimator yields the estimator with the minimum variance among all linear unbiased 
estimators. Since this minimum variance is a function of the covariate
values in the subdata, one can judiciously select the subdata to minimize the minimum variance. This is akin to the basic idea behind optimal experimental design \citep{kiefer1959}. We implement this here by seeking subdata that, in some sense, maximize $\M_s$.

Some additional notation will help to formulate this idea as an optimization problem. Let $\delta_i$ be the indicator variable that signifies whether
$(\z_i, y_i)$ is included in the subdata, i.e., $\delta_i=1$ if $(\z_i, y_i)$ is included and $\delta_i=0$ otherwise. The information matrix with subdata of size $k$ can then be written as
\begin{equation}\label{eq:2}
  \M(\bdelta)=\frac{1}{\sigma^2}\sumn\delta_i\x_i\x_i\tp,
\end{equation}
where $\bdelta=\{\delta_1, \delta_2, ..., \delta_n\}$ such that
$\sumn\delta_i=k$.  To have an optimal estimator based on a subdata,
one can choose a $\bdelta$ that ``maximizes'' the information
matrix \eqref{eq:2}. Since $\M(\bdelta)$ is a matrix, in optimal experimental design \citep{kiefer1959}, this is typically done by maximizing a univariate {\em optimality criterion function} of the matrix.

Let $\psi$ denote an optimality criterion function. The problem is presented as the following optimization problem given the observed big data:
\begin{equation}\label{eq:16}
  \bdelta^{opt}=\arg\max_{\bdelta}\psi\{\M(\bdelta)\},
  \quad\text{subject to}\quad\sumn\delta_i=k.
\end{equation}
A popular optimality criterion is the D-optimality criterion. It maximizes the determinant of $\M(\bdelta)$, which has the interpretation of minimizing the
expected volume of the joint confidence ellipsoid for $\bbeta$. We will come back to this criterion in greater detail in Section~\ref{sec:optim-crit-main}. 

\subsection{Analysis of existing subsampling-based methods}
\label{sec:analys-exist-subs}
Based on our information-based subdata selection framework, we can also gain insights into popular random subsampling-based methods. To see this, let
$\boldsymbol{\eta}_L$ be the $n$-dimensional count-vector whose $i$th entry denotes the number of times that the $i$th data point is included in a subsample of size $k$, which is obtained using a random subsampling method with probabilities proportional to $\pi_i$, $i=1,...,n$, such that
$\sumn\pi_i=1$. 
A subsampling-based estimator has the general form
\begin{equation}\label{eq:23}
  \tilde\bbeta_L=
  \left(\sumn w_i\eta_{Li}\x_i\x_i\tp\right)^{-1}
  \sumn w_i\eta_{Li}\x_iy_i,
\end{equation}
where the weight $w_i$ is often taken to be $1/\pi_i$. Corresponding to
different choices of $\pi_i$ and $w_i$, some popular subsampling-based
methods \citep{PingMa2014-JMLR} include: uniform subsampling (UNI)
in which $\pi_i=1/n$ and $w_i=1$; leverage-based subsampling (LEV) in
which $\pi_i=h_{ii}/(p+1)$, $w_i=1/\pi_i$ and
$h_{ii}=\x_i\tp(\X\X\tp)^{-1}\x_i$; shrinked leveraging estimator
(SLEV) in which $\pi_i=\alpha h_{ii}/(p+1)+(1-\alpha)/n$,
$w_i=1/\pi_i$ and $\alpha\in[0,1]$; unweighted leveraging estimator
(LEVUNW) in which $\pi_i=h_{ii}/(p+1)$ and $w_i=1$.
 
The distribution of $\tilde\bbeta_L$ is complicated, but we can study its performance using the proposed information-based framework. The ``information matrix''
$\M(\boldsymbol{\eta}_L)$ given $\Z$ is random because
$\boldsymbol{\eta}_L$ is random. While the $\eta_{Li}$'s are correlated, we will only need to use the marginal distribution of each $\eta_{Li}$. If subsampling is with replacement, then each $\eta_{Li}$ has a binomial distribution
Bin$(k,\pi_i)$; if the subsampling is without replacement, the marginal distribution of each $\eta_{Li}$ is Bin$(1,k\pi_i)$ under the condition that $k\pi_i\le1$. Either way, $\Exp(\eta_{Li})=k\pi_i$. Hence, taking
expectations with respect to $\boldsymbol{\eta}_L$, the expected
observed information matrix for a subsampling-based method is
\begin{align}\label{eq:22}
  \M_{EL}=\Exp\{\M(\boldsymbol{\eta}_L)|\Z\}
  &=\frac{1}{\sigma^2}\sumn\Exp(\eta_{Li})\x_i\x_i\tp
    =\frac{k}{\sigma^2}\sumn\pi_i\x_i\x_i\tp. 
\end{align}
Unlike the IBOSS approach, the inverse of $\M_{EL}$ is not the variance
covariance matrix of $\Var(\tilde\bbeta_L|\Z)$. In fact, for subsampling with replacement there is a small probability that 
$\Var(\tilde\bbeta_L|\Z)$ is not well defined because $\bbeta_L$ is not estimable. To solve this issue, we consider only subsamples with
full-rank covariate matrices to define the covariance matrix of $\tilde\bbeta_L$. 
The following theorem states a relationship between $\M_{EL}$ and the covariance matrix of $\tilde\bbeta_L$. 

\begin{thm}\label{thm:1} 
  Suppose that a subsample of size $k$ is taken using a random
  subsampling procedure with probabilities proportional to $\pi_i$,
  $i=1,...,n$, such that $\sumn\pi_i=1$. Consider the set 
  $\Delta=\{\boldsymbol{\eta}_L:\sumn\eta_{Li}\x_i\x_i\tp$ is 
  non-singular\}, where $\boldsymbol{\eta}_L$ is the $n$-dimensional 
  vector that counts how often each data point is 
  included. Let $I_{\Delta}(\boldsymbol{\eta}_L)=1$ if and only if $\boldsymbol{\eta}_L\in\Delta$. Given $I_{\Delta}(\boldsymbol{\eta}_L)=1$, $\tilde\bbeta_L$ is 
  unbiased for $\bbeta$, and 
  \begin{equation}
    \Var\{\tilde\bbeta_L|\Z,I_{\Delta}(\boldsymbol{\eta}_L)=1\} \ge
    P\{I_{\Delta}(\boldsymbol{\eta}_L)=1|\Z\}\M_{EL}^{-1}
    =\frac{\sigma^2P\{I_{\Delta}(\boldsymbol{\eta}_L)=1|\Z\}}{k}
    \left\{\sumn\pi_i\x_i\x_i\tp\right\}^{-1}
  \end{equation}
  in the Loewner ordering.
\end{thm}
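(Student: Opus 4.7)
The plan is three-fold: establish unbiasedness by conditioning on the sampling counts $\boldsymbol{\eta}_L$; apply the law of total variance together with a Gauss--Markov-type Loewner lower bound on the inner conditional variance; then push that bound through an outer expectation using matrix Jensen and a simple conditional-versus-unconditional mean comparison. Because the sampling probabilities $\pi_i$ depend on $\Z$ only and the sampling randomness is independent of $\y$, we have $\Exp(y_i|\Z,\boldsymbol{\eta}_L)=\x_i\tp\bbeta$, so the linear-in-$\y$ form of $\tilde\bbeta_L$ on $\{\boldsymbol{\eta}_L\in\Delta\}$ yields $\Exp(\tilde\bbeta_L|\Z,\boldsymbol{\eta}_L)=\bbeta$, and iterating delivers $\Exp(\tilde\bbeta_L|\Z,I_\Delta=1)=\bbeta$. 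Since this inner conditional mean is the constant $\bbeta$ (free of $\boldsymbol{\eta}_L$), the corresponding term in the law of total variance vanishes, leaving
\[
\Var(\tilde\bbeta_L|\Z,I_\Delta=1)=\Exp\{\Var(\tilde\bbeta_L|\Z,\boldsymbol{\eta}_L)|\Z,I_\Delta=1\}.
\]

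Next I would lower-bound $\Var(\tilde\bbeta_L|\Z,\boldsymbol{\eta}_L)$ on $\Delta$. Viewing $\tilde\bbeta_L$ as a linear unbiased estimator of $\bbeta$ built from the distinct sampled observations indexed by $S=\{i:\eta_{Li}\ge 1\}$ (whose errors are independent with variance $\sigma^2$), the Gauss--Markov theorem gives
\[
\Var(\tilde\bbeta_L|\Z,\boldsymbol{\eta}_L)\ge\sigma^2\Bigl(\sum_{i\in S}\x_i\x_i\tp\Bigr)^{-1}\ge\sigma^2\Bigl(\sumn\eta_{Li}\x_i\x_i\tp\Bigr)^{-1}
\]
in the Loewner order, the second inequality following from $\sum_{i\in S}\x_i\x_i\tp\le\sumn\eta_{Li}\x_i\x_i\tp$ combined with Loewner-monotonicity of matrix inversion. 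Operator convexity of $A\mapsto A^{-1}$ (matrix Jensen) then gives
\[
\Exp\Bigl\{\Bigl(\sumn\eta_{Li}\x_i\x_i\tp\Bigr)^{-1}\Big|\Z,I_\Delta=1\Bigr\}\ge\Bigl\{\Exp\Bigl(\sumn\eta_{Li}\x_i\x_i\tp\Big|\Z,I_\Delta=1\Bigr)\Bigr\}^{-1},
\]
and nonnegativity of $\eta_{Li}$ together with $\Exp(\eta_{Li}|\Z)=k\pi_i$ yields $\Exp(\eta_{Li}|\Z,I_\Delta=1)\le k\pi_i/P\{I_\Delta=1|\Z\}$. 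Combining these with the definition of $\M_{EL}$ and inverting one more time delivers the claimed bound.

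I expect the Gauss--Markov step to be the main subtlety. Under with-replacement sampling the counts $\eta_{Li}\ge 2$ correspond to repeated copies of the same pair $(\x_i,y_i)$, whose error terms are perfectly correlated rather than independent; this forces the BLUE lower bound to be stated over the distinct index set $S$ before being relaxed to the cleaner bound involving $\sumn\eta_{Li}\x_i\x_i\tp$, which is the form naturally comparable to $\M_{EL}$. One should also verify that $\boldsymbol{\eta}_L\in\Delta$ implies invertibility of $\sumn w_i\eta_{Li}\x_i\x_i\tp$ (which holds since $w_i>0$). Everything else---iteration of expectations, matrix Jensen, Loewner-monotonicity of inversion, and the $P\{I_\Delta=1|\Z\}$ factor arising from conditioning---is routine.
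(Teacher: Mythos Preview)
Your proposal is correct and follows the same overall architecture as the paper's proof: unbiasedness via iterated expectation, law of total variance (with the $\Var_{\boldsymbol{\eta}_L}[\Exp_{\y}(\cdot)]$ term vanishing), matrix Jensen for $A\mapsto A^{-1}$, and the conditional-versus-unconditional comparison producing the $P\{I_\Delta=1|\Z\}$ factor.

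The one place where the arguments differ in presentation is the key Loewner inequality. The paper writes out the conditional variance explicitly as $\sigma^2(\X\tp\W\X)^{-1}(\X\tp\W^2\X)(\X\tp\W\X)^{-1}$ and bounds the inverse via a projection comparison: $\W\X(\X\tp\W^2\X)^{-1}\X\tp\W=\mathrm{pr}(\W\X)\le\mathrm{pr}(\B_{WX})$, which yields $(\X\tp\W\X)(\X\tp\W^2\X)^{-1}(\X\tp\W\X)\le\sumn\x_i\x_i\tp I(\eta_{Li}>0)$. You instead invoke Gauss--Markov directly over the distinct index set $S=\{i:\eta_{Li}\ge1\}$ to get $\Var(\tilde\bbeta_L|\Z,\boldsymbol{\eta}_L)\ge\sigma^2(\sum_{i\in S}\x_i\x_i\tp)^{-1}$. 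These are the same inequality (the projection step is, in effect, a proof of Gauss--Markov here), so the two routes coincide at this point. Your subsequent relaxation to $(\sumn\eta_{Li}\x_i\x_i\tp)^{-1}$ and use of $\Exp(\eta_{Li}|\Z,I_\Delta=1)\le k\pi_i/P\{I_\Delta=1|\Z\}$ is a harmless variant of the paper's use of $I(\eta_{Li}>0)$ and $P(\eta_{Li}>0|\Z,I_\Delta=1)\le k\pi_i/P\{I_\Delta=1|\Z\}$; in fact, you could skip that relaxation entirely and take the expectation of $\sum_{i\in S}\x_i\x_i\tp$ directly, which gives the paper's indicator bound. Your remark about invertibility of $\sumn w_i\eta_{Li}\x_i\x_i\tp$ on $\Delta$ and about repeated observations carrying perfectly correlated errors is apt and handles the only real subtlety.
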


\begin{remark}
Theorem~\ref{thm:1} is true regardless of the choice for the weights $w_i$'s in $\tilde\bbeta_L$ or whether subsampling is with or without replacement. It provides a lower bound for covariance matrices of  subsampling-based estimators, which provides a feasible way to evaluate the best performance of a subsampling-based estimator.   
Note that when $k\gg p+1$, $P\{{ I_{\Delta}(\boldsymbol{\eta}_L)=1}|\Z\}$ is often close to 1, so that the lower bound is close to the inverse of the expected observed information $\M_{EL}$. 
\end{remark}

\begin{remark} 
  Existing investigations on the random subsampling approach focus on sampling with replacement where the subsample is independent given the full data. For subsampling without replacement with fixed sample size, the subsample is no longer independent given the full data and the properties of the resultant estimator are more complicated. As a result, to the best of our knowledge, subsampling without replacement for a fixed subsample size from big data has never been investigated. Our IBOSS framework, however, is applicable here to assess the performance of an estimator based on subsampling without replacement. 
\end{remark}

Applying Theorem~\ref{thm:1} to the popular sampling-based methods, we obtain the following results. For the UNI method,
  \begin{equation}\label{eq:24}
    \Var\left\{\tilde\bbeta_L^{\rm UNI}\Big|\Z,{ I_{\Delta}(\boldsymbol{\eta}_L)=1}\right\}
    \ge \frac{\sigma^2P\{I_{\Delta}(\boldsymbol{\eta}_L)=1|\Z\}}{k}
    \left(\onen\sumn\x_i\x_i\tp\right)^{-1};
  \end{equation}
for the LEV (or LEVUNW) method, 
  \begin{align}\label{eq:25}
    \Var\left\{\tilde\bbeta_L^{\rm LEV}\Big|\Z, I_{\Delta}(\boldsymbol{\eta}_L)=1\right\}
    \ge \frac{(p+1)\sigma^2P\{I_{\Delta}(\boldsymbol{\eta}_L)=1|\Z\}}{k}
    \left\{\sumn\x_i\x_i\tp\left(\X\tp\X\right)^{-1}
    \x_i\x_i\tp\right\}^{-1};
  \end{align}
for the SLEV method,
  \begin{align}\label{eq:26}
    \Var\left\{\tilde\bbeta_L^{\rm SLEV}\Big|\Z, I_{\Delta}(\boldsymbol{\eta}_L)=1\right\}
    \ge &\frac{\sigma^2P\{I_{\Delta}(\boldsymbol{\eta}_L)=1|\Z\}}{k}
    \left\{\frac{\alpha}{p+1}\sumn\x_i\x_i\tp\left(\X\tp\X\right)^{-1}
    \x_i\x_i\tp\right.\notag\\
    &\hspace{5.9cm}\left.+\frac{1-\alpha}{n}\sumn\x_i\x_i\tp\right\}^{-1}.
  \end{align}
  
If $\x_i$, $i=1, ..., n$, are generated independently from the same distribution as random vector $\x$ with finite second moment, i.e., $\Exp\|\x\|^2<\infty$, then from the strong law of large numbers,
\begin{equation}\label{eq:27}
  \onen\sumn\x_i\x_i\tp \rightarrow\Exp(\x\x\tp),
\end{equation}
almost surely as $n\rightarrow\infty$. If we further assume that the fourth moment of $\x$ is finite, i.e., $\Exp\|\x\|^4<\infty$, then we have
\begin{equation}\label{eq:28}
  \sumn\x_i\x_i\tp\left(\X\tp\X\right)^{-1}\x_i\x_i\tp
  \rightarrow \Exp\big[\x\x\tp\{\Exp(\x\x\tp)\}^{-1}\x\x\tp\big]
\end{equation}
almost surely as $n\rightarrow\infty$. 
 Let $P_{\eta}= \underset{n\rightarrow\infty}{\lim\inf}P\{I_{\Delta}(\boldsymbol{\eta}_L)=1|\Z\}$. Note that $P_{\eta}=1$ under some mild condition, e.g., the covariate distribution is continuous. 
From \eqref{eq:24} -~\eqref{eq:28}, we have
  \begin{equation}\label{eq:29}
    \Var\left\{\tilde\bbeta_L^{\rm UNI}\Big|\Z, I_{\Delta}(\boldsymbol{\eta}_L)=1\right\}
    \ge \frac{\sigma^2P_{\eta}}{k}
    \left\{\Exp(\x\x\tp)\right\}^{-1},
  \end{equation}
  \begin{align}\label{eq:30}
    \Var\left\{\tilde\bbeta_L^{\rm LEV}\Big|\Z, I_{\Delta}(\boldsymbol{\eta}_L)=1\right\}
    \ge \frac{(p+1)\sigma^2P_{\eta}}{k}
    \left(E[\x\x\tp\{\Exp(\x\x\tp)\}^{-1}\x\x\tp]\right)^{-1},
  \end{align}
  \begin{align}\label{eq:31}
    \Var\left\{\tilde\bbeta_L^{\rm SLEV}\Big|\Z, I_{\Delta}(\boldsymbol{\eta}_L)=1\right\}
    \ge &\frac{\sigma^2P_{\eta}}{k}
          \Bigg(\frac{\alpha}{p+1}E[\x\x\tp\{\Exp(\x\x\tp)\}^{-1}\x\x\tp]
          % \notag &\hspace{4cm}
      +(1-\alpha)\Exp(\x\x\tp)\Bigg)^{-1},
  \end{align}
  almost surely as $n\rightarrow\infty$. 
  % { where \eqref{eq:29} requires $\Exp\|\x\|^2<\infty$, and \eqref{eq:30} and \eqref{eq:31} requires $\Exp\|\x\|^4<\infty$.} 
From \eqref{eq:29}, \eqref{eq:30} and \eqref{eq:31}, one sees that covariance matrices of these commonly used subsampling-based estimators are bounded from below in the Loewner ordering by finite quantities that are at the order of $1/k$. These quantities do not go to 0 as the full data sample
size $n$ goes to $\infty$. 

\section{The D-optimality criterion and an IBOSS algorithm}
\label{sec:optim-crit-main}
In this section, we study the commonly used D-optimality criterion and
develop IBOSS algorithms based on theoretical characterizations of IBOSS
subdata under this criterion. 

In our framework, for given full data of size $n$, the
D-optimality criterion suggests the selection of subdata of size $k$ so that
\begin{equation*}
  \bdelta^{opt}_{\mathrm{D}}=
  \arg\max_{\bdelta}\left|\sumn\delta_i\x_i\x_i\tp\right|,\quad
  \sumn\delta_i=k.
\end{equation*}
Obtaining an exact solution is computationally far too expensive. In working towards an approximate solution, we first derive an upper bound for $|\M(\bdelta)|$ which, while only attainable for very special cases, will guide our later algorithm.

\begin{thm}[D-optimality]\label{thm:2} For subdata of size $k$
  represented by $\bdelta$,
  \begin{align}\label{eq:5}
    |\M(\bdelta)|\le\frac{k^{p+1}}{4^p\sigma^{2(p+1)}}
    \prod_{j=1}^p(z_{(n)j}-z_{(1)j})^2,
  \end{align}
  where $z_{(n)j}=\max\{z_{1j}, z_{2j}, ..., z_{nj}\}$ and
  $z_{(1)j}=\min\{z_{1j}, z_{2j}, ..., z_{nj}\}$ are the
  $n$th and first order statistics of $z_{1j}, z_{2j}, ..., z_{nj}$.
  If the subdata consists of the $2^p$ points
    $(a_{1},\ldots,a_{p})\tp$ where $a_{j}=z_{(n)j}$ or $z_{(1)j}$,
    $j=1, 2, ..., p$, each occurring equally often, then equality holds in~\eqref{eq:5}.
\end{thm}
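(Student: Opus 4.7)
The plan is to peel off the intercept column by a Schur-complement expansion and then reduce the remaining $p\times p$ determinant by Hadamard's inequality together with a one-dimensional range/variance bound. Writing $\X^*$ for the $k\times(p+1)$ matrix of selected rows and $\Z^*$ for its last $p$ columns, I would start by observing
\[
\sigma^{2(p+1)}|\M(\bdelta)|=\Bigl|(\X^*)\tp\X^*\Bigr|
=\left|\begin{pmatrix}k & \mathbf{1}\tp\Z^* \\ (\Z^*)\tp\mathbf{1} & (\Z^*)\tp\Z^*\end{pmatrix}\right|
=k\,\bigl|(\Z^*)\tp H \Z^*\bigr|,
\]
where $H=\I-\tfrac{1}{k}\mathbf{1}\mathbf{1}\tp$ is the centering projector. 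The entries of $(\Z^*)\tp H \Z^*$ are the sample covariances of the columns of $\Z^*$, so the matrix is positive semidefinite.

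Next, I would apply Hadamard's inequality to this PSD matrix, which bounds its determinant by the product of its diagonal entries:
\[
\bigl|(\Z^*)\tp H\Z^*\bigr|\le \prod_{j=1}^{p}\sum_{i=1}^{k}(z_{ij}^{*}-\bar z_{j}^{*})^{2}.
\]
Each factor is then controlled by the elementary inequality that the sum of squared deviations of $k$ numbers lying in $[a,b]$ is at most $k(b-a)^{2}/4$. This is immediate from choosing the midpoint $c=(a+b)/2$ in the identity $\sum(x_i-\bar x)^2=\sum(x_i-c)^2-k(c-\bar x)^2$ and noting that $(x_i-c)^2\le (b-a)^2/4$ on the interval. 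Applied with $a=z_{(1)j}$ and $b=z_{(n)j}$, and multiplied back by the $k$ from the Schur step and the $\sigma^{-2(p+1)}$ from the definition of $\M(\bdelta)$, this yields the claimed bound \eqref{eq:5}.

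For the equality statement, I would trace when each inequality is tight. The range bound is saturated only when, for every $j$, exactly $k/2$ of the selected values of the $j$th covariate equal $z_{(1)j}$ and $k/2$ equal $z_{(n)j}$. Hadamard's inequality is tight only when the off-diagonal entries of $(\Z^*)\tp H\Z^*$ vanish, i.e.\ when the sample cross products are zero for every $j\ne\ell$. Taking the subdata to consist of the $2^{p}$ corners of the bounding box, each replicated $k/2^{p}$ times, each column has $k/2$ entries at each extreme, so the first condition holds; and for any pair $(j,\ell)$ the four sign patterns $(\pm,\pm)$ occur equally often, so the centered cross sum is zero, giving the second condition. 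The main delicacy is this bookkeeping for the equality case; the upper bound itself is just Schur plus Hadamard plus a one-variable variance estimate, none of which should cause trouble.
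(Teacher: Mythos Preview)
Your proof is correct and follows a genuinely different route from the paper. The paper rescales each covariate affinely to $[-1,1]$ via $\breve z_{ij}=\{2z_{ij}-(z_{(n)j}+z_{(1)j})\}/(z_{(n)j}-z_{(1)j})$, factors $\sumn\delta_i\x_i\x_i\tp=k\B_3^{-1}\breve\M(\bdelta)(\B_3\tp)^{-1}$ with a lower-triangular $\B_3$, and then bounds the full $(p{+}1)\times(p{+}1)$ determinant $|\breve\M(\bdelta)|\le 1$ in one stroke by the AM--GM inequality on its eigenvalues (the trace is at most $p+1$ because every diagonal entry is at most $1$). You instead peel off the intercept by a Schur complement, apply Hadamard's inequality to the remaining $p\times p$ centered Gram matrix, and control each diagonal by the range--variance inequality. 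Your argument is more modular and makes the two equality conditions---the half-half split in each coordinate and the vanishing of all centered cross-products---emerge separately and transparently; the paper packages equality into the single clean condition $\breve\M(\bdelta)=\I$. Neither approach dominates: yours strings together three named inequalities and never leaves the original coordinates, while the paper's scaling treats the intercept and the $p$ covariates symmetrically through the trace and avoids invoking Hadamard explicitly.
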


\begin{remark}
  Often $k$ is much smaller than $2^p$, so that subdata with equality in 
  Theorem~\ref{thm:2} will not exist. However, just as for Hadamard's determinant bound \citep{hadamard1893}, the result suggests to collect subdata with extreme covariate values, both small and large, occurring with the same frequency. 
  This agrees with the common statistical knowledge that larger variation in covariates is more informative and results in better parameter estimation.
\end{remark}

The following algorithm is motivated by the result in Theorem~\ref{thm:2}.

\begin{alg}[Algorithm motivated by D-optimality]\label{alg:1}
  Suppose that $r=k/(2p)$ is an integer. Using a partition-based
  selection algorithm \citep{Martinez2004}, perform the following steps:
  \begin{enumerate}[(1)]
  \item For $z_{i1}$, $1\le i\le n$, include $r$ data points with the
    $r$ smallest $z_{i1}$ values and $r$ data points with the $r$
    largest $z_{i1}$ values;
  \item For $j=2, ...,p$, exclude data points that were previously
    selected, and from the remainder select $r$ data points with the
    smallest $z_{ij}$ values and $r$ data points with the largest
    $z_{ij}$ values.
  \item Return
    $\hat\bbeta^{\mathrm{D}}=\{(\X^*_{\mathrm{D}})\tp\X^*_{\mathrm{D}}\}^{-1}
    (\X^*_{\mathrm{D}})\tp\y^*_{\mathrm{D}}$
    and the estimated covariance matrix $\hat\sigma^2_{\mathrm{D}}\{(\X^*_{\mathrm{D}})\tp\X^*_{\mathrm{D}}\}^{-1}$, where
    $\X^*_{\mathrm{D}}=(\mathbf{1},\Z^*_{\mathrm{D}})$,
    $\Z^*_{\mathrm{D}}$ is the covariate matrix of the subdata
    selected in the previous steps, $\y^*_{\mathrm{D}}$ is the
    response vector of the subdata and
    $\hat\sigma^2_{\mathrm{D}} = \big\|\y^*_{\mathrm{D}} -
    \X^*_{\mathrm{D}} \hat\bbeta^{\mathrm{D}} \big\|^2/(k-p-1)$.
  \end{enumerate}
\end{alg}

\begin{remark}
  For each covariate, a partition-based selection algorithm has an
  average time complexity of $O(n)$ to find the $r$th largest or smallest value
  \citep{musser1997introspective, Martinez2004}.
  Thus the time to obtain
  the subdata is $O(np)$. Using the subdata, the computing time for $\hat\bbeta^{\mathrm{D}}$ and $\hat\sigma^2_{\mathrm{D}}$ is $O(kp^2+p^3)$ and $O(kp)$, respectively. Thus, the time complexity of
  Algorithm~\ref{alg:1} is $O(np+kp^2+p^3+kp)=O(np+kp^2)$. For the
  scenario that $n>kp$, this reduces to $O(np)$. This algorithm is
  faster than algorithmic leveraging, which has a computing time of
  $O(np\log n)$ \citep{Drineas:12}.
\end{remark}
\begin{remark}
  Algorithm~\ref{alg:1} gives the covariance matrix of the resultant
  estimator, which is very crucial for statistical inference. This is
  the exact covariance matrix of $\hat\bbeta^{\mathrm{D}}$ if the
  variance of the error term, $\sigma^2$, is known. With
  an additional assumption of normality of $\varepsilon_i$,
  $\hat\bbeta^{\mathrm{D}}$ has an exact normal distribution.
\end{remark}
\begin{remark}
  Algorithm~\ref{alg:1} is naturally suited for distributed storage
  and processing facilities for parallel computing. One can simultaneously process each
  covariate and find the indexes of its extreme values. These indexes can then
 be combined to obtain the subdata. While this approach may result in a subdata size that is smaller than $k$ if there is duplication of indexes, the resultant estimator will still have the same convergence rate.
\end{remark}

\begin{remark}
  Algorithm~\ref{alg:1} selects subdata according to extreme values of
  each covariate, which may include outliers. However, the
  selection rule is ancillary, and the resultant subdata follow the same underlying regression model as the full data. We can thus use outlier
  diagnostic methods to identify outliers in the subdata. If there are
  outliers in the full data, it is very likely that these data points will be
  identified as outliers in the subdata. On the other hand, if there are data points that are far from others but still follow the underlying model, then
  these data points actually contain more information about the model and should be used for parameter estimation.
\end{remark}

\begin{remark}
  The restriction that the subdata sample size $k$ is chosen to make $r=k/(2p)$ an integer is mostly for convenience. In the case that $r=k/(2p)$ is not an integer, one can either adjust $k$ by using the floor $\floor{r}$ or the ceiling $\ceil{r}$, or use a
  combination of $\floor{r}$ and $\ceil{r}$ to keep the subdata sample
  size as $k$.
\end{remark}

 The following theorem gives some insight on the quality of using Algorithm~\ref{alg:1} to approximate the upper bound of $|\M(\bdelta)|$ in Theorem~\ref{thm:2}. 

\begin{thm}\label{thm:3}
 Let $\Z^*_{\mathrm{D}}$ be the
  covariate matrix for the subdata of size $k=2pr$ selected using
  Algorithm~\ref{alg:1} and $\X^*_{\mathrm{D}}=(\mathbf{1},\Z^*_{\mathrm{D}})$. The determinant $|(\X^*_{\mathrm{D}})\tp\X^*_{\mathrm{D}}|$ satisfies
\begin{equation}\label{eq:36}
  \frac{|(\X^*_{\mathrm{D}})\tp\X^*_{\mathrm{D}}|}
  {\frac{k^{p+1}}{4^p}\prod_{j=1}^p(z_{(n)j}-z_{(1)j})^2}
  \ge\frac{\lambda_{\min}^p(\mathbf{R})}{p^p}
  \prod_{j=1}^p\left(\frac{z_{(n-r+1)j}-z_{(r)j}}{z_{(n)j}-z_{(1)j}}\right)^2,
\end{equation}
where $\lambda_{\min}(\mathbf{R})$ is the smallest eigenvalue of $\mathbf{R}$, the sample correlation matrix of $\Z^*_{\mathrm{D}}$.
\end{thm}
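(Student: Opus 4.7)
The overall plan is to reduce the $(p{+}1)\times(p{+}1)$ determinant on the left of~\eqref{eq:36} to a $p\times p$ one, factor out a correlation matrix, and then bound each of the resulting diagonal sums of squares in terms of the quantiles $z_{(r)j}$ and $z_{(n-r+1)j}$.

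First I would deal with the intercept column by a Schur complement. Writing $\X^*_{\mathrm{D}}=(\mathbf 1,\Z^*_{\mathrm{D}})$ gives
$$
|(\X^*_{\mathrm{D}})\tp\X^*_{\mathrm{D}}|
=k\cdot\bigl|(\Z^*_{\mathrm{D}})\tp(\I-k^{-1}\mathbf 1\mathbf 1\tp)\Z^*_{\mathrm{D}}\bigr|,
$$
and the inner matrix is the cross-product of the column-centred $\Z^*_{\mathrm{D}}$, which factors as $\mathbf D\mathbf R\mathbf D$ with $\mathbf D=\diag(s_1,\ldots,s_p)$, $s_j^2=\sum_{i=1}^k(z^*_{ij}-\bar z^*_j)^2$, and $\mathbf R$ the sample correlation matrix appearing in the statement. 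Since $|\mathbf R|=\prod_i\lambda_i(\mathbf R)\ge\lambda_{\min}^p(\mathbf R)$, this yields
$$
|(\X^*_{\mathrm{D}})\tp\X^*_{\mathrm{D}}|\;\ge\; k\,\lambda_{\min}^p(\mathbf R)\prod_{j=1}^p s_j^2.
$$

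The substantive step is then to prove $s_j^2\ge \tfrac{r}{2}(z_{(n-r+1)j}-z_{(r)j})^2$ for every $j$. To do this I would establish the combinatorial claim that the selected subdata contains at least $r$ points with $z_{ij}\le z_{(r)j}$ and at least $r$ points with $z_{ij}\ge z_{(n-r+1)j}$. For $j=1$ this is immediate from step~(1) of Algorithm~\ref{alg:1}. For $j>1$, consider any observation whose $j$th coordinate is in the lower $r$-tail: if it has not already been chosen in an earlier step, it is still in the remainder at step $j$, and since no other remainder observation has a smaller $j$th coordinate, it must appear among the $r$ smallest chosen by step $j$; combined with those lower-tail points already selected earlier, one accumulates at least $r$ such observations, and the symmetric argument handles the upper tail. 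Given this, the subdata carries values $u_1,\ldots,u_r\le z_{(r)j}$ and $v_1,\ldots,v_r\ge z_{(n-r+1)j}$ in coordinate $j$; pairing them and applying $(u-m)^2+(v-m)^2\ge (v-u)^2/2$ with $m=\bar z^*_j$ gives the claimed bound on $s_j^2$.

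Finally, substituting the $s_j^2$ bound into the determinant inequality and using $r=k/(2p)$, so that $(r/2)^p=k^p/(4^p p^p)$, then dividing through by $\tfrac{k^{p+1}}{4^p}\prod_{j=1}^p(z_{(n)j}-z_{(1)j})^2$, yields exactly~\eqref{eq:36}. The main obstacle is the combinatorial claim: because Algorithm~\ref{alg:1} processes covariates sequentially and strips out previously chosen rows before handling each new column, one must verify that this stripping cannot block the $r$ most extreme points of coordinate $j$ from entering the subdata by the time step $j$ is executed. The remaining pieces—the Schur complement reduction, the correlation-matrix factorisation, the eigenvalue inequality for $|\mathbf R|$, and the pairing inequality—are routine.
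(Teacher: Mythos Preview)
Your proposal is correct and follows essentially the same route as the paper: centre out the intercept, factor the centred cross-product as $\mathbf D\mathbf R\mathbf D$, bound $|\mathbf R|\ge\lambda_{\min}^p(\mathbf R)$, and then show $s_j^2\ge \tfrac{r}{2}(z_{(n-r+1)j}-z_{(r)j})^2$ (the paper writes this as $(k-1)\vr(z_j^*)\ge \tfrac{r}{2}(z_{(n-r+1)j}-z_{(r)j})^2$ and obtains it via a between-group sum-of-squares decomposition rather than your pairing inequality, but the two arguments are equivalent).

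One point worth noting: your combinatorial lemma---that for every $j$ the subdata contains all $r$ observations with the smallest $j$th coordinates and all $r$ with the largest---is actually sharper and more carefully argued than what the paper does. The paper simply asserts that the $j$th-column values in the subdata consist of $z_{(m)j}$ for $m=1,\dots,r,\,n-r+1,\dots,n$ together with concomitants from the other columns, adding a caveat that ``the subdata may not contain exactly the $r$ smallest and $r$ largest values for each covariate since some data points may be removed in processing each covariate'' and proceeding by an explicit ``abuse of notation.'' Your observation---that any lower-$r$-tail point not yet selected has at most $r-1$ smaller competitors in the remainder and so must be picked at step $j$---closes this gap cleanly (for distinct values; ties require the usual conventions) and is a genuine improvement over the paper's exposition.
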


From this theorem, it is seen that although Algorithm~\ref{alg:1} may not achieve the unachievable upper bound in Theorem~\ref{thm:2}, it may achieve the same order. For example, if $p$ is fixed and
$\underset{n\rightarrow\infty}{\lim\inf}\
\lambda_{\min}(\mathbf{R})>0$,
then under reasonable assumptions, the lower bound in Equation~\eqref{eq:36} will not converge to 0 as
$n\rightarrow\infty$. This means that $|(\X^*_{\mathrm{D}})\tp\X^*_{\mathrm{D}}|$ is of the
same order as the upper bound for $|\M(\bdelta)|$ in Theorem~\ref{thm:2}, even though the latter is typically not attainable. 

\section{Properties of parameter estimator}
\label{sec:asymptotic-analysis}
In this section, we investigate the theoretical properties of the D-optimality motivated IBOSS algorithm, and provide both finite sample assessment and asymptotic results. These results are provided to evaluate the performance of the proposed method and show more insights about the IBOSS approach. The application of the D-optimality motivated IBOSS algorithm does not depend on asymptotic properties of the approach. 

Since $\hat\bbeta^{\mathrm{D}}$ is unbiased for $\bbeta$, we focus on its variance. The next theorem gives bounds on variances of estimators of the intercept and slope parameters from the D-optimality motivated algorithm.

\begin{thm}\label{thm:4}
If $\lambda_{\min}(\mathbf{R})>0$, then, the following results hold for the estimator, $\hat\bbeta^{\mathrm{D}}$, obtained from Algorithm~\ref{alg:1}:
\begin{align}
 &\Var(\hat\beta^{\mathrm{D}}_0|\Z)\ge\frac{\sigma^2}{k},\label{eq:63}\\
 \frac{4\sigma^2}{k\lambda_{\max}(\mathbf{R})(z_{(n)j}-z_{(1)j})^2}
  \le&\Var(\hat\beta^{\mathrm{D}}_j|\Z)
  \le\frac{4p\sigma^2}
   {k\lambda_{\min}(\mathbf{R})(z_{(n-r+1)j}-z_{(r)j})^2},
     \quad j=1, ..., p.\label{eq:64}
\end{align}
\end{thm}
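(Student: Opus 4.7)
Since $\hat\bbeta^{\mathrm{D}}$ is the OLS estimator on the selected subdata, $\Var(\hat\bbeta^{\mathrm{D}}|\Z)=\sigma^2\{(\X^*_{\mathrm{D}})\tp\X^*_{\mathrm{D}}\}^{-1}$. I would write $\X^*_{\mathrm{D}}=(\mathbf{1},\Z^*_{\mathrm{D}})$ in block form with $\bar\z^*=(\Z^*_{\mathrm{D}})\tp\mathbf{1}/k$, so that the Schur complement of the leading $1\times 1$ block is $\mathbf{S}:=\sumk(\z^*_i-\bar\z^*)(\z^*_i-\bar\z^*)\tp$. The lower-right $p\times p$ block of $\{(\X^*_{\mathrm{D}})\tp\X^*_{\mathrm{D}}\}^{-1}$ is then $\mathbf{S}^{-1}$, so $\Var(\hat\beta_j^{\mathrm{D}}|\Z)=\sigma^2[\mathbf{S}^{-1}]_{jj}$ for $j=1,\ldots,p$, and everything reduces to bounding entries of $\mathbf{S}^{-1}$ and of the $(1,1)$ block of the inverse.

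\textbf{Intercept.} The $(1,1)$ entry of $(\X^*_{\mathrm{D}})\tp\X^*_{\mathrm{D}}$ equals $k$, and the generic inequality $[A^{-1}]_{11}=(A_{11}-A_{12}A_{22}^{-1}A_{21})^{-1}\ge 1/A_{11}$, valid for any positive definite $A$ because $A_{12}A_{22}^{-1}A_{21}$ is positive semidefinite, immediately yields $\Var(\hat\beta_0^{\mathrm{D}}|\Z)\ge\sigma^2/k$.

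\textbf{Slopes, easy direction.} Factorize $\mathbf{S}=\mathbf{D}\mathbf{R}\mathbf{D}$ with $\mathbf{D}=\diag(\sqrt{\mathbf{S}_{11}},\ldots,\sqrt{\mathbf{S}_{pp}})$, so $[\mathbf{S}^{-1}]_{jj}=[\mathbf{R}^{-1}]_{jj}/\mathbf{S}_{jj}$; the eigenvalue bracketing $1/\lambda_{\max}(\mathbf{R})\le[\mathbf{R}^{-1}]_{jj}\le 1/\lambda_{\min}(\mathbf{R})$ (both diagonal entries of a PD matrix lie between its extreme eigenvalues) then reduces matters to two-sided control of $\mathbf{S}_{jj}$. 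Popoviciu's variance inequality applied to the empirical distribution $z^*_{1j},\ldots,z^*_{kj}$, which is supported in $[z_{(1)j},z_{(n)j}]$, gives $\mathbf{S}_{jj}\le k(z_{(n)j}-z_{(1)j})^2/4$; combined with $[\mathbf{R}^{-1}]_{jj}\ge 1/\lambda_{\max}(\mathbf{R})$ this delivers the left-hand inequality of~\eqref{eq:64}.

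\textbf{Main obstacle.} The hard part is the lower bound $\mathbf{S}_{jj}\ge k(z_{(n-r+1)j}-z_{(r)j})^2/(4p)$, equivalently $\mathbf{S}_{jj}\ge r(z_{(n-r+1)j}-z_{(r)j})^2/2$. I would first prove a structural lemma: for every $j$, the subdata produced by Algorithm~\ref{alg:1} contains at least $r$ indices with $z^*_{ij}\le z_{(r)j}$ and at least $r$ with $z^*_{ij}\ge z_{(n-r+1)j}$. Indeed, fix $U_j:=\{i:z_{ij}\le z_{(r)j}\}$ (take any $r$-subset in case of ties); if $i\in U_j$ was not selected at steps $1,\ldots,j-1$, then at step $j$ it lies in the residual pool and, because every index outside $U_j$ has strictly larger $z_{\cdot j}$, it is necessarily among the $r$ smallest residual $z_{\cdot j}$-values and is picked. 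The upper-tail case is symmetric. Given these $r+r$ indices I split on $\bar z^*_j$: if $\bar z^*_j\in[z_{(r)j},z_{(n-r+1)j}]$, setting $a=\bar z^*_j-z_{(r)j}$ and $b=z_{(n-r+1)j}-\bar z^*_j$ the low and high indices jointly contribute at least $r(a^2+b^2)\ge r(a+b)^2/2=r(z_{(n-r+1)j}-z_{(r)j})^2/2$ to $\mathbf{S}_{jj}$; if $\bar z^*_j$ falls outside $[z_{(r)j},z_{(n-r+1)j}]$, the $r$ extreme points on the far side alone already produce the stronger bound $\mathbf{S}_{jj}\ge r(z_{(n-r+1)j}-z_{(r)j})^2$. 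The convexity step $a^2+b^2\ge(a+b)^2/2$ is what prevents a wasteful factor of $p$ here; a naive pairwise-differences argument loses this factor and gives only $p^2$ in the numerator. Substituting $r=k/(2p)$ and combining with $[\mathbf{R}^{-1}]_{jj}\le 1/\lambda_{\min}(\mathbf{R})$ then yields the right-hand inequality of~\eqref{eq:64}.
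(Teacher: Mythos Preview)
Your proof is correct and follows essentially the same route as the paper: both center and standardize $\Z^*_{\mathrm{D}}$ to obtain $\Var(\hat\beta_j^{\mathrm{D}}|\Z)=\sigma^2(\mathbf{R}^{-1})_{jj}/\mathbf{S}_{jj}$, bracket $(\mathbf{R}^{-1})_{jj}$ by the extreme eigenvalues of $\mathbf{R}$, bound $\mathbf{S}_{jj}$ above by Popoviciu, and bound it below via the $r$ lower and $r$ upper extreme points for column $j$. Your explicit structural lemma (every index in $U_j$ is necessarily selected no later than step $j$) is in fact more careful than the paper's treatment, which writes the subdata for column $j$ as the exact order statistics plus concomitants under an acknowledged abuse of notation; and your case split on the location of $\bar z_j^*$ plays exactly the same role as the paper's within/between decomposition $\sum(z_{(i)j}-\bar z_j^{**})^2=\sum(z_{(i)j}-\bar z_j^{*l})^2+\sum(z_{(i)j}-\bar z_j^{*u})^2+\tfrac{r}{2}(\bar z_j^{*u}-\bar z_j^{*l})^2$.
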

Theorem~\ref{thm:4} describes finite sample properties of the proposed estimator and does not require any quantity to go to $\infty$. It shows that the variance of the intercept estimator is bounded from below by a term proportional to the inverse subdata size. This is similar to the results for existing subsampling methods. However, for the slope estimator, the variance is bounded from above by a term that is proportional to $\frac{p}{k(z_{(n-r+1)j}-z_{(r)j})^2}$, which may converge to 0 as $n$ increases even when the subdata size $k$ is fixed. We present this asymptotic result in the following theorem. 

\begin{thm}\label{thm:4-2}
Assume that covariate distributions are in the domain of attraction of the
generalized extreme value distribution, and  $\underset{n\rightarrow\infty}{\lim\inf}\lambda_{\min}(\mathbf{R})>0$. For large enough $n$, the following results hold for the estimator, $\hat\bbeta^{\mathrm{D}}$, obtained from Algorithm~\ref{alg:1}: 
\begin{align}
 &\Var(\hat\beta^{\mathrm{D}}_j|\Z)=O_P\left\{
   \frac{p}{k(z_{(n-r+1)j}-z_{(r)j})^2}\right\},
   \quad j=1, ..., p.\label{eq:9}
\end{align}
Furthermore, 
\begin{align}
 &\Var(\hat\beta^{\mathrm{D}}_j|\Z)\asymp_P 
   \frac{p}{k(z_{(n)j}-z_{(1)j})^2},
   \quad j=1, ..., p,\label{eq:18}
\end{align}
if one of the following conditions holds: 1)
$r$ is fixed; 2) the support of $F_j$ is bounded,  $r\rightarrow\infty$, and $r/n\rightarrow0$, where $F_j$ is the marginal distribution function of the $j$th component of $\z$; 3) the upper endpoint for the support of $F_j$ is $\infty$ and the lower endpoint for the support of $F_j$ is finite, and $r\rightarrow\infty$ slow enough such that
\begin{equation}\label{eq:8}
  \frac{r}{n[1-F_j\{(1-\epsilon)F_j^{-1}(1-n^{-1})\}]}\rightarrow0,
\end{equation}
for all $\epsilon>0$; 4) the upper endpoint for the support of $F_j$ is finite and the lower endpoint for the support of $F_j$ is $-\infty$, and $r\rightarrow\infty$ slow enough such that
\begin{equation}\label{eq:33}
  \frac{r}{nF_j\{(1-\epsilon)F_j^{-1}(n^{-1})\}}\rightarrow0,
\end{equation}
for all $\epsilon>0$; 5) the upper endpoint and the lower endpoint for the support of $F_j$ are $\infty$ and $-\infty$, respectively, and \eqref{eq:8} and \eqref{eq:33} hold. 
\end{thm}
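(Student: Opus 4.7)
The plan is to leverage Theorem~\ref{thm:4} and then invoke extreme-value theory to compare the intermediate order statistics $z_{(r)j},z_{(n-r+1)j}$ with the extremes $z_{(1)j},z_{(n)j}$. First, \eqref{eq:9} follows almost immediately from the upper bound in \eqref{eq:64}: under the assumption $\liminf_{n\to\infty}\lambda_{\min}(\mathbf{R})>0$, the factor $1/\lambda_{\min}(\mathbf{R})$ is $O_P(1)$, so the upper bound in \eqref{eq:64} yields the claimed $O_P$ rate.

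For \eqref{eq:18} my strategy is to establish that, under each of the five listed regimes,
\begin{equation*}
  z_{(n-r+1)j}-z_{(r)j}\asymp_P z_{(n)j}-z_{(1)j}.
\end{equation*}
Combined with the two-sided bound in \eqref{eq:64}, the fact that $\lambda_{\max}(\mathbf{R})\le\tr(\mathbf{R})=p$ (so that $\lambda_{\max}$ is bounded for fixed $p$), and $\liminf\lambda_{\min}(\mathbf{R})>0$, this comparison pins $\Var(\hat\beta^{\mathrm{D}}_j|\Z)$ to the order $p/\{k(z_{(n)j}-z_{(1)j})^2\}$, where any ratios between $p$ and $1/p$ are absorbed into the $\asymp_P$ symbol since $p$ is treated as a constant.

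The case analysis I would carry out is as follows. In case~1 ($r$ fixed), the domain-of-attraction hypothesis gives that the top (respectively bottom) $r$ order statistics, appropriately normalized, converge jointly to a nondegenerate limit, so $z_{(n-r+1)j}$ and $z_{(n)j}$ share the same location/scale asymptotics, and analogously on the lower tail. In case~2 (bounded support with $r\to\infty$, $r/n\to 0$), both $z_{(n)j}$ and $z_{(n-r+1)j}$ converge in probability to the finite upper endpoint and $z_{(1)j},z_{(r)j}$ to the finite lower endpoint, so both gaps are bounded and bounded away from zero. In cases~3--5, where at least one endpoint is infinite, condition \eqref{eq:8} (respectively \eqref{eq:33}) says that the expected number of observations exceeding $(1-\epsilon)F_j^{-1}(1-n^{-1})$ is much larger than $r$; a Chebyshev bound on this binomial exceedance count then yields $z_{(n-r+1)j}\ge(1-\epsilon)F_j^{-1}(1-n^{-1})$ with probability tending to $1$, while the domain-of-attraction hypothesis gives $z_{(n)j}=O_P\{F_j^{-1}(1-n^{-1})\}$. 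Together these give $z_{(n-r+1)j}\asymp_P z_{(n)j}$, and a symmetric argument controls the lower tail.

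The main obstacle I anticipate is the unified treatment of the intermediate order statistics across the five tail regimes. The key mechanism is the relation between the centering quantile $F_j^{-1}(1-r/n)$ natural for $z_{(n-r+1)j}$ and the extreme quantile $F_j^{-1}(1-n^{-1})$ natural for $z_{(n)j}$; conditions \eqref{eq:8} and \eqref{eq:33} are precisely calibrated to force these quantiles to be comparable up to a factor $1+o(1)$. Converting this deterministic quantile statement into the required probabilistic statement about order statistics is the technical core of the argument and, in the unbounded-support regimes, rests on the exceedance-count estimate sketched above.
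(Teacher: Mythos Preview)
Your proposal is correct and matches the paper's approach: deduce \eqref{eq:9} directly from the upper bound in \eqref{eq:64}, and for \eqref{eq:18} combine the two-sided bound \eqref{eq:64} with a case-by-case verification that $z_{(n-r+1)j}-z_{(r)j}\asymp_P z_{(n)j}-z_{(1)j}$. The only cosmetic difference is that for cases~3--5 the paper cites \cite{hall1979relative} to obtain $z_{(n-r+1)j}/b_{n,j}\to1$ in probability with $b_{n,j}=F_j^{-1}(1-n^{-1})$, whereas you sketch the underlying binomial exceedance-count argument directly; these are the same mechanism.
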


Equation~\eqref{eq:9} gives a general result on the variance of a slope estimator. It holds for any values of $n$, $r$ and $p$, so that it can also be used to obtain asymptotic results when one or more of $n$, $r$ and $p$ go to infinity. The expression shows that if ${p}/{(z_{(n-r+1)j}-z_{(r)j})^2}=\op$, then the convergence of the variance would be faster than $1/k$, the typical convergence rate for a subsampling method \citep{PingMa2014-JMLR, WangZhuMa2017}. Note that the results are derived from the upper bound in \eqref{eq:64}, and thus the real convergence of the variance can be faster than $\frac{p}{k(z_{(n-r+1)j}-z_{(r)j})^2}$.

For the condition in \eqref{eq:8}, it can be satisfied by many commonly seen distributions, such as exponential distribution, double exponential distribution, lognormal distribution, normal distribution, and gamma distribution \citep{hall1979relative}. For different distributions, the required rate at which $r\rightarrow\infty$ is different. For example, if $F_j$ is a normal distribution function, then \eqref{eq:8} holds if and only if $\log r/\log\log n\rightarrow0$; if $F_j$ is an exponential distribution function, then \eqref{eq:8} holds if and only if $\log r/\log n\rightarrow0$. The condition in \eqref{eq:33} is the same as that in~\eqref{eq:8} if one take $\z=-\z$. 

For the result in \eqref{eq:18}, it can be shown from the proof that 
\begin{align*}
 \Var(\hat\beta^{\mathrm{D}}_j|\Z)\asymp_P 
  \frac{p}{k\big\{F_j^{-1}(1-n^{-1})-F_j^{-1}(n^{-1})\big\}^2}, \quad
  j=1, ..., p.
\end{align*}

What we find more interesting is the fact that, when $k$ is fixed, from Theorems 2.8.1 and 2.8.2 in \cite{galambos1987asymptotic}, $z_{(n-r+1)j}-z_{(r)j}$ goes to infinity with the same rate as that of $z_{(n)j}-z_{(1)j}$. Thus the order of the variance of a slope estimator is the inverse of the squared full data sample range for the corresponding covariate. If the sample range goes to $\infty$ as $n\rightarrow\infty$, then the variance converges to 0 even when the subdata size $k$ is fixed. This suggests that subdata may preserve information at a scale related to the full data size. We will return to this for specific cases with more details. In the remainder of this section, we focus on the case that both $p$ and $k$ are fixed.

That the variance $\Var(\hat{\beta}^{\mathrm{D}}_0|\Z)$ does not go to 0
for a fixed subdata size $k$ is not a concern if inference for the slope parameters is of primary interest, as is often the case.  However, if the focus is on building a predictive model, the intercept
needs to be estimated more precisely. This can be done by using the full data
means, $\bar y$ and $\bar\z$, say. After obtaining the slope estimator $\hat{\bbeta}^{\mathrm{D}}_1$, compute  the following adjusted estimator of the intercept
\begin{equation}\label{eq:66}
  \hat\beta_0^{Da}=\bar y-\bar\z\tp\hat{\bbeta}^{\mathrm{D}}_1.
\end{equation}
The estimator 
$\hat\beta_0^{Da}$ has a convergence rate similar to that of the slope
parameter estimators, because 
$\hat\beta_0^{Da}-\beta_0 =(\hat\beta_0^{\mathrm{full}}-\beta_0)
+\bar\z\tp(\hat\bbeta_1^{\mathrm{full}}-\bbeta_1)
-\bar\z\tp(\hat{\bbeta}^{\mathrm{D}}_1-\bbeta_1)$
and the last term is the dominating term if
$\Exp(\z)\neq\bz$. The rate may be faster than
that of the slope parameter estimators if $\Exp(\z)=\bz$. The additional computing time for this approach is $O(np)$, but the estimation
efficiency for $\beta_0$ will be substantially improved. We will demonstrate this numerically in Section~\ref{sec:numer-exper}.

Whereas Theorem~\ref{thm:4-2} provides a general result for the variance of individual parameter estimators, more can be said for special cases. The next theorem studies the structure of the covariance matrix for estimators based on Algorithm~\ref{alg:1} under various assumptions. 
\begin{thm}\label{thm:5}
  Let $\bmu=(\mu_1, ..., \mu_p)\tp$ and let $\bSigma=\bsigma\brho\bsigma$ be a full rank covariance matrix, 
  where $\bsigma=\diag(\sigma_1, ..., \sigma_p)$ is a diagonal matrix of standard deviations and $\brho$ is a correlation matrix. Assume that $\z_i$'s, $i=1, ..., n$, are i.i.d. with a distribution specified below. The following results hold for $\hat\bbeta^{\mathrm{D}}$, the estimator from Algorithm~\ref{alg:1}.\\
  (i) For multivariate normal covariates, i.e., $\z_i\sim N(\bmu,\bSigma)$,
  \begin{equation}\label{eq:17}
    \Var(\A_n\hat\bbeta^{\mathrm{D}}|\Z)=\frac{\sigma^2}{2k}
    \begin{bmatrix}
      2 & \bz \\
      \bz & p(\bsigma\brho^2\bsigma)^{-1}
    \end{bmatrix} +O_P\left(\frac{1}{\sqrt{\log n}}\right),
  \end{equation}
  where $\A_n=\diag\big(1, \sqrt{\log n}, ..., \sqrt{\log n}\big)$.\\
  (ii) For multivariate lognormal covariates, i.e.,
  $\z_i\sim\mathrm{LN}(\bmu,\bSigma)$,
  \begin{align}\label{eq:14}
    \Var(\A_n\hat\bbeta^{\mathrm{D}}|\Z)=\frac{2\sigma^2}{k}
    \begin{bmatrix}
      1 & -\mathbf{u}\tp \\
      -\mathbf{u} & p\bLambda+\mathbf{u}\mathbf{u}\tp,
    \end{bmatrix}+\op
  \end{align}
  where $\A_n=\diag\Big\{1, \exp\big(\sigma_1\sqrt{2\log n}\big), ..., \exp\big(\sigma_p\sqrt{2\log n}\big)\Big\}$, $\mathbf{u}=(e^{-\mu_1}, ..., e^{-\mu_p})\tp$ and $\bLambda=\diag(e^{-2\mu_1}, ..., e^{-2\mu_p})$.
\end{thm}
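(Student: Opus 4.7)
The approach is to reduce both expansions to a block-inversion formula for $\{(\X^*_{\mathrm{D}})\tp\X^*_{\mathrm{D}}\}^{-1}$ and then identify the blocks using extreme value asymptotics under the specified distributions.

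Write $\X^*_{\mathrm{D}}=(\mathbf{1},\Z^*_{\mathrm{D}})$, $\bar\z^*=k^{-1}(\Z^*_{\mathrm{D}})\tp\mathbf{1}$, and $\mathbf{S}^*=(\Z^*_{\mathrm{D}})\tp\Z^*_{\mathrm{D}}-k\bar\z^*(\bar\z^*)\tp$. Block inversion gives
\begin{equation*}
   \sigma^{-2}\Var(\hat\bbeta^{\mathrm{D}}|\Z)
   =\begin{pmatrix}
       k^{-1}+(\bar\z^*)\tp(\mathbf{S}^*)^{-1}\bar\z^*
         & -(\bar\z^*)\tp(\mathbf{S}^*)^{-1}\\
       -(\mathbf{S}^*)^{-1}\bar\z^*
         & (\mathbf{S}^*)^{-1}
     \end{pmatrix}.
\end{equation*}
Decomposing $\A_n=\diag(1,\mathbf{D}_n)$ and using $\Var(\A_n\hat\bbeta^{\mathrm{D}}|\Z)=\A_n\Var(\hat\bbeta^{\mathrm{D}}|\Z)\A_n$, the task reduces to identifying the limits of $\mathbf{D}_n^{-1}\bar\z^*$ and $\mathbf{D}_n^{-1}\mathbf{S}^*\mathbf{D}_n^{-1}$, since every block of the rescaled variance then follows by direct multiplication.

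For case~(i), extreme value theory for Gaussian samples gives $z_{(n)j}-\mu_j=\sigma_j\sqrt{2\log n}+O_P(1/\sqrt{\log n})$, with the analogue for $\mu_j-z_{(1)j}$. Multivariate normality further implies that, conditional on $z_{im}$ being an extreme of order $\mu_m\pm\sigma_m\sqrt{2\log n}$, the $j$-th coordinate of the same point satisfies $z_{ij}-\mu_j=\pm\rho_{jm}\sigma_j\sqrt{2\log n}+O_P(1)$. Summing the centered contributions across all $2pr$ selected points, the $\pm$ balance in the selection yields $\bar\z^*=\bmu+O_P(1/\sqrt{\log n})$, and accumulation across the $p$ selection coordinates gives
\begin{equation*}
   (\log n)^{-1}[\mathbf{S}^*]_{jl}=4r[\bsigma\brho^2\bsigma]_{jl}+O_P(1/\sqrt{\log n}),
\end{equation*}
since the $m$-selections contribute $2r\cdot 2\log n\cdot\rho_{jm}\rho_{lm}\sigma_j\sigma_l$ to the $(j,l)$ entry. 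With $\mathbf{D}_n=\sqrt{\log n}\,\mathbf{I}_p$ and $r=k/(2p)$, inversion gives $\mathbf{D}_n(\mathbf{S}^*)^{-1}\mathbf{D}_n=(p/(2k))(\bsigma\brho^2\bsigma)^{-1}+O_P(1/\sqrt{\log n})$; since $\mathbf{D}_n^{-1}\bar\z^*=\bz+O_P(1/\sqrt{\log n})$, the off-diagonal block and the $(\bar\z^*)\tp(\mathbf{S}^*)^{-1}\bar\z^*$ correction to $k^{-1}$ are both $O_P(1/\sqrt{\log n})$, and assembly yields~\eqref{eq:17}.

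Case~(ii) uses the same route but with multiplicative scaling. Writing $\z=\exp(\x)$ with $\x\sim N(\bmu,\bSigma)$, the top $r$ values of $z_j$ are of order $\exp(\mu_j+\sigma_j\sqrt{2\log n})$, which dominates both the lower extremes ($\to 0$) and the values of $z_j$ at points picked as extremes of some other coordinate $m$ (those behave like $\exp(\mu_j+\rho_{jm}\sigma_j\sqrt{2\log n})$, negligible against $\exp(\sigma_j\sqrt{2\log n})$ whenever $|\rho_{jm}|<1$). Hence with $\mathbf{D}_n=\diag(\exp(\sigma_1\sqrt{2\log n}),\ldots,\exp(\sigma_p\sqrt{2\log n}))$, only the top $r$ of $z_j$ contribute asymptotically to the $j$-th entry of $\mathbf{D}_n^{-1}\bar\z^*$, yielding the deterministic limit $e^{\mu_j}/(2p)$. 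The same accounting for $\mathbf{S}^*$ gives $\mathbf{D}_n^{-1}\mathbf{S}^*\mathbf{D}_n^{-1}\to_P r\,\diag(e^{\mu_j})\mathbf{Q}\diag(e^{\mu_j})$, where $\mathbf{Q}=\mathbf{I}_p-(2p)^{-1}\mathbf{1}\mathbf{1}\tp$. Sherman--Morrison gives $\mathbf{Q}^{-1}=\mathbf{I}_p+p^{-1}\mathbf{1}\mathbf{1}\tp$, and substituting into the block formula with $r=k/(2p)$ produces~\eqref{eq:14} after direct algebra, using that $\diag(e^{-\mu_j})(\mathbf{I}_p+p^{-1}\mathbf{1}\mathbf{1}\tp)\diag(e^{-\mu_j})=\bLambda+p^{-1}\mathbf{u}\mathbf{u}\tp$.

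The main obstacle is making the extreme value approximations rigorous at the required rate. For case~(i), the $O_P(1)$ conditional Gaussian fluctuations must be controlled uniformly across the $k$ selected points so that their aggregate contribution to $\mathbf{S}^*/\log n$ is $O_P(1/\sqrt{\log n})$ rather than anything larger. For case~(ii), the delicate steps are to confirm that cross-terms of the form $\exp\{(\rho_{jm}+\rho_{lm}-1)\sigma\sqrt{2\log n}\}$ truly vanish (so that the limit matrix $\mathbf{Q}$ is exactly as claimed), and to show that the duplicate-exclusion step of Algorithm~\ref{alg:1} has asymptotically vanishing effect because the extreme-value events on distinct coordinates are asymptotically disjoint whenever $|\rho_{jm}|<1$ for all $j\neq m$.
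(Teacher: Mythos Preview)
Your proposal is correct and follows essentially the same approach as the paper: block inversion of $(\X^*_{\mathrm{D}})\tp\X^*_{\mathrm{D}}$ combined with Galambos-type extreme value expansions for the order statistics and their concomitants, then identification of the leading matrix and inversion. The only cosmetic differences are that the paper works with the uncentered blocks $\mathbf{v}=(\Z^*_{\mathrm{D}})\tp\mathbf{1}$ and $\bOmega=(\Z^*_{\mathrm{D}})\tp\Z^*_{\mathrm{D}}$ rather than your centered $\bar\z^*$ and $\mathbf{S}^*$, and for case~(ii) it inverts the limiting $2\times2$ block matrix directly instead of invoking Sherman--Morrison on $\mathbf{Q}$; both routes reach the same limits via the same asymptotics.
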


For the distributions in Theorem~\ref{thm:5},
$\Var(\hat{\beta}^{\mathrm{D}}_0|\Z)$ is proportional to $1/k$ and
never converges to 0 with a fixed $k$. Based on
Theorem~\ref{thm:5}, $\Var(\hat{\bbeta}^{\mathrm{D}}_1|\Z)$ converges
to 0 at different rates for different distributions. When $\z$ has a
normal distribution, the convergence rate of components of
$\Var(\hat{\bbeta}^{\mathrm{D}}_1|\Z)$ is $1/\log n$. When $\z$ has a lognormal distribution,
the component $\Var_{j_1j_2}(\hat{\bbeta}^{\mathrm{D}}_1|\Z)$ has a
convergence rate
$\exp\big\{-(\sigma_{j_1}+\sigma_{j_2})\sqrt{2\log n}\big\}$,
$j_1,j_2=1,...,p$. In comparison, for most popular subsampling-based
methods, from the results in Section~\ref{sec:analys-exist-subs}, for
the normal and lognormal distributions, variances of the slope parameter estimators never
converge to 0 because they are bounded from below by terms that are proportional to $1/k$.

For a subsampling-based method, if the covariate distribution is sufficiently  heavy-tailed, then some components of the lower bound in Theorem \ref{thm:1} can go to 0. However, even then the convergence rate is much slower than that for the IBOSS approach, which may then produce an
estimator with a convergence rate close to that of the full data
estimator. For example, Table~\ref{tab:1} summarizes the orders of
variances for parameter estimators when the only covariate $z$ in a simple linear regression model has a $t$ distribution with degrees of freedom $\nu$. Three approaches are compared: the D-optimality motivated IBOSS approach (D-OPT), the UNI approach and the full data approach (FULL).

For $\beta_0$, neither subdata approach produces a variance that goes
to 0. For $\beta_1$, the D-OPT IBOSS approach results in a variance
that goes to 0 at a rate of $n^{-2/\nu}$.  When $\nu\le2$, the
variance of the estimator based on the full data goes to 0 at a rate
that is slower than $n^{-(2/\nu+\alpha)}$ for any $\alpha>0$, so
% , no matter how smaller $\alpha$ is.
that the D-OPT IBOSS approach reaches a rate that is very close to that of the full data.
% The variance of the estimator from the IBOSS approach is of order
% $n^{-2/\nu}$, so the rates of convergence are very close
For the UNI approach,
the lower bound of the variance goes to 0 at a much slower rate. Note
that the convergence to 0 does not contradict the conclusion in
(\ref{eq:29}), which assumes that the $\x_i$ are i.i.d. with
finite second moment.

\begin{table}%[H]
  \begin{center}\def\arraystretch{1.1}
    \caption{Orders of variances and orders of lower bounds of
      variances when the covariate has a $t_\nu$ distribution.  The
      orders are in probability.}
    \label{tab:1}
    \begin{tabular}{lccccc}\hline
      Methods& \multicolumn{4}{c}{Covariates are $t_\nu$}\\
      \cline{2-5}
             & $\beta_0$  &  & \multicolumn{2}{c}{$\beta_1$} \\
      \cline{4-5}
             &   &  & $\nu>2$ & $\nu\le2$ \\
      \cline{2-5}
      D-OPT & $1/k$ &  & $1/(kn^{2/\nu})$ & $1/(kn^{2/\nu})$ \\
      UNI   & $1/k$ &  & $1/k$           &
                                           slower than $1/\{kn^{(2/\nu-1+\alpha)}\}$ for any $\alpha>0$ \\
      FULL  & $1/n$ &  & $1/n$           & % $1/\{n^{(1+\alpha_{o(1)})}\}$ \\
      slower than $1/\{kn^{(2/\nu+\alpha)}\}$ for any $\alpha>0$ \\
      \hline
    \end{tabular}
  \end{center}
\end{table}

\section{Numerical experiments}\label{sec:numer-exper}
Using simulated and real data, we will now evaluate the performance of the IBOSS method. 
\subsection{Simulation studies}\label{sec:simulation}
Data are generated from the linear model~\eqref{eq:1} with the true value of $\bbeta$ being a 51 dimensional vector of unity and $\sigma^2=9$. An intercept is included so $p=50$. Let $\bSigma$ be a covariance matrix with $\Sigma_{ij}=0.5^{I(i\neq j)}$, for $i, j=1, ..., 50$,  where $I()$ is the indicator function. Covariates $\z_i$'s are generated according to the following scenarios.\\[-10mm]
% w. , multivariate $t_\nu$ distributions with degrees of freedom
% $\nu=2$, and a multivariate log-normal distribution.
\begin{enumerate}[{Case} 1., leftmargin=*]
  \addtolength{\itemsep}{-0.5\baselineskip}
\item $\z_i$'s have a multivariate normal distribution, i.e., 
  $\z_i\sim N(\mathbf{0}, \bSigma)$.
\item $\z_i$'s have a multivariate lognormal distribution, i.e., 
  $\z_i\sim \mathrm{LN}(\mathbf{0}, \bSigma)$.
\item $\z_i$'s have a multivariate $t$ distribution with degrees of freedom $\nu=2$, i.e., $\z_i\sim t_2(\mathbf{0}, \bSigma)$.
\item $\z_i$'s have a mixture distribution of four different distributions,
  $N(\mathbf{1}, \bSigma)$, $t_2(\mathbf{1}, \bSigma)$,
  $t_3(\mathbf{1}, \bSigma)$, U$[\mathbf{0},\mathbf{2}]$ and
  $\mathrm{LN}(\mathbf{0}, \bSigma)$ with equal proportions, where
  U$[\mathbf{0},\mathbf{2}]$ means its components are independent
  uniform distributions between 0 and 2.
\item $\z_i$'s consist of multivariate normal random variables with
  interactions and quadratic terms. To be specific, denote
  $\mathbf{v}=(v_1, ..., v_{20})\tp\sim N(\mathbf{0},
  \bSigma_{20\times20})$,
  where $\bSigma_{20\times20}$ is the 20 by 20 upper diagonal
  sub-matrix of $\bSigma$. Let
  $\z=(\mathbf{v}\tp, v_1\mathbf{v}\tp, v_2v_{11},v_2v_{12}, ...,
  v_2v_{20})\tp$
  and $\z_i$'s are generated from the distribution of $\z$.
\end{enumerate}

The simulation is repeated $S=1000$ times and empirical mean squared
errors (MSE) are calculated using
$\text{MSE}_{\beta_0}=S^{-1}\sum_{s=1}^S(\hat{\beta}_0^{(s)}-\beta_0)^2$
and
$\text{MSE}_{\bbeta_1}=S^{-1}\sum_{s=1}^S\|\hat{\bbeta}_1^{(s)}-\bbeta_1\|^2$
for intercept and slope estimators from different approaches, where
$\hat{\beta}_0^{(s)}$, and $\hat{\bbeta}_1^{(s)}$ are estimates in the
$s$th repetition. We compare four different approaches:
D-OPT, the D-optimality motivated IBOSS algorithm described in
Algorithm~\ref{alg:1} (black solid line
\drawline{{\color{black}{\large$\bf\circ$}}}{black, line width=
  1.6pt}), 
UNI (green short dotted line
\drawline{{\color{green}$\boldsymbol{\large +}$}}{{green,line width=
    1.6pt,dash pattern=on 2pt off 2pt}}), LEV (blue dashed line
\drawline{{\color{blue}$\boldsymbol{\large\times}$}}{{blue,line width=
    1.6pt, dash pattern=on 1pt off 3pt on 4pt off 3pt}}), and FULL,
the full data approach (aqua long dashed line
\drawline{{\color{aqua}$\large\boldsymbol{\diamond}$}}{{aqua,line
    width= 1.6pt,dash pattern=on 7pt off 3pt}}). To get the best
performance of the LEV method in parameter estimation, exact
statistical leverage scores are used to calculate the subsampling
probabilities. Note that for linear regression, the divide-and-conquer method produces results that are identical to these from the FULL \citep{LinXie2011,schifano2016online}, while the computational cost is not lower than the FULL. Thus the comparisons between the
  IBOSS approach and the full data approach reflect the relative
  performance of the IBOSS and the divide-and-conquer method in the
  context of linear regression.

For full data sizes $n=5\times10^3$, $10^4$, $10^5$ and $10^6$ and fixed subdata size $k=10^3$, Figures~\ref{fig:2} and \ref{fig:3} present plots of the log$_{10}$ of the MSEs against log$_{10}(n)$. Figure~\ref{fig:2} gives the log$_{10}$ of the MSEs for estimating the slope parameter $\bbeta_1$ using different methods. As seen in the plots, the D-OPT IBOSS method uniformly dominates the
subsampling-based methods UNI and LEV, and its advantage is more
significant if the tail of the covariate distribution is heavier.
More importantly, the MSEs from the D-OPT IBOSS method for estimating
$\bbeta_1$ decrease as the full data sample size $n$ increases, even
though the subdata size is fixed at $k=10^3$. For the
  normal covariate distribution in Figure~\ref{fig:2}(a), the
  decrease in the MSE for the D-OPT IBOSS estimator is not as evident because, as shown in Theorem~\ref{thm:5}, the convergence rate of variances for this case is as slow as $p/k/\log n$. To show the
  relative performance of the D-OPT IBOSS approach compared to that
  of the subsampling-based approaches, in the right panel of
  Figure~\ref{fig:2}(a), we scale all the MSEs so that the MSEs for
  the UNI method are one. From this figure, the MSEs for the D-OPT
  IBOSS approach are about 80\% of those for the subsampling-based
  approaches.

Unlike the IBOSS method, the random subsampling-based methods yield
MSEs that show very little change with increasing $n$ except for the $t_2$ and mixture covariate distributions. This agrees with the conclusion in Theorem~\ref{thm:1} that the covariance matrix for a random subsampling-based estimator is bounded from below by a matrix that depends only on the subdata size $k$ if the fourth moment of the covariate distribution is finite. For the $t_2$ and mixture covariate distributions, the second moments of the covariate distributions are not finite and we see that the MSEs decrease as $n$ becomes larger. However, the convergence rates are much slower than for the IBOSS method.

For the full data approach, where all data points are used, the MSEs
decrease as the size $n$ increases. It is noteworthy that
the performance of the D-OPT IBOSS method can be comparable to that of
using the full data for estimating $\bbeta_1$. For example, as shown
in Figure~\ref{fig:2} (d) for the mixture of distributions, an analysis using the D-OPT IBOSS method with subdata size $k=10^3$ from full data of size
$n = 10^6$ outperforms a full data analysis with data of size
$n = 10^5$; the MSE from the full data analysis is 2.4 times as large
as the MSE from using subdata of size $k=10^3$.

Figure~\ref{fig:3} gives results for estimating the intercept
parameter $\beta_0$. In general, the D-OPT
IBOSS method is superior to other subdata based methods, but its MSE
does not decrease as the full data size increases. This agrees with the
result in Theorem~\ref{thm:4}. We also calculate the MSE of the adjusted estimator in \eqref{eq:66}, $\hat\beta_0^{Da}=\bar y-\bar\z\tp\hat{\bbeta}^{\mathrm{D}}_1$, which is labeled with D-OPTa (red dashed line
\drawline{{\color{red}{\large$\bf\circ$}}}{{red,line
    width= 1.6pt,dash pattern=on 5pt off 5pt}}) in Figure~\ref{fig:3}. It is seen that the relative performances of $\hat{\beta}_0^{Da}$ for Cases 2, 4 and 5 are similar to those of the slope estimator, which agrees with the asymptotic properties discussed below \eqref{eq:66} in Section~\ref{sec:asymptotic-analysis}. For Cases 1 and 3, the results are very interesting in that $\hat{\beta}_0^{Da}$ performs as good as the full data approach, which means that the convergence rate of $\hat{\beta}_0^{Da}$ is much faster than that of the slope estimator. This seems surprising, but it agrees with the asymptotic properties discussed below \eqref{eq:66} in Section~\ref{sec:asymptotic-analysis}. For these two cases,  $\Exp(\z)=\bz$ and this is the reason why the convergence rate can be faster than that of the slope parameter estimators.

To see the effect of the subdata size $k$ for estimating the slope parameter $\bbeta_1$, Figure~\ref{fig:4}
presents plots of the log$_{10}$ of the MSEs against the subdata size $k$, with choices
$k=200$, $400$, $500$, $10^3$, $2\times10^3$, $3\times10^3$ and
$5\times10^3$, for fixed full data size $n=10^6$. The MSE for
using the full data is a constant with respect to $k$ and is plotted
for comparison. Clearly, all subdata-based methods improve as the subdata
size $k$ increases, with the D-OPT IBOSS method again being the best performer. For example, when $n=10^6$ for the mixture
covariate distribution, the analysis based on the D-OPT IBOSS method
with $k=200$ is about 10 times as accurate as that of the LEV method
with $k=5\times10^3$ as measured by the MSE value.
      
\begin{figure}%[H]
  \centering
  \begin{subfigure}{\textwidth}
    \includegraphics[width=0.49\textwidth]{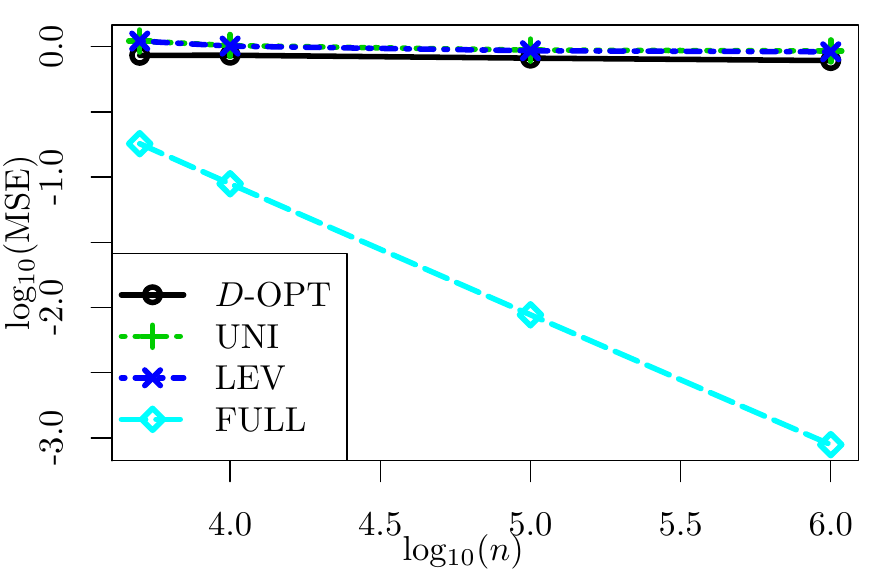}
    \includegraphics[width=0.49\textwidth]{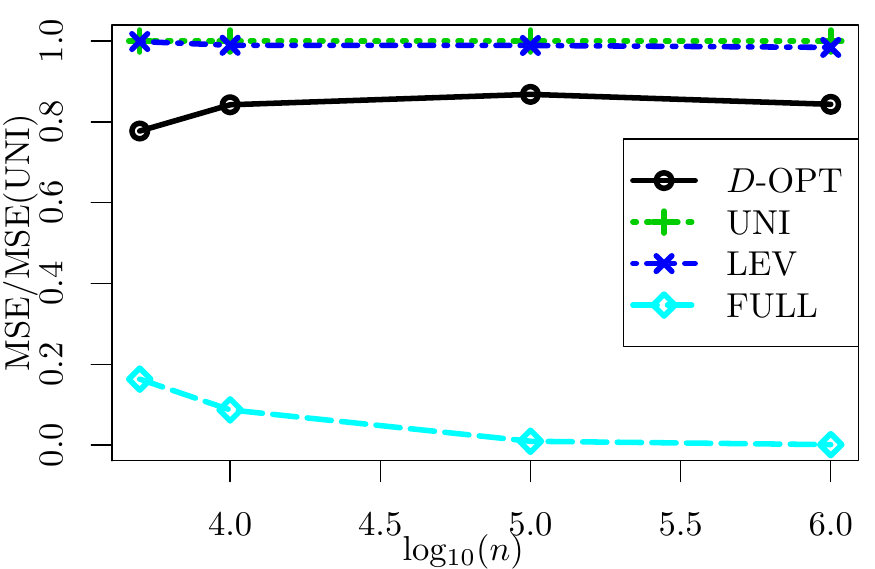}\\[-10mm]
    \caption{Case 1: $\z_i$'s are normal.}
  \end{subfigure}
  \\[3mm]
  \begin{subfigure}{0.49\textwidth}
    \includegraphics[width=\textwidth]{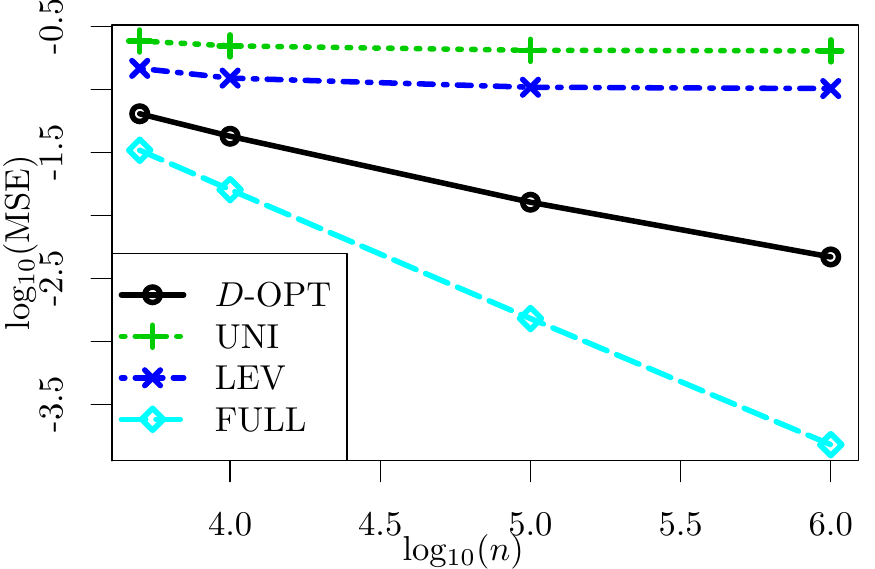}\\[-10mm]
    \caption{Case 2: $\z_i$'s are lognormal.}
  \end{subfigure}
  \begin{subfigure}{0.49\textwidth}
    \includegraphics[width=\textwidth]{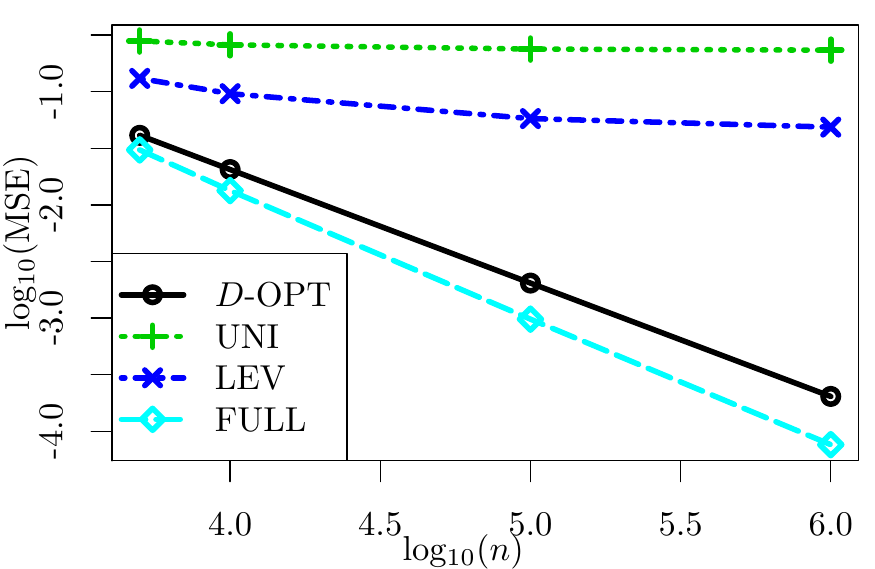}\\[-10mm]
    \caption{Case 3: $\z_i$'s are $t_2$.}
  \end{subfigure}\\[3mm]
  \begin{subfigure}{0.49\textwidth}
    \includegraphics[width=\textwidth]{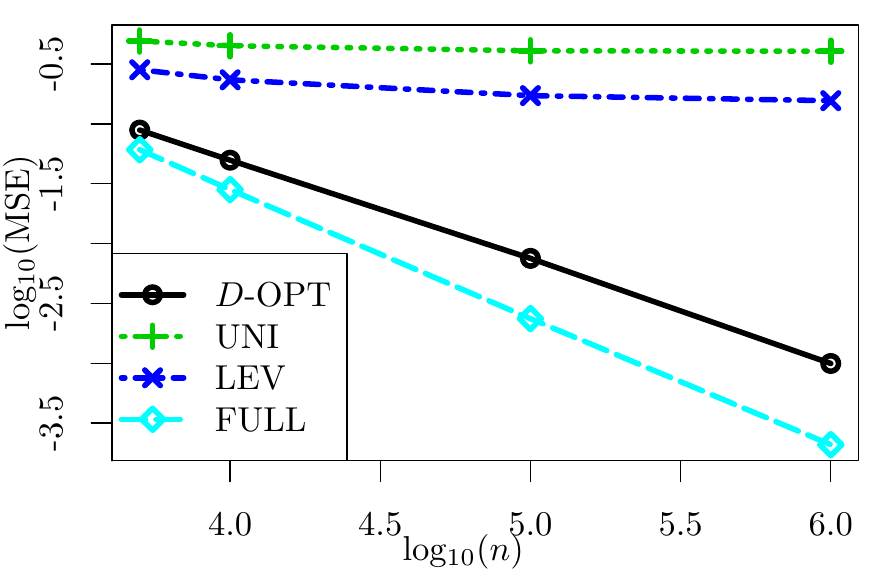}\\[-10mm]
    \caption{Case 4: $\z_i$'s are a mixture.}
  \end{subfigure}
  \begin{subfigure}{0.49\textwidth}
    \includegraphics[width=\textwidth]{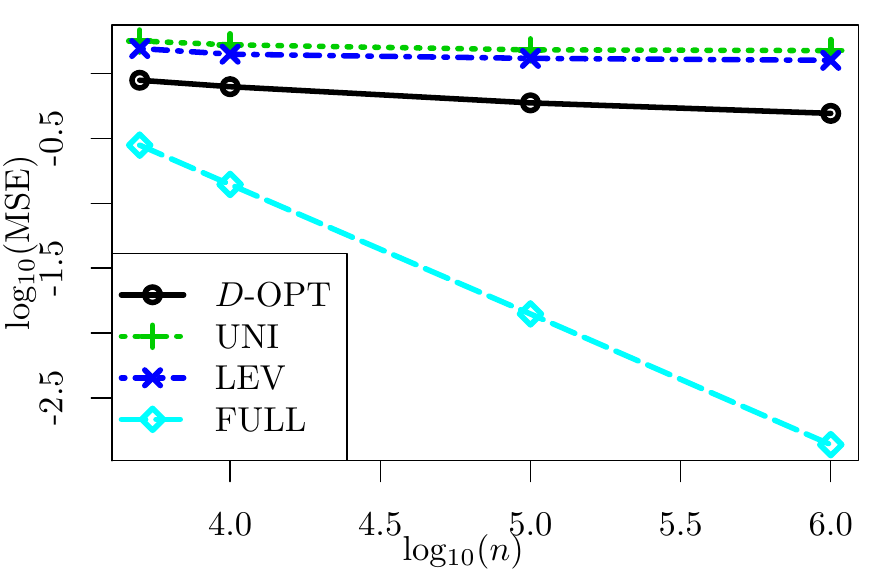}\\[-10mm]
    \caption{Case 5: $\z_i$'s include interaction terms.}
  \end{subfigure}
  \caption{MSEs for estimating the slope parameter for five
    different distributions for the covariates $\z_i$. The subdata
    size $k$ is fixed at $k=1000$ and the full data size $n$
    changes. Logarithm with base 10 is taken of $n$ and MSEs for
    better presentation of the figures except for the right panel of
    (a) in which MSEs are scaled so that MSEs for the UNI method are
    1.}
  \label{fig:2}
\end{figure}

\begin{figure}%[H]
  \centering
  \begin{subfigure}{\textwidth}
    \includegraphics[width=0.49\textwidth]{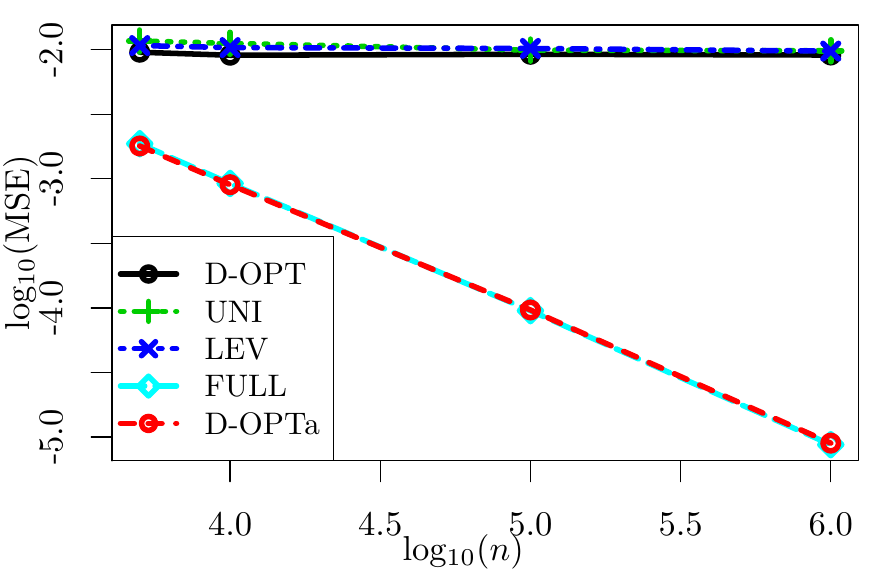}
    \includegraphics[width=0.49\textwidth]{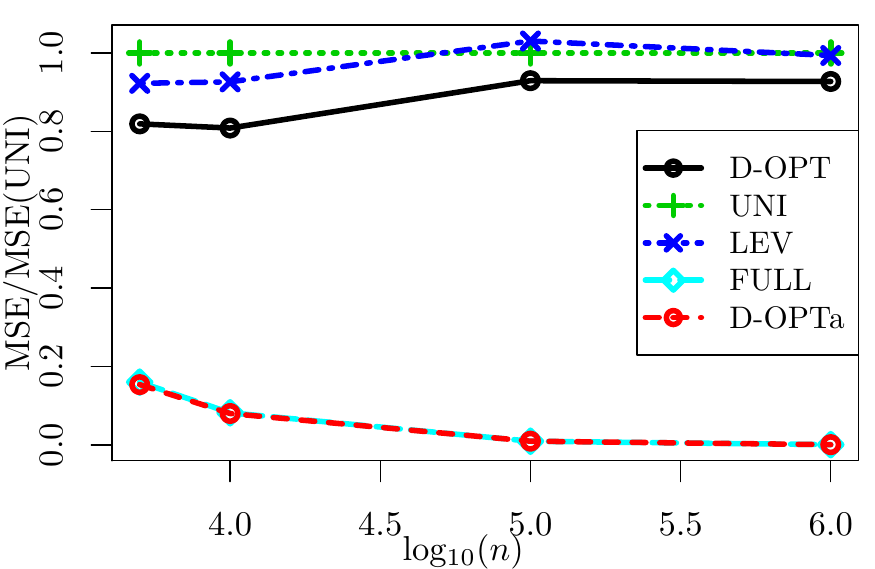}\\[-10mm]
    \caption{Case 1: $\z_i$'s are normal.}
  \end{subfigure}
  % \begin{subfigure}{0.49\textwidth}\centering
  %   \includegraphics[width=0.4\textwidth]{legend.pdf}\\[20mm]
  % \end{subfigure}
  \\[3mm]
  \begin{subfigure}{0.49\textwidth}
    \includegraphics[width=\textwidth]{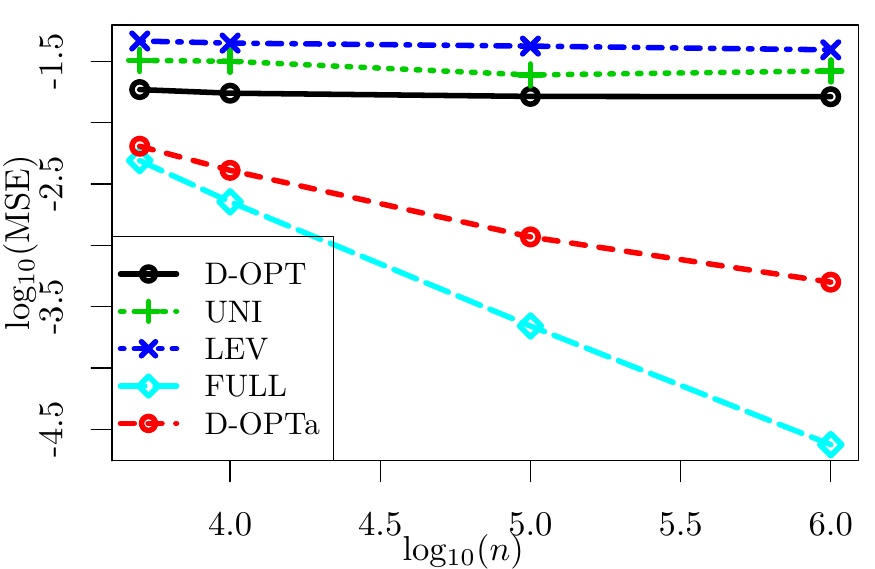}\\[-10mm]
    \caption{Case 2: $\z_i$'s are lognormal.}
  \end{subfigure}
  \begin{subfigure}{0.49\textwidth}
    \includegraphics[width=\textwidth]{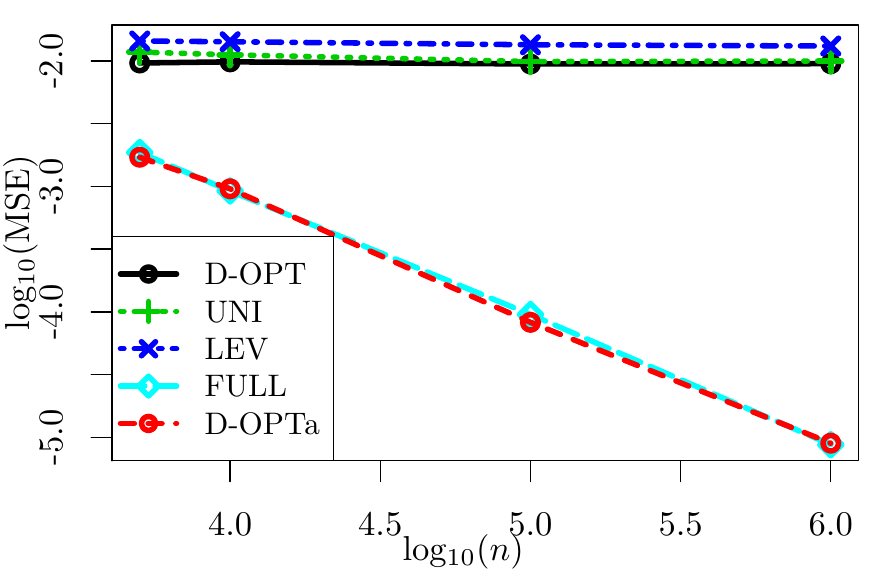}\\[-10mm]
    \caption{Case 3: $\z_i$'s are $t_2$.}
  \end{subfigure}\\[3mm]
  \begin{subfigure}{0.49\textwidth}
    \includegraphics[width=\textwidth]{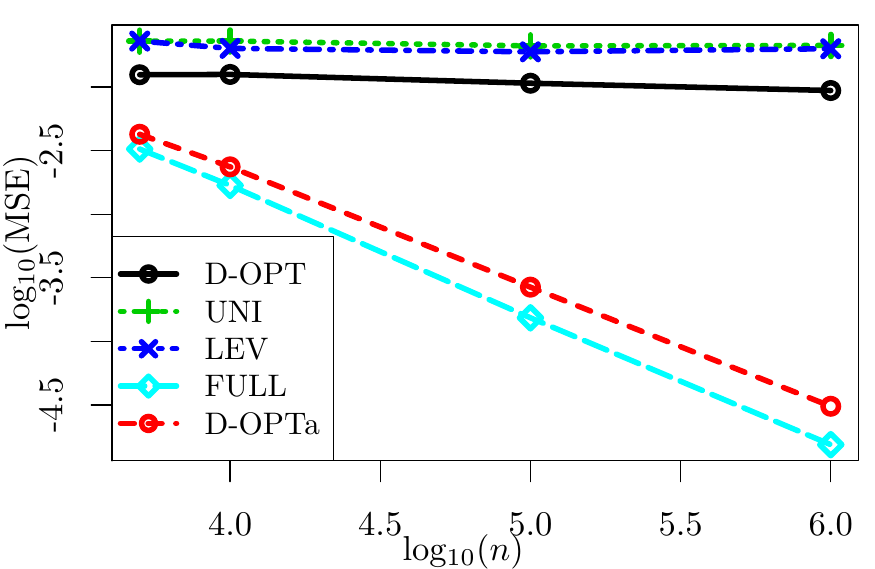}\\[-10mm]
    \caption{Case 4: $\z_i$'s are a mixture.}
  \end{subfigure}
  \begin{subfigure}{0.49\textwidth}
    \includegraphics[width=\textwidth]{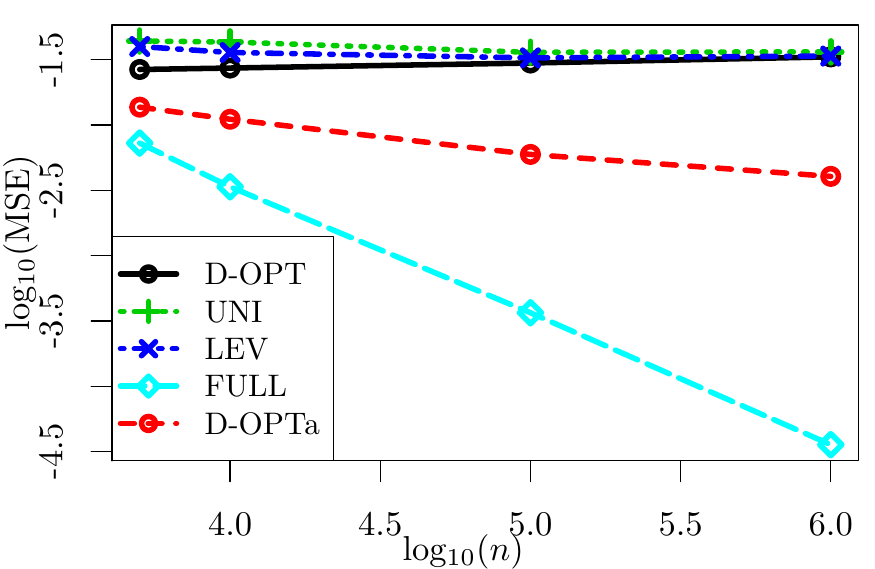}\\[-10mm]
    \caption{Case 5: $\z_i$'s include interaction terms.}
  \end{subfigure}
  \caption{MSEs for estimating the intercept parameter for five
    different distributions for the covariates $\z_i$. The subdata
    size $k$ is fixed at $k=1000$ and the full data size $n$
    changes. Logarithm with base 10 is taken of $n$ and MSEs for
    better presentation of the figures except for the right panel of
    (a) in which MSEs are scaled so that MSEs for the UNI method are
    1.}
  \label{fig:3}
\end{figure}

\begin{figure}%[H]
  \centering
  \begin{subfigure}{\textwidth}
    \includegraphics[width=0.49\textwidth]{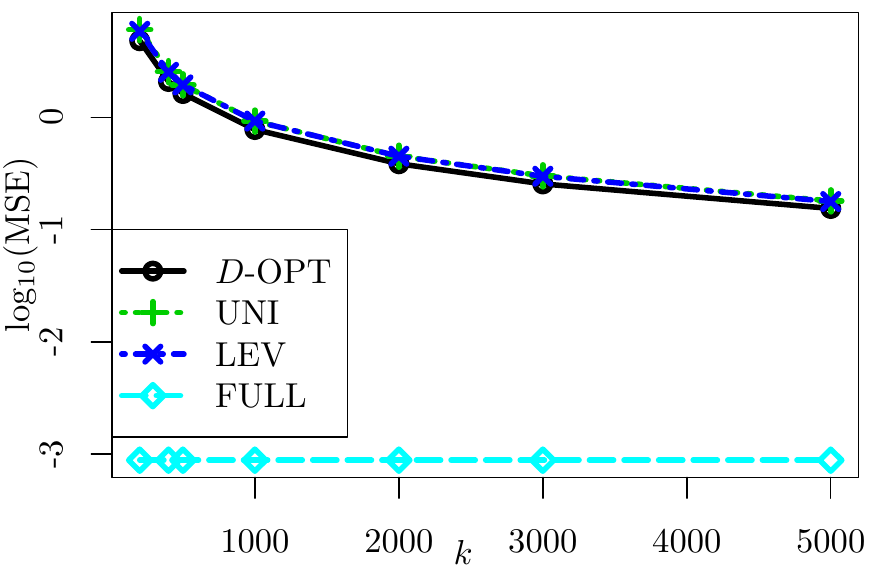}
    \includegraphics[width=0.49\textwidth]{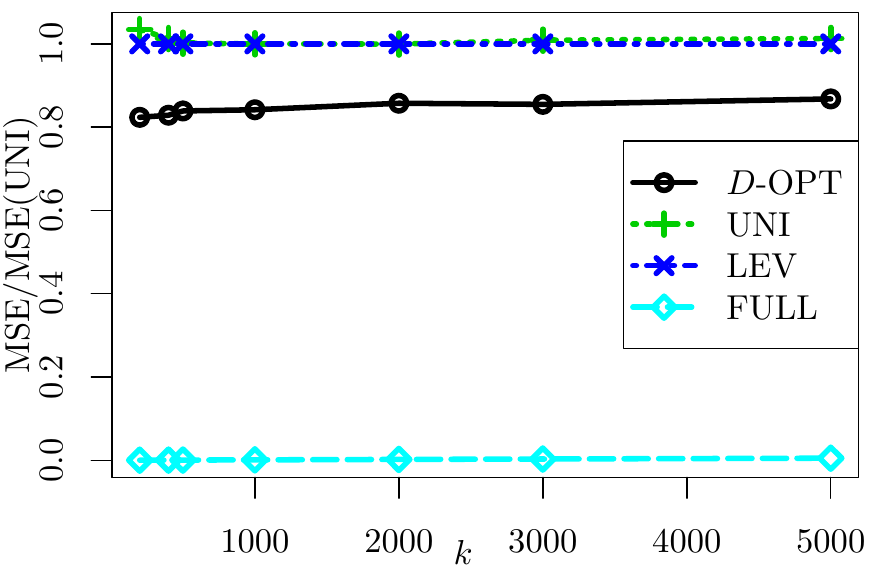}\\[-10mm]
    \caption{Case 1: $\z_i$'s are normal.}
  \end{subfigure}
  % \begin{subfigure}{0.49\textwidth}\centering
  %   \includegraphics[width=0.4\textwidth]{legend.pdf}\\[20mm]
  % \end{subfigure}
  \\[3mm]
  \begin{subfigure}{0.49\textwidth}
    \includegraphics[width=\textwidth]{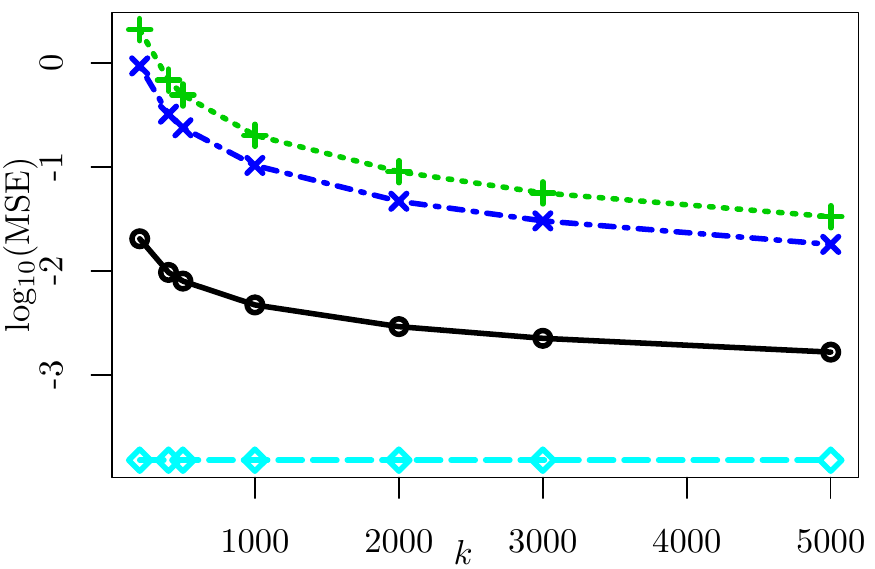}\\[-10mm]
    \caption{Case 2: $\z_i$'s are lognormal.}
  \end{subfigure}
  \begin{subfigure}{0.49\textwidth}
    \includegraphics[width=\textwidth]{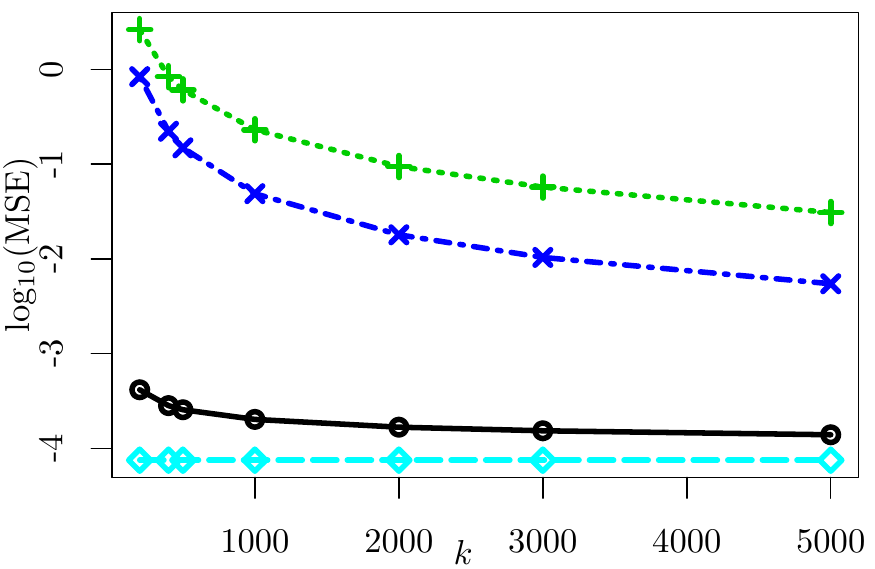}\\[-10mm]
    \caption{Case 3: $\z_i$'s are $t_2$.}
  \end{subfigure}\\[3mm]
  \begin{subfigure}{0.49\textwidth}
    \includegraphics[width=\textwidth]{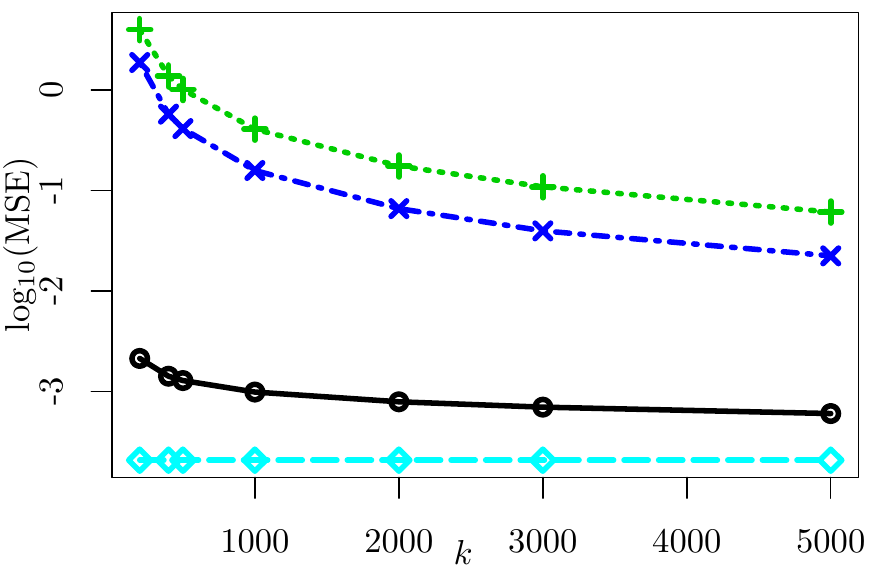}\\[-10mm]
    \caption{Case 4: $\z_i$'s are a mixture.}
  \end{subfigure}
  \begin{subfigure}{0.49\textwidth}
    \includegraphics[width=\textwidth]{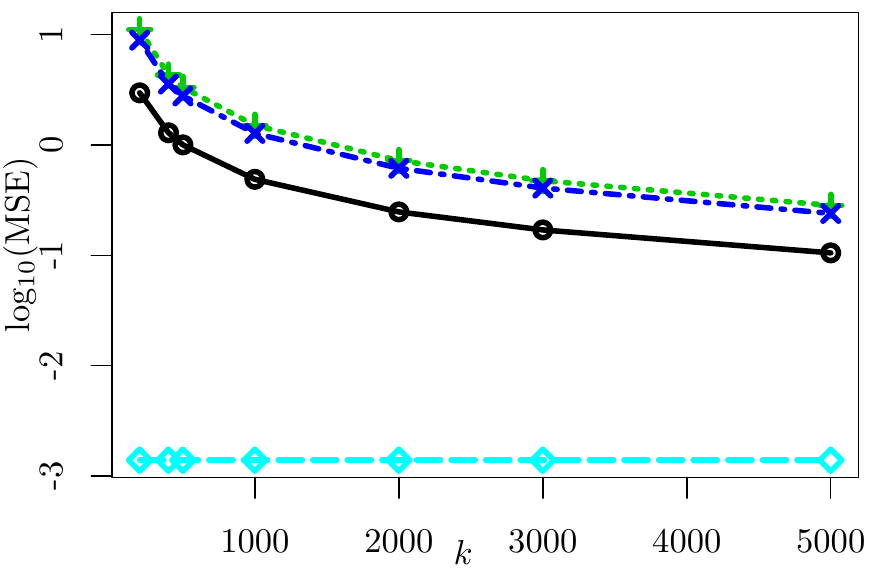}\\[-10mm]
    \caption{Case 5: $\z_i$'s include interaction terms.}
  \end{subfigure}
  \caption{MSEs for estimating the slope parameter for five
    different distributions for the covariates $\z_i$. The full data
    size is fixed at $n=10^6$ and the subdata size $k$
    changes. Logarithm with base 10 is taken of MSEs for
    better presentation of the figures except for the right panel of
    (a) in which MSEs are scaled so that MSEs for the UNI method are
    1.}
  \label{fig:4}
\end{figure}
To evaluate the performance of the D-OPT IBOSS approach for statistical
inference, we calculate the empirical coverage probabilities and
average lengths of the 95\% confidence intervals from this method. Results for the full data analysis are also computed for comparison. Figure~\ref{fig:5} gives results for the normal and mixture covariate distributions. The
estimated parameter is the first slope parameter $\beta_1$. Confidence intervals are constructed using
$\hat{\beta}_1^{(s)}\pm Z_{0.975}SE_1^{(s)}$, where
$\hat{\beta}_1^{(s)}$ and $SE_1^{(s)}$ are the estimate and its
standard error of $\beta_1$ in the $s$th repetition, and $Z_{0.975}$
is the 97.5th percentile of the standard normal distribution. It is seen
that all empirical coverage levels are close to the nominal level of
0.95, which shows that the inference based on IBOSS subdata is
valid. We do not compare this to subsampling-based approaches because we are not aware of theoretically justified methods for constructing
  confidence intervals under these approaches.

\begin{figure}%[H]
  \centering
  \begin{subfigure}{\textwidth}
    \includegraphics[width=\textwidth]{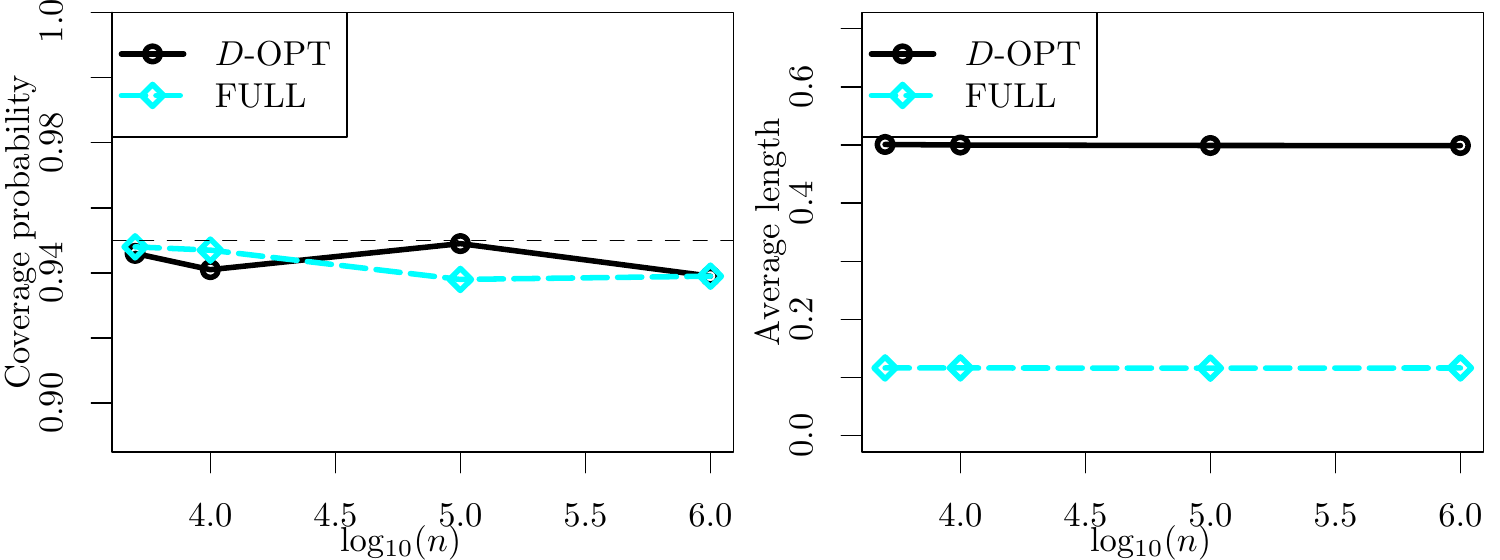}\\[-10mm]
    \caption{Case 1: $\z_i$'s are normal.}
  \end{subfigure}
  \begin{subfigure}{\textwidth}
    \includegraphics[width=\textwidth]{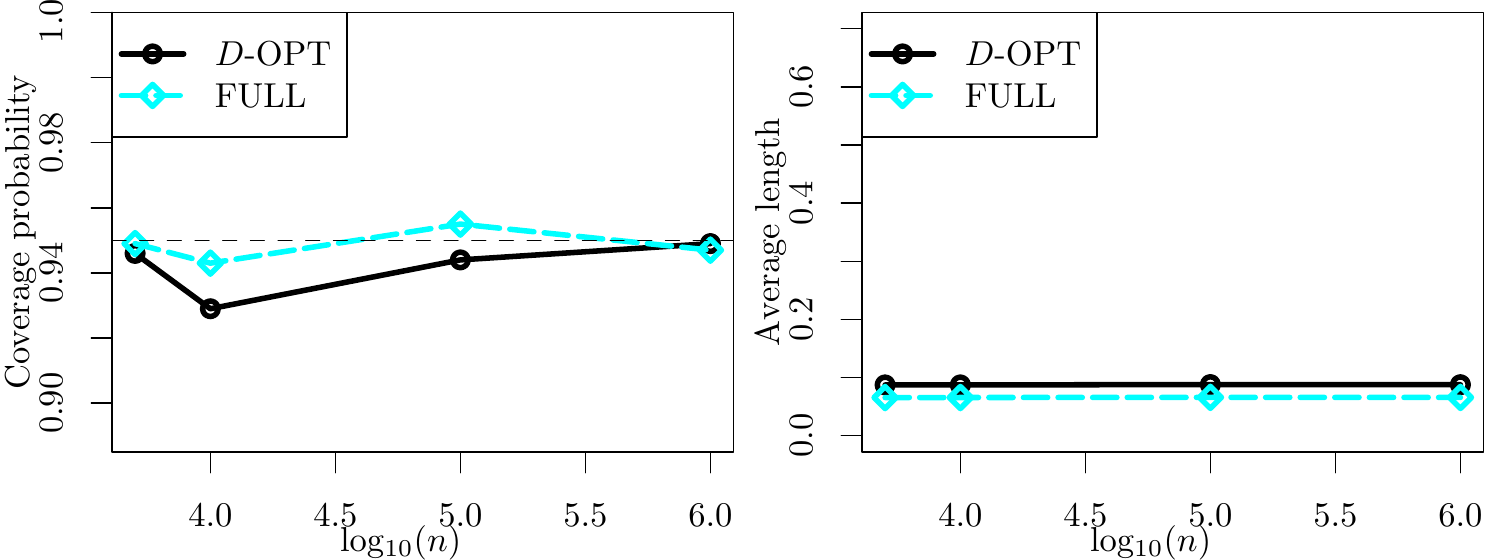}\\[-10mm]
    \caption{Case 4: $\z_i$'s are a mixture.}
  \end{subfigure}
  \caption{Empirical coverage probabilities and average lengths of
    $95\%$ confidence intervals from the D-OPT IBOSS method and the full
    data method. The gray horizontal dashed line in the left panel is
    the intended coverage probability 0.95. The subdata size is fixed
    at $k=10^3$.}
  \label{fig:5}
\end{figure}

Results on computational efficiency of the D-OPT IBOSS approach are presented in Table~\ref{tab:2}, which shows CPU times (in seconds) for different
combinations of the full data size $n$ and the number of covariates
$p$ for a fixed subdata size of $k=10^3$ and normal distribution for the $\z_i$'s. The R programming language \citep{R} is used
to implement each method. For the IBOSS approach, it requires a
partition-based partial sort algorithm which is not available in R, so
the standard C++ function {``\it nth\underline{ }element''} \citep{stroustrup1986c}, is called from R for
partial sorting. In order to get good performance in terms of CPU
times for the LEV method, the leverage scores are approximated using
the fast algorithm in \cite{Drineas:12}. The CPU times for using the
full data are also presented for comparison. All computations are
carried out on a desktop running Windows 10 with an Intel I7 processor
and 16GB memory.

It is seen from Table~\ref{tab:2} that the D-OPT IBOSS method compares favorably to the LEV method, both being more efficient than the full data method. Results for other cases are similar and thus are omitted.

\begin{table}%[H]
  \caption{CPU times for different combinations of $n$ and $p$ with a fixed $k=10^3$.}
  \label{tab:2}
  \centering
  \begin{subtable}{0.51\textwidth}
    \centering
    \caption{CPU times for different $n$ with $p=500$}
    \begin{tabular}{r|cccc}\hline
  $n$          & D-OPT   & UNI  & LEV   & FULL   \\ \hline
  $5\times10^3$ & 1.19    & 0.33 & 0.88  & 1.44   \\ 
  $5\times10^4$ & 1.36    & 0.29 & 2.20  & 13.39  \\ 
  $5\times10^5$ & 8.89    & 0.31 & 21.23 & 132.04 \\ \hline
    \end{tabular}
  \end{subtable}% \\[3mm]
  \begin{subtable}{0.51\textwidth}
    \centering
    \caption{CPU times for different $p$ with $n=5\times10^5$}
    \begin{tabular}{r|cccc}\hline
      $p$ & D-OPT & UNI & LEV   & FULL   \\ \hline
      10  & 0.19      & 0.00 & 1.94  & 0.21   \\ 
      100 & 1.74      & 0.02 & 4.66  & 6.55   \\ 
      500 & 9.30      & 0.31 & 21.94 & 132.47 \\ \hline
    \end{tabular}
  \end{subtable}
\end{table}

\subsection{Real data}\label{sec:real-data}
In this section, we evaluate the performance of the proposed IBOSS approach on two real data examples.

\subsubsection{Example 1: food intakes data}
The first example is a data set obtained from
the {\it Continuing Survey of Food Intakes by Individuals} (CSFII)
that was published by the Human Nutrition Research Center, U.S. Department of Agriculture, Beltsville, Maryland (CSFII Reports
No. 85-4 and No. 86-3). Part of the data set has been used in \citep{Thompson1992}. 
 It contains dietary intake and
related information for $n=1,827$ individuals, such as the intakes of calorie, fat, protein, and carbohydrate, as well as body mass index, age, etc. The size of this data 
set is not too big, and we can compare the IBOSS method to the full analysis. With this size of the data, we are also able to plot the full data in order to compare its pattern with that of the subdata selected by the IBOSS method. Interest is in examining the effects of the average intake levels of fat ($z_1$), protein ($z_2$), carbohydrate (carb, $z_3$), as well as body mass index
(BMI, $z_4$) and age ($z_5$) on calorie intake, $y$. 
Thus $p=5$. We fit the model
\begin{equation*}
  y=\beta_0+\beta_1z_1+\beta_2z_2+\beta_3z_3
  +\beta_4z_4+\beta_5z_5+\varepsilon,
\end{equation*}
using both the D-OPT IBOSS method with $k=10p=50$ and the full data. Results are summarized in Table~\ref{tab:3}. { The D-OPT IBOSS estimates for the slope parameters are not very different from those from the full data, and the signs of the estimates from the IBOSS method and from the full data are consistent. 
% From the slope estimates, the D-OPT and the FULL may produce closer prediction if the intercept for the after
% correcting the intercept)
 The standard errors for the IBOSS method, while larger than for the full data, are reasonably good in view of the small subdata size.} The estimates for the intercept parameter show a larger difference, and the standard error for the IBOSS method is large. This agrees with the theoretical result that the intercept cannot be estimated precisely without a large subdata size. The D-OPT IBOSS method identifies the significant effects of fat, protein and carbohydrate intake levels on calorie intake. Based on the full data, the effect of BMI is near the boundary of significance at the 5\% level, and is not identified as significant by the D-OPT IBOSS method.

Figure~\ref{fig:6} gives scatter plots of calorie intake against each
covariate for the full data of $n=1,827$ with the D-OPT subdata of
$k=50$ marked. It is seen that the relationship between the response
and each covariate is similar for the subdata and the full data, especially for covariates fat, protein and carbohydrate. Also, there do not seem to be
  any extreme outliers in this data set.

\begin{table}%[H]
  \caption{Estimation results for the CSFII data. For the D-OPT IBOSS method, the subdata size is $k=10p=50$.}\label{tab:3}
  \centering
  \begin{tabular}{cccccccccc}
    \hline
    Parameter &  & \multicolumn{2}{c}{D-OPT} & & & \multicolumn{2}{c}{FULL} \\
    \cline{3-4}\cline{6-8}
              &  & Estimate & Std. Error &  &  & Estimate & Std. Error \\\hline
    Intercept &  & 33.545 & 46.833 &   &  & 45.489 & 11.883 \\
    Age       &  & -0.496 & 1.015  &   &  & -0.200 & 0.234  \\
    BMI       &  & -0.153 & 0.343  &   &  & -0.521 & 0.224  \\
    Fat       &  & 8.459  & 0.405  &   &  & 9.302  & 0.115  \\
    Protein   &  & 5.080  & 0.386  &   &  & 4.254  & 0.127  \\
    Carb      &  & 3.761  & 0.106  &   &  & 3.710  & 0.035  \\ \hline
  \end{tabular}
\end{table}

\begin{figure}%[H]
  \centering
  % \begin{subfigure}{\textwidth}
    \includegraphics[width=\textwidth]{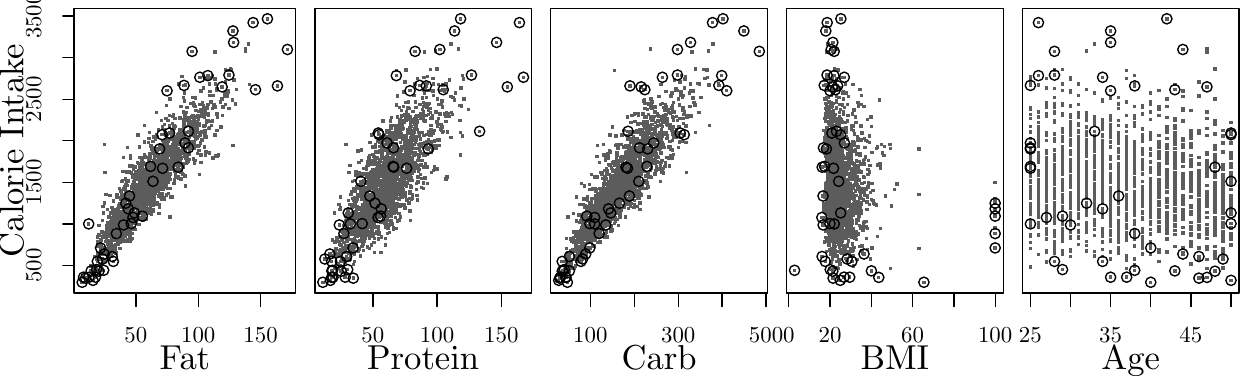}% \\[-0mm]
    % \caption{D-optimality}
  % \end{subfigure}
  % \begin{subfigure}{\textwidth}
  %   \includegraphics[width=\textwidth]{csfT.pdf}\\[-10mm]
  %   \caption{cT-optimality}
  % \end{subfigure}
  \caption{Scatter plots of calorie intake against each covariate for
    the full CSFII data (grey dots). The D-OPT subdata is labeled by
    {\large$\bf\circ$}.}
  \label{fig:6}
\end{figure}
To compare the IBOSS performance to that of the
subsampling approaches, we compute the MSE for the vector of slope parameters for each method by using one thousand bootstrap samples. 
Each bootstrap sample is a random sample of size $n$ from the full data using uniform sampling with replacement. For a bootstrap sample, we implement each subdata method to obtain the subdata estimate or implement the full data approach to obtain the full data estimate. The bootstrap MSEs are the empirical MSEs corresponding to the 1,000 estimates. 
We do this for $k=4p$, $6p$,
$10p$ and $20p$. Figure~\ref{fig:7} shows that
the D-OPT IBOSS method dominates random subsampling-based methods. 
% The T-OPT
% method is not very stable. It outperforms the subsampling methods if
% $k=20p=100$, but it is worse than the UNI method for small $k$. 
The full data approach is shown for comparison.

\begin{figure}%[H]
  \centering
  \includegraphics[width=0.7\textwidth]{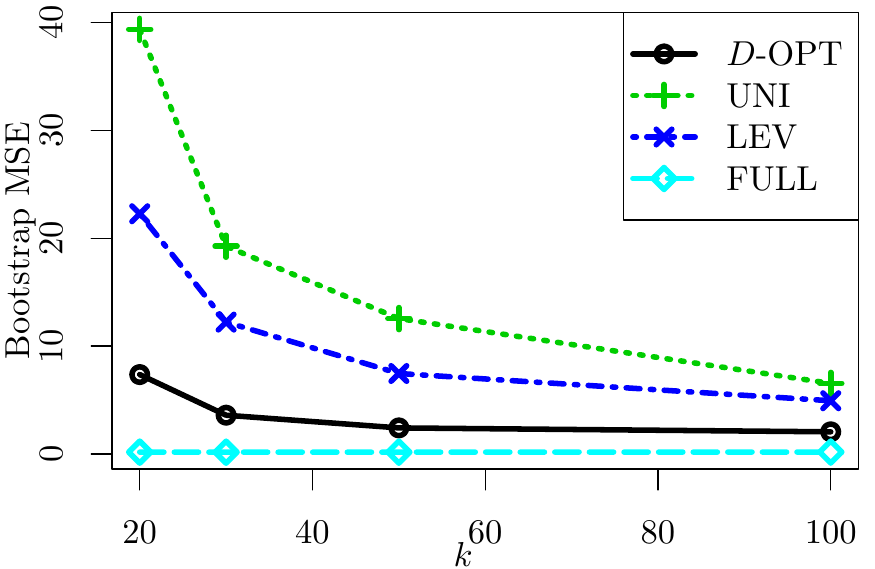}
  \caption{MSEs for estimating slope parameters
    for the CSFII data. They are computed from 1000
    bootstrap samples.}
    \label{fig:7}
\end{figure} 

\subsubsection{Example 2: chemical sensors data}
In this example, we consider chemical sensors data collected to develop and test strategies to solve a wide variety of tasks, e.g., to develop algorithms for continuously monitoring or improving response time of sensory systems \citep{fonollosa2015reservoir}. 
The data were collected at the ChemoSignals Laboratory in the BioCircuits Institute, University of California San Diego. It contains the readings of 16 chemical sensors exposed to the mixture of Ethylene and CO at varying concentrations in air. 
Each measurement was constructed by the continuous acquisition of the sixteen-sensor array signals for a duration of about 12 hours without interruption. 
The concentration transitions were set at random times and to random concentration levels. Further information about the data set can be found in \cite{fonollosa2015reservoir}.

For illustration, we use the
reading from the last sensor as the response and readings from
other sensors as covariates. 
Since trace concentrations often have a lognormal distribution \citep{goodson2011mathematical}, we take a log-transformation of the sensors readings. Readings from the second sensor are not used in the analysis because about 20\% of the values are negative for reasons unknown to us. Thus, there are $p=14$ covariates in this example. In addition, we exclude the first 20,000 data points corresponding to less than 4 minutes of system run-in time. Thus, the full data used contain $n=4,188,261$ data points.

Figure~\ref{fig:9} gives scatter plots of the response variable against each covariate for a simple random sample of size $10,000$, with D-OPT subdata of $k=280$ overlaid. Due to the size of the data, we cannot plot the full data in Figure~\ref{fig:9}. However, a simple random sample with a large sample size should be able to represent the overall pattern of the full data. It is seen that a linear model seems appropriate for the log-transformed readings and the relationship between the response and each covariate is similar for the subdata and the full data.
\begin{figure}%[H]
  \centering
    \includegraphics[width=\textwidth]{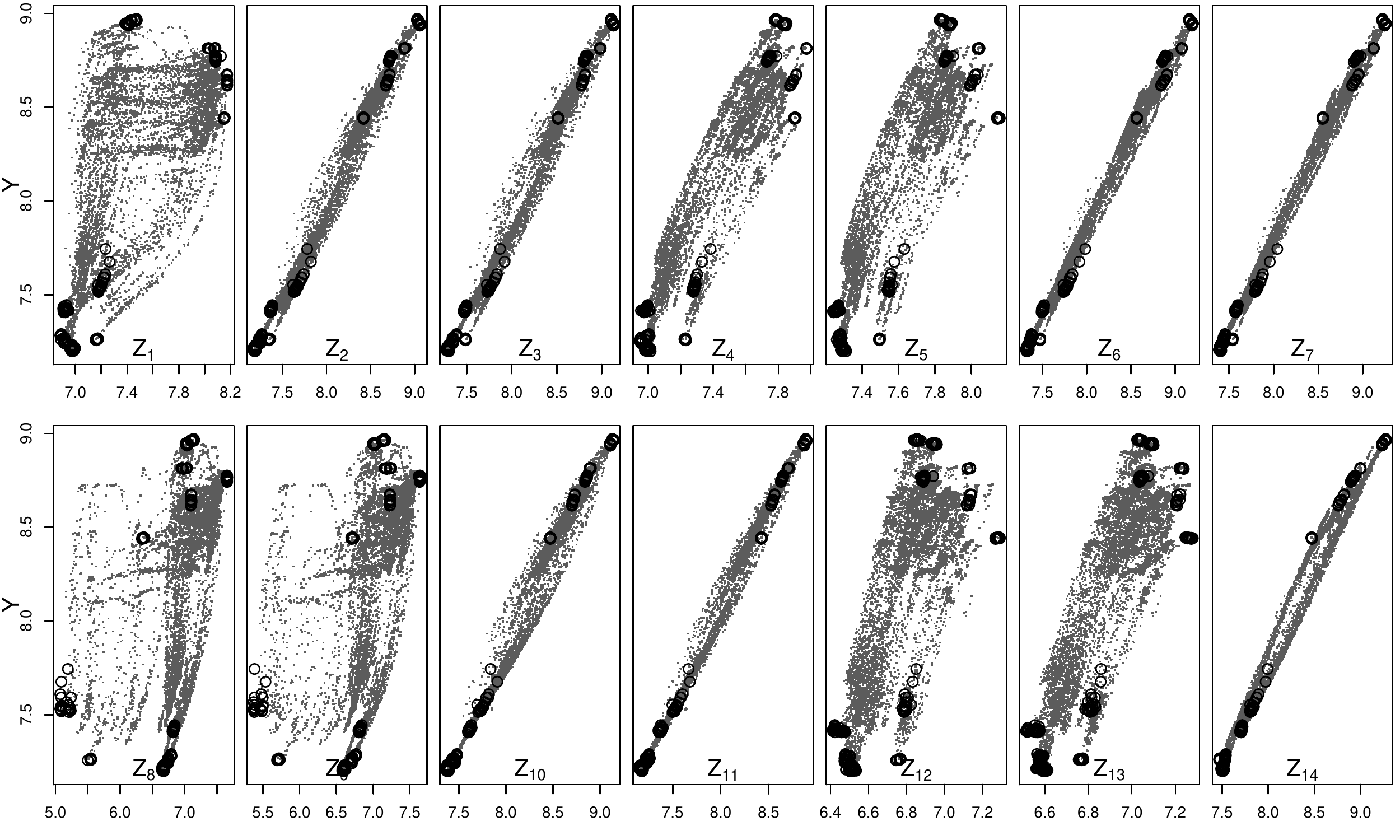}
  \caption{Scatter plots of a size $10,000$ simple random sample of the chemical sensors data (grey dots). The D-OPT subdata of size $k=280$ is plotted as {\large$\bf\circ$}. }
  \label{fig:9}
\end{figure}

We also use bootstrap to calculate the MSEs of different estimators for estimating the slope parameters. As for the first example, we considered $k=4p$, $6p$, $10p$ and $20p$ as subdata size for each method.  Results computed from $100$ bootstrap samples are plotted in Figure~\ref{fig:8}. The performance of the D-OPT IBOSS method lies between the full data approach and the subsampling based methods. 

\begin{figure}%[H]
  \centering
    \includegraphics[width=0.7\textwidth]{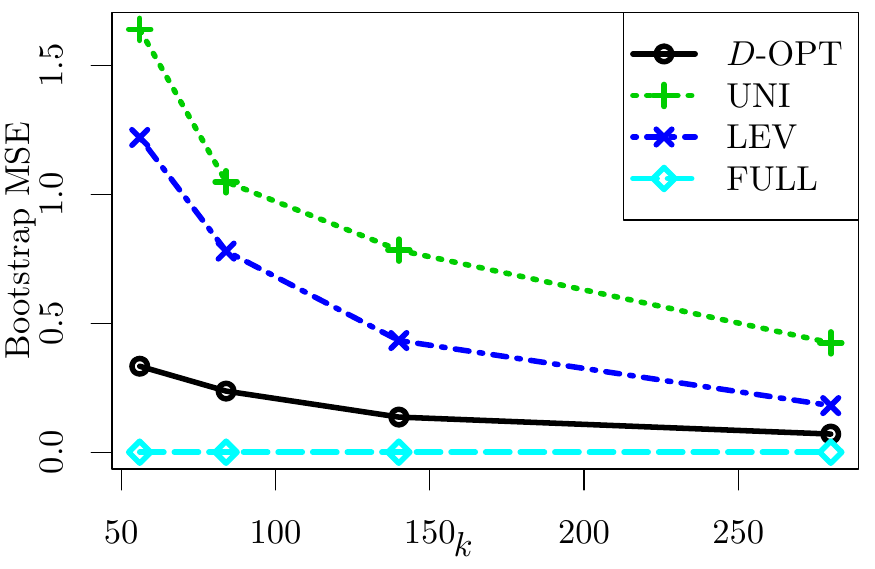}
    \caption{MSEs from 100 bootstrap samples for estimating slope parameters for the chemical sensors data.}
    \label{fig:8}
\end{figure}

\section{Concluding remarks}\label{sec:concluding-remarks}
In this paper, we have developed a subdata selection method, IBOSS, in the context of big data linear regression problems. Using the framework for the IBOSS method, we have analyzed existing subsampling-based methods and derived a lower bound for covariance matrices of the resultant estimators. For the IBOSS method, we focused on D-optimality. After a theoretical characterization of the IBOSS subdata under D-optimality, we developed a computationally efficient
algorithm to approximate the optimal subdata. Theoretical properties
of the D-OPT IBOSS method have been examined in detail through asymptotic analysis, and its performance has been demonstrated by using simulated and real data. 

There are important and unsolved questions that require future
study. 
For example, while we only considered the D-optimality criterion, there are other optimality criteria with meaningful statistical interpretations and different inferential purposes. This includes A-optimality, which seeks to minimize the average variance of estimators of regression
coefficients, and c-optimality, which minimizes the variance of the
best estimator of a pre-specified function of the model parameters. These optimality criteria may also be useful to develop efficient IBOSS methods. 

Identifying informative subdata is important for extracting useful information from big data and more research is needed. We hope that this work will stimulate additional research in the direction suggested in this paper.

\appendix
\section{Appendix}
\subsection{Proof of Theorem~\ref{thm:1}}
We will use the following convexity result \citep[cf.][]{Nordstrom2011} in the proof of Theorem~\ref{thm:1}.
\begin{lemma} \label{lemma1}
For any positive definite matrices $\B_1$ and $\B_2$ of the same dimension,
\begin{equation}\label{1:1}
  \left\{\alpha \B_1+ (1-\alpha)\B_2 \right\}^{-1}
  \leq \alpha \B_1^{-1}+(1-\alpha)\B_2^{-1} 
\end{equation}
in the Loewner ordering, where $0\leq \alpha \leq 1$.
\end{lemma}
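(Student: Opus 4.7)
The plan is to reduce the Loewner inequality in \eqref{1:1} to a scalar inequality by simultaneously diagonalizing $\B_1$ and $\B_2$, and then invoke the elementary convexity of $f(t)=1/t$ on $(0,\infty)$. The boundary cases $\alpha\in\{0,1\}$ are trivial, so I will assume $0<\alpha<1$.

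First I would invoke the simultaneous diagonalization theorem for two symmetric matrices, one of which is positive definite: there exists a non-singular matrix $\mathbf{P}$ such that $\mathbf{P}\tp\B_1\mathbf{P}=\I$ and $\mathbf{P}\tp\B_2\mathbf{P}=\mathbf{D}$, where $\mathbf{D}=\diag(d_1,\ldots,d_m)$ with $d_i>0$ because $\B_2$ is positive definite. Inverting these identities gives $\B_1^{-1}=\mathbf{P}\mathbf{P}\tp$, $\B_2^{-1}=\mathbf{P}\mathbf{D}^{-1}\mathbf{P}\tp$, and $\{\alpha\B_1+(1-\alpha)\B_2\}^{-1}=\mathbf{P}\{\alpha\I+(1-\alpha)\mathbf{D}\}^{-1}\mathbf{P}\tp$.

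Next I would use the fact that the congruence $\A\mapsto \mathbf{P}\A\mathbf{P}\tp$ preserves the Loewner ordering when $\mathbf{P}$ is non-singular, so \eqref{1:1} is equivalent to the diagonal inequality
\begin{equation*}
\{\alpha\I+(1-\alpha)\mathbf{D}\}^{-1}\le \alpha\I+(1-\alpha)\mathbf{D}^{-1},
\end{equation*}
which reduces entrywise to the $m$ scalar inequalities $[\alpha+(1-\alpha)d_i]^{-1}\le \alpha+(1-\alpha)/d_i$ for $i=1,\ldots,m$. Each of these is exactly the convexity of $t\mapsto 1/t$ applied to the convex combination of $1$ and $d_i$, and it reduces algebraically to $(d_i-1)^2\ge 0$.

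The only non-trivial ingredient is the simultaneous diagonalization step, which is standard linear algebra, so no genuine obstacle is expected. As a cleaner alternative that sidesteps diagonalization, one could exploit the variational identity $\mathbf{x}\tp\B^{-1}\mathbf{x}=\max_{\mathbf{y}}\{2\mathbf{x}\tp\mathbf{y}-\mathbf{y}\tp\B\mathbf{y}\}$: since the bracketed expression is affine in $\B$, its pointwise maximum is a convex function of $\B$, which immediately yields $\mathbf{x}\tp\{\alpha\B_1+(1-\alpha)\B_2\}^{-1}\mathbf{x}\le \alpha\mathbf{x}\tp\B_1^{-1}\mathbf{x}+(1-\alpha)\mathbf{x}\tp\B_2^{-1}\mathbf{x}$ for every $\mathbf{x}$, and this is just \eqref{1:1} read against arbitrary test vectors.
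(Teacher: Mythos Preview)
Your proof is correct; both the simultaneous-diagonalization argument and the variational alternative are standard, complete, and valid. The paper itself does not prove Lemma~\ref{lemma1} at all: it simply states the result and cites Nordstr{\"o}m (2011) for it, so there is no ``paper's own proof'' to compare against. Either of your two routes would serve as a self-contained proof that the paper omits.
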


\begin{proof}[\bf Proof of Theorem \ref{thm:1}]
  The unbiasedness can be verified by direct calculation,
  \begin{equation*}
    \Exp\{\tilde\bbeta_L|\Z, I_{\Delta}(\boldsymbol{\eta}_L)=1\}
    =\Exp_{\boldsymbol{\eta}_L}[\Exp_{\y}\{\tilde\bbeta_L|\Z, I_{\Delta}(\boldsymbol{\eta}_L)=1\}]
    =\Exp_{\boldsymbol{\eta}_L}(\bbeta)=\bbeta.
  \end{equation*}
Let $\W=\diag(w_1\eta_{L1}, ..., w_n\eta_{Ln})$. 
The variance-covariance matrix of the sampling-based estimators can be written as
\begin{align}
\Var\{\tilde\bbeta_L|\Z, I_{\Delta}(\boldsymbol{\eta}_L)=1\}
=&\Exp_{\boldsymbol{\eta}_L}[\Var_{\y}\{\tilde\bbeta_L|\Z, I_{\Delta}(\boldsymbol{\eta}_L)=1\}]
+\Var_{\boldsymbol{\eta}_L}[\Exp_{\y}\{\tilde\bbeta_L|\Z, I_{\Delta}(\boldsymbol{\eta}_L)=1\}]\notag\\
=&\sigma^2\Exp_{\boldsymbol{\eta}_L}\left\{\left(\X\tp\W\X\right)^{-1}
  \left(\X\tp\W^2\X\right)
  \left(\X\tp\W\X\right)^{-1}\right\}
+\Var_{\boldsymbol{\eta}_L}(\bbeta)\notag\\
=&\sigma^2\Exp_{\boldsymbol{\eta}_L}\left[\left\{\left(\X\tp\W\X\right)
  \left(\X\tp\W^2\X\right)^{-1}
  \left(\X\tp\W\X\right)\right\}^{-1}\right]\notag\\
\ge&\sigma^2\left[\Exp_{\boldsymbol{\eta}_L}\left\{\left(\X\tp\W\X\right)
  \left(\X\tp\W^2\X\right)^{-1}
  \left(\X\tp\W\X\right)\right\}\right]^{-1}.\label{var1}
\end{align}
The last inequality is due to Lemma 1. Notice that $\W\X\left(\X\tp \W^2\X\right)^{-1}\X\tp \W=\text{pr}(\W\X)$, the orthogonal projection matrix onto the column space of $\W\X$. 
 Define 
\begin{equation*}
\B_{WX}=
\begin{bmatrix}
w_1\eta_{L1}\x_1\tp& & \\
& \ddots &\\
& & w_n\eta_{Ln}\x_n\tp
\end{bmatrix}.
\end{equation*}
Notice that the column-space of
$\W\X=(w_1\eta_{L1}\x_1, ..., w_n\eta_{Ln}\x_n)\tp$ 
% \begin{equation}
% \W\X=
% \begin{bmatrix}
% w_1\eta_{L1}\x_1\tp\\
% \vdots\\
% w_n\eta_{Ln}\x_n\tp
% \end{bmatrix},
% \end{equation}
is contained in the column-space of $\B_{WX}$. Hence 
we have $\text{pr}(\W\X)\leq \text{pr}(\B_{WX})$ in the Loewner ordering, i.e.,
\begin{equation*}
\W\X\left(\X\tp \W^2\X\right)^{-1}\X\tp \W
\leq
\begin{bmatrix}
\x_1\tp\left(\x_1\x_1\tp\right)^{-}\x_1I(\eta_{L1}>0)& & \\
& \ddots &\\
& & \x_n\tp\left(\x_n\x_n\tp\right)^{-}\x_nI(\eta_{Ln}>0)
\end{bmatrix}.
\end{equation*}
where $I()$ is the indicator function. From this result, it can be shown that 
\begin{equation}\label{var2}
\X\tp\W\X\left(\X\tp\W^2\X\right)^{-1}\X\tp\W\X
\leq\sumn\x_i\x_i\tp I(\eta_{Li}>0). 
\end{equation}
For sampling with replacement,
\begin{equation*}
  P(\eta_{Li}>0|\Z)=1-(1-\pi_i)^k=\pi_i\sumk(1-\pi_i)^{i-1}\leq k\pi_i.
\end{equation*}
For sampling without replacement,
\begin{equation*}
  P(\eta_{Li}>0|\Z)=P(\eta_{Li}=1|\Z)=k\pi_i.
\end{equation*}
Thus, in either case, $P(\eta_{Li}>0|\Z)\leq k\pi_i$. Therefore,
\begin{align}
P\{\eta_{Li}>0|\Z, I_{\Delta}(\boldsymbol{\eta}_L)=1\}
=\frac{P\{\eta_{Li}>0, I_{\Delta}(\boldsymbol{\eta}_L)=1|\Z\}}{P\{I_{\Delta}(\boldsymbol{\eta}_L)=1|\Z\}}
\le\frac{P(\eta_{Li}>0|\Z)}{P\{I_{\Delta}(\boldsymbol{\eta}_L)=1|\Z\}}
\le\frac{k\pi_i}{P\{I_{\Delta}(\boldsymbol{\eta}_L)=1|\Z\}}.
  \label{lemma2:1}
\end{align}

Combining (\ref{var1}), (\ref{var2}) and (\ref{lemma2:1}), we have
\begin{align*}
\Var\{\tilde\bbeta_L|\Z, I_{\Delta}(\boldsymbol{\eta}_L)=1\}
&\ge\sigma^2\left[\Exp_{\boldsymbol{\eta}_L}\left\{
\sumn\x_i\x_i\tp I(\eta_{Li}>0)
\right\}\right]^{-1}\\
&=\sigma^2\left[\sumn\x_i\x_i\tp
P\{\eta_{Li}>0|\Z, I_{\Delta}(\boldsymbol{\eta}_L)=1\}\right]^{-1}\\
&\ge\frac{\sigma^2P\{I_{\Delta}(\boldsymbol{\eta}_L)=1|\Z\}}{k}
\left\{\sumn\pi_{i}\x_i\x_i\tp\right\}^{-1}.
\end{align*}
\end{proof}

\subsection{Proof of Theorem~\ref{thm:2}}
\begin{proof}
  Let
  $\breve{z}_{ij}=\{2z_{ij}-(z_{(n)j}+z_{(1)j})\}/(z_{(n)j}-z_{(1)j})$. Then
  we have,
  \begin{align}\label{eq:4}
    \sumn\delta_i\x_i\x_i\tp=k\B_3^{-1}\breve{\M}(\bdelta)(\B_3\tp)^{-1},
  \end{align}
  where
  \begin{equation*}
    \breve{\M}(\bdelta)=
    \begin{bmatrix}
      1  & k^{-1}\sumn\delta_i\breve{z}_{i1}  &\ldots &k^{-1}\sumn\delta_i\breve{z}_{id} \\
      k^{-1}\sumn\delta_i\breve{z}_{i1} & k^{-1}\sumn\delta_i\breve{z}^2_{i1}  & \ldots &k^{-1}\sumn\delta_i\breve{z}_{i1}\breve{z}_{ip} \\
      \vdots &\vdots & \ddots & \vdots \\
      k^{-1}\sumn\delta_i\breve{z}_{ip} &
      k^{-1}\sumn\delta_i\breve{z}_{i1}\breve{z}_{ip} & \ldots &
      k^{-1}\sumn\delta_i\breve{z}^2_{ip}
    \end{bmatrix},
  \end{equation*}
  and
  \begin{equation}\label{eq:38}
    \B_3=\begin{bmatrix}
      1 &  &  &  \\
      -\frac{z_{(n)1}+z_{(1)1}}{z_{(n)1}-z_{(1)1}}
      & \frac{2}{z_{(n)1}-z_{(1)1}} &  &  \\
      \vdots &  & \ddots &  \\
      -\frac{z_{(n)p}+z_{(1)p}}{z_{(n)p}-z_{(1)p}}
      &  &  & \frac{2}{z_{(n)p}-z_{(1)p}} \\
    \end{bmatrix}
  \end{equation}
  % Since $\B_3$ is a non-singular matrix that does not depend on
  % $\bdelta$, a $\bdelta$ maximizing $\det\{\breve{\M}(\bdelta)\}$ must give a
  % D-optimal subdata.  
  Note that $\breve{z}_{ij}\in[-1,1]$ for all
  $i=1, ..., n$ and $j=1, ..., p$, which implies
  $k^{-1}\sumn\delta_i\breve{z}_{ij}^2\le1$ for all $1\le j\le p$. Thus,
  \begin{align}\label{eq:3}
    |\breve{\M}(\bdelta)|=\prod_{j=0}^{p}\lambda_j
    \le\left(\frac{\sum_{j=0}^{p}\lambda_j}{p+1}\right)^{p+1}
    =\left(\frac{1+\sump k^{-1}\sumn\delta_i\breve{z}_{ij}^2}{p+1}\right)^{p+1}
    \le1,
  \end{align}
  where $\lambda_j$, $j=0,1, ..., p$ are eigenvalues of
  $\breve{\M}(\bdelta)$.  % Suppose $k/2^p$ is an integer. 
  From \eqref{eq:4}, \eqref{eq:38} and \eqref{eq:3},
  \begin{align*}
    \left|\sumn\delta_i\x_i\x_i\tp\right|
    =k^{p+1}|\B_3|^{-2}|\breve{\M}(\bdelta)|
    \le k^{p+1}\left|\prod_{j=1}^p\frac{2}{z_{(n)j}-z_{(1)j}}\right|^{-2}
    =\frac{k^{p+1}}{4^p}\prod_{j=1}^p(z_{(n)j}-z_{(1)j})^2.
  \end{align*}
    If the subdata consists of the $2^p$ points
    $(a_{1},\ldots,a_{p})\tp$ where $a_{j}=z_{(n)j}$ or $z_{(1)j}$,
    $j=1, 2, ..., p$, each occurring equally often, then the
  $\bdelta^{opt}$ corresponding to this subdata satisfies 
  $\breve{\M}(\bdelta)=\mathbf{I}$.
  This $\bdelta^{opt}$ attains equality in
  \eqref{eq:3} and corresponds therefore to D-optimal subdata.
\end{proof}

\subsection{Proof of Theorem~\ref{thm:3}}
\begin{proof} 
As before, for $i=1,...,n$, $j=1,...,p$, let $z_{(i)j}$ be the $i$th order statistic for $z_{1j}, ..., z_{nj}$. For $l\neq j$, let $z_j^{(i)l}$ be the concomitant of $z_{(i)l}$ for $z_j$, i.e., if
  $z_{(i)l}=z_{sl}$ then $z_j^{(i)l}=z_{sj}$, $i=1, ..., n$. For the subdata obtained from Algorithm~\ref{alg:1}, let $\bar{z}_{j}^*$ and $\vr(z_{j}^*)$ be the sample mean and sample variance for covariate $z_j$. { From Algorithm~\ref{alg:1}, the values $z_j$, $j=1,...,p$, in the subdata consist of $z_{(m)j}$, and
$z_j^{(m)l}$, $l=1,...j-1,j+1,...,p$, $m=1,...,r$, $n-r+1,...,n$. Note that the subdata may not contain exactly the $r$ smallest and $r$ largest values for each covariate since some data points may be removed in processing each covariate. However, since $r$ is fixed when $n$ goes to infinity, this will not affect the final result. Therefore, for easy of presentation, we abuse the notation and write the range of values of $m$ as $1,...,r$, $n-r+1,...,n$.} The information matrix based on the subdata can be written as
\begin{align} \label{eq:6}
 (\X^*_{\mathrm{D}})\tp\X^*_{\mathrm{D}}=\B_4^{-1}
  \begin{bmatrix} k  & \bz\tp\\ \bz & (k-1)\mathbf{R} \end{bmatrix}
(\B_4\tp)^{-1},
\end{align} 
where
\begin{align}\label{eq:7}
  \B_4=
  \begin{bmatrix}
    1 &  &  &  \\
    -\frac{\bar{z}_{1}^*}{\sqrt{\vr(z_{1}^*)}} &
    \frac{1}{\sqrt{\vr(z_{1}^*)}} &  & \\
    \vdots &  & \ddots &  \\
    -\frac{\bar{z}_{p}^*}{\sqrt{\vr(z_{p}^*)}}
    &  &  & \frac{1}{\sqrt{\vr(z_{p}^*)}} \\
  \end{bmatrix}.
\end{align}
From \eqref{eq:6} and \eqref{eq:7}, 
\begin{align}\label{eq:34}
  |(\X^*_{\mathrm{D}})\tp\X^*_{\mathrm{D}}|=k|(k-1)\mathbf{R}|\prod_{j=1}^p\vr(z_{j}^*)
  \ge k(k-1)^p\lambda_{\min}^p(\mathbf{R})\prod_{j=1}^p\vr(z_{j}^*).
\end{align} 
For each sample variance,
{
\begin{align}
  (k-1)\vr(z_{j}^*)
  =&\sumk\left(z_{ij}^*-\bar{z}_j^*\right)^2\notag\\
  =&\left(\sumr+\sumrn\right)
       \left(z_{(i)j}-\bar{z}_j^*\right)^2\notag
  +\sum_{l\neq j}\left(\sumr+\sumrn\right)\left(z_j^{(i)l}-\bar{z}_j^*\right)^2\\
  \ge&\left(\sumr+\sumrn\right)
       \left(z_{(i)j}-\bar{z}_j^{**}\right)^2\notag\\
  =&\sumr\left(z_{(i)j}-\bar{z}_j^{*l}\right)^2
     +\sumrn\left(z_{(i)j}-\bar{z}_j^{*u}\right)^2
     +\frac{r}{2}\left(\bar{z}_j^{*u}-\bar{z}_j^{*l}\right)^2\notag\\
  \ge&\frac{r}{2}\left(\bar{z}_j^{*u}-\bar{z}_j^{*l}\right)^2\notag\\
  \ge&\frac{r}{2}\left(z_{(n-r+1)j}-z_{(r)j}\right)^2\label{eq:70}
\end{align}
where $\bar{z}_j^{**}=\left(\sumr+\sumrn\right)z_{(i)j}/(2r)$, $\bar{z}_j^{*l}=\sumr z_{(i)j}/r$, and $\bar{z}_j^{*u}=\sumrn z_{(i)j}/r$.
From~\eqref{eq:70},
\begin{align}
  \vr(z_{j}^*)
  \ge&\frac{r(z_{(n)j}-z_{(1)j})^2}{2(k-1)}
       \left(\frac{z_{(n-r+1)j}-z_{(r)j}}{z_{(n)j}-z_{(1)j}}\right)^2  .
       \label{eq:71}
\end{align}
Thus,
\begin{align*}
  |(\X^*_{\mathrm{D}})\tp\X^*_{\mathrm{D}}|
  \ge&k(k-1)^p\lambda_{\min}^p(\mathbf{R})\prod_{j=1}^p
       \frac{r(z_{(n)j}-z_{(1)j})^2}{2(k-1)}
     \left(\frac{z_{(n-r+1)j}-z_{(r)j}}{z_{(n)j}-z_{(1)j}}\right)^2\\
  =&\frac{r^p}{2^p}
     k\lambda_{\min}^p(\mathbf{R})\prod_{j=1}^p(z_{(n)j}-z_{(1)j})^2
     \times\prod_{j=1}^p
     \left(\frac{z_{(n-r+1)j}-z_{(r)j}}{z_{(n)j}-z_{(1)j}}\right)^2.
\end{align*}
This shows that
\begin{align*}
  \frac{|(\X^*_{\mathrm{D}})\tp\X^*_{\mathrm{D}}|}
    {\frac{k^{p+1}}{4^p}\prod_{j=1}^p(z_{(n)j}-z_{(1)j})^2}
  \ge&\frac{\lambda_{\min}^p(\mathbf{R})}{p^p}
  \times\prod_{j=1}^p
     \left(\frac{z_{(n-r+1)j}-z_{(r)j}}{z_{(n)j}-z_{(1)j}}\right)^2.
\end{align*}
}
\end{proof}

\subsection{Proof of Theorem~\ref{thm:4}}
\begin{proof}
  From \eqref{eq:6} and \eqref{eq:7}, 
\begin{align*}
  \Var(\hat\bbeta^{\mathrm{D}}|\Z)
  &=\sigma^2\{(\X^*_{\mathrm{D}})\tp\X^*_{\mathrm{D}}\}^{-1}
  =\sigma^2\B_4\tp
  \begin{bmatrix} \onek  & \bz\tp\\ \bz & \frac{1}{k-1}\mathbf{R}^{-1} \end{bmatrix}\B_4.% \\
  % &=\begin{bmatrix} \onek+\frac{1}{k-1}\mathbf{u}\tp\mathbf{R}^{-1}\mathbf{u}  & \bz\tp\\ \bz & \frac{1}{k-1}\mathbf{R}^{-1} \end{bmatrix}
\end{align*}
Thus
\begin{equation}\label{eq:59}
  \Var(\hat\beta^{\mathrm{D}}_0|\Z)
  =\sigma^2\left(\frac{1}{k}+\frac{1}{k-1}\mathbf{u}\tp\mathbf{R}^{-1}\mathbf{u}\right),
\end{equation}
and 
\begin{equation}\label{eq:62}
  \Var(\hat\beta^{\mathrm{D}}_j|\Z)
  =\frac{\sigma^2}{k-1}\frac{(\mathbf{R}^{-1})_{jj}}{\vr(z_{j}^*)},
\end{equation}
where $\mathbf{u}=\Big\{-\bar{z}_{1}^*/\sqrt{\vr(z_{1}^*)}, ..., -\bar{z}_{p}^*/\sqrt{\vr(z_{p}^*)}\Big\}\tp$ and $(\mathbf{R}^{-1})_{jj}$ is the $j$th diagonal element of $\mathbf{R}^{-1}$.

{
From \eqref{eq:59}, $\Var(\hat\beta^{\mathrm{D}}_0|\Z)\ge\sigma^2/k$ because $\mathbf{u}\tp\mathbf{R}^{-1}\mathbf{u}\ge0$.

Denote the spectral decomposition of $\mathbf{R}$ as $\mathbf{R}=\mathbf{V}\bLambda\mathbf{V}\tp$. Since $\bLambda^{-1}\le\lambda_{\min}^{-1}(\mathbf{R})\I_p$, $\mathbf{R}^{-1}=\mathbf{V}\bLambda^{-1}\mathbf{V}\tp
\le\mathbf{V}\lambda_{\min}^{-1}(\mathbf{R})\I_p\mathbf{V}\tp
=\lambda_{\min}^{-1}(\mathbf{R})\I_p\tp$. Thus $\mathbf{R}^{-1}_{jj}\le\lambda_{\min}^{-1}(\mathbf{R})$ for all $j$. 
 From this fact, and \eqref{eq:62} and \eqref{eq:71}, we have
\begin{align}
  \Var(\hat\beta^{\mathrm{D}}_j|\Z)
  &=\frac{\sigma^2}{k-1}\frac{(\mathbf{R}^{-1})_{jj}}{\vr(z_{j}^*)}
  \le\frac{4p\sigma^2}
    {k\lambda_{\min}(\mathbf{R})(z_{(n-r+1)j}-z_{(r)j})^2}.
\end{align}
Similarly, we have 
\begin{align}
  \Var(\hat\beta^{\mathrm{D}}_j|\Z)
  &=\frac{\sigma^2}{k-1}\frac{(\mathbf{R}^{-1})_{jj}}{\vr(z_{j}^*)}
  \ge\frac{4\sigma^2}
    {k\lambda_{\max}(\mathbf{R})(z_{(n)j}-z_{(1)j})^2}.
\end{align}
Here we utilize the following inequality
\begin{align}
  \vr(z_{j}^*)\leq&\frac{1}{k-1}\sum_{i=1}^k\left(z_{ij}^*-\frac{z_{(n)j}+z_{(1)j}}{2}\right)^2
  \le\frac{k}{4(k-1)}\left(z_{(n)j}-z_{(1)j}\right)^2,
\end{align}
where the last inequality is due to the fact $|z_{ij}^*-\frac{z_{(n)j}+z_{(1)j}}{2}|\leq \frac{z_{(n)j}-z_{(1)j}}{2}$ for all $i=1,\ldots, k$.}

\end{proof}

{
\subsection{Proof of Theorem~\ref{thm:4-2}}
\begin{proof}
  For \eqref{eq:9}, it is a direct result from \eqref{eq:64}. 

    For \eqref{eq:18}, we consider the five cases in the following. For the first case that $r$ is fixed, from results in Theorems 2.8.1 and 2.8.2 in \cite{galambos1987asymptotic}, we have that
  \begin{equation}\label{eq:69}
    \frac{z_{(n-r+1)j}-z_{(r)j}}{z_{(n)j}-z_{(1)j}}=\Op
    \quad\text{ and }\quad
    \frac{z_{(n)j}-z_{(1)j}}{z_{(n-r+1)j}-z_{(r)j}}=\Op.
  \end{equation}
  Combining \eqref{eq:9} and \eqref{eq:69}, \eqref{eq:18} follows.
  % Note that $k$ and $p$ are fixed and can be ignored for this case. 

For the second case when $r\rightarrow \infty$, $r/n\rightarrow0$, and the support of $F_j$ is bounded, \eqref{eq:69} can be easily verified.

For the third case when the upper endpoint for the support of $F_j$ is $\infty$ and the lower endpoint for the support of $F_j$ is finite, and $r\rightarrow\infty$ slow enough such that \eqref{eq:8} holds, if we can show that $z_{(n-r+1)j}/z_{(n)j}=1+\op$, then the result in \eqref{eq:18} follows. Let $b_{n,j}=F_j^{-1}(1-n^{-1})$. From \cite{hall1979relative}, we only need to show that $z_{(n-r+1)j}/b_{n,j}=1+\op$ in order to show that $z_{(n-r+1)j}/z_{(n)j}=1+\op$. For this, from the proof of Theorem 1 of \cite{hall1979relative}, it suffices to show that 
\begin{equation*}
  \left[\frac{1-F_j(b_{n,j})}{1-F_j\{(1-\epsilon)b_{n,j}\}}\right]^{-1/2}
  \left[1-\frac{r\{1-F_j(b_{n,j})\}}{1-F_j\{(1-\epsilon)b_{n,j}\}}\right]
  \rightarrow\infty,
\end{equation*}
which holds by directly applying the assumption in \eqref{eq:8} and the fact that $r\rightarrow\infty$. 

For the fourth case, it can be proved by using an approach similar to the one used for the third case. It can also be proved by noting that $z_{(r)j}=-(-z)_{(n-r+1)j}$, $z_{(1)j}=-(-z)_{(n)j}$, and the fact that the condition in \eqref{eq:33} on $\z$ becomes the condition in \eqref{eq:8} on $-\z$.

For the fifth case, it can be proved by combining the results in the third case and the fourth case.
\end{proof}}

\subsection{Proof of Theorem~\ref{thm:5}}
Let $\sigma_j$ and $\rho_{j_1j_2}$ be the $j$th diagonal element of $\bsigma$ and
entry $(j_1, j_2)$ of $\brho$, respectively, for $j, j_1, j_2=1,...,p$. { As described in the proof of Theorem~\ref{thm:3},} from Algorithm~\ref{alg:1}, the values $z_j$, $j=1,...,p$, in the subdata consist of $z_{(i)j}$, and
$z_j^{(i)l}$, $l=1,...j-1,j+1,...,p$, $i=1,...,r$, $n-r+1,...,n$,
where $z_j^{(i)l}$ are the concomitants for $z_j$. 

Let $\mathbf{v}=(\Z^*_{\mathrm{D}})\tp\mathbf{1}$ and $\bOmega=(\Z^*_{\mathrm{D}})\tp\Z^*_{\mathrm{D}}$. Then
\begin{equation}\label{eq:19}
  (\X^*_{\mathrm{D}})\tp\X^*_{\mathrm{D}}=
  \begin{bmatrix}
    k & \mathbf{v}\tp \\
    \mathbf{v} & \bOmega
  \end{bmatrix}.
\end{equation}
The $j$th diagonal element of $\bOmega$ is
  \begin{equation}\label{eq:39}
    \Omega_{jj}=\left(\sumr+\sumrn\right)z_{(i)j}^2
      +\sum_{l\neq j}\left(\sumr+\sumrn\right)\left(z_j^{(i)l}\right)^2,
  \end{equation}
 while entry $(j_1,j_2)$, $j_1 \neq j_2$, is
  \begin{equation}\label{eq:40}
    \Omega_{j_1j_2}=\left(\sumr+\sumrn\right)
      \left(z_{(i)j_1}z_{j_2}^{(i)j_1}+z_{(i)j_2}z_{j_1}^{(i)j_2}\right)
      +\sum_{l\neq j_1j_2}\left(\sumr+\sumrn\right)
      z_{j_1}^{(i)l}z_{j_2}^{(i)l}.
    \end{equation}  
The $j$th element of $\mathbf{v}$ is
\begin{equation}\label{eq:41}
  v_j=\left(\sumr+\sumrn\right)z_{(i)j}
    +\sum_{l\neq j}\left(\sumr+\sumrn\right)z_j^{(i)l}.
\end{equation}

Now we consider the two specific distributions in Theorem~\ref{thm:5} and prove the corresponding results in \eqref{eq:17} and \eqref{eq:14}.

\subsubsection{Proof of equation \eqref{eq:17} in Theorem~\ref{thm:5}}
\begin{proof}% [Proof of Theorem~\ref{thm:4}]
 When $\z_i\sim N(\bmu,\bSigma)$, using the results in Example 2.8.1 of \cite{galambos1987asymptotic}, we obtain
 \begin{equation}\label{eq:42}
  \begin{split}
    z_{(i)j}&=\mu_j-\sigma_j\sqrt{2\log n}+\op,\quad
              i=1,...,r,\\
    z_{(i)j}&=\mu_j+\sigma_j\sqrt{2\log n}+\op,\quad
              i=n-r+1,...,n.
  \end{split}
\end{equation}
  Using an approach similar to Example 5.5.1 of
  \cite{galambos1987asymptotic}, we obtain
  \begin{equation}\label{eq:43}
  \begin{split}
    z_j^{(i)l}&=\mu_j-\rho_{lj}\sigma_j\sqrt{2\log n}+\Op,\quad
                i=1,...,r,\\
    z_j^{(i)l}&=\mu_j+\rho_{lj}\sigma_j\sqrt{2\log n}+\Op,\quad
                i=n-r+1,...,n.
  \end{split}
\end{equation}
Using \eqref{eq:42} and \eqref{eq:43}, from \eqref{eq:39}, \eqref{eq:40} and \eqref{eq:41}, we obtain that 
  \begin{align}\label{eq:44}
    \Omega_{jj}=
    &4r\log n\sigma_j^2\sum_{l=1}^p\rho_{lj}^2
      +O_P(\sqrt{\log n}),\\
    \Omega_{j_1j_2}=
    &4r\log n\sigma_{j_1}\sigma_{j_2}
      \sum_{l=1}^p\rho_{lj_1}\rho_{lj_2}+O_P(\sqrt{\log n})\label{eq:45}\\
    v_j=&\Op,\label{eq:53}
  \end{align}
respectively. 
From \eqref{eq:44}, \eqref{eq:45} and \eqref{eq:53}, we have
  \begin{align}\label{eq:46}
    \bOmega=4r\log n \bsigma\brho^2\bsigma+O_P(\sqrt{\log n})
    \quad\text{ and }\quad \mathbf{v}=\Op.
  \end{align}
  The variance,
  \begin{align}\label{eq:20}
    \Var(\hat\bbeta^{\mathrm{D}}|\X)=\sigma^2
    \begin{bmatrix}
      k & \mathbf{v}\tp \\
      \mathbf{v} & \bOmega
    \end{bmatrix}^{-1} &=\frac{\sigma^2}{c}\begin{bmatrix}
      1 & -\mathbf{v}\tp \bOmega^{-1}\\
      -\bOmega^{-1}\mathbf{v} &
      c\bOmega^{-1}+\bOmega^{-1}\mathbf{v}\mathbf{v}\tp \bOmega^{-1}
    \end{bmatrix},
  \end{align}
  where {$c=k-\mathbf{v}\tp \bOmega^{-1}\mathbf{v}=k+O_P(1/\log n)$ and the second equality is from $\eqref{eq:46}$}. Note that from \eqref{eq:46} $\bOmega^{-1}=O_P(1/\log n)$, so
  \begin{align*}
    \bOmega^{-1}-(4r\log n\bsigma\brho^2\bsigma)^{-1}% \\
     &
      =\bOmega^{-1}(4r\log n\bsigma\brho^2\bsigma-\bOmega)(4r\log n\bsigma\brho^2\bsigma)^{-1}\\
     &=O_P\left(\frac{1}{\log n}\right)O_P\big(\sqrt{\log n}\big)
       O\left(\frac{1}{\log n}\right)
       =O_P\left\{\frac{1}{(\log n)^{3/2}}\right\}.
  \end{align*}
  Thus
  \begin{align}\label{eq:21}
    \bOmega^{-1}=\frac{1}{4r\log n}(\bsigma\brho^2\bsigma)^{-1}
    +O_P\left\{\frac{1}{(\log n)^{3/2}}\right\}.
  \end{align}
  Combining \eqref{eq:19}, \eqref{eq:20} and \eqref{eq:21}, and using that $k=2rp$
  \begin{align*}
    \Var(\hat\bbeta^{\mathrm{D}}|\X)=\sigma^2
    \begin{bmatrix}
      \frac{1}{k}+O_P\left(\frac{1}{\log n}\right)
      & O_P\left(\frac{1}{\log n}\right) \\[3mm]
       O_P\left(\frac{1}{\log n}\right)
      & \qquad \frac{1}{4r\log n}(\bsigma\brho^2\bsigma)^{-1}
      +O_P\left\{\frac{1}{(\log n)^{3/2}}\right\}
    \end{bmatrix}.
  \end{align*}
  \end{proof}

\subsubsection{Proof of equation \eqref{eq:14} in Theorem~\ref{thm:5}}
\begin{proof}% [Proof of Theorem~\ref{thm:4}]
When $\z_i\sim LN(\bmu,\bSigma)$. Let $z_{ij}=\exp{(U_{ij})}$ with $\mathbf{U}_{i}=(U_{i1}, ..., U_{ip})\tp\sim N(\bmu,\bSigma)$. From~\eqref{eq:42},
\begin{equation}\label{eq:47}
  \begin{split}
    z_{(i)j}=\exp(U_{(i)j})
    % &=\exp\{\mu_j-\sigma_j\sqrt{2\log n}+\op\}\\
    % &=\exp(\mu_j-\sigma_j\sqrt{2\log n})\{1+\op\}\\
    &=\exp(-\sigma_j\sqrt{2\log n})\Op=\op
      ,\quad i=1,...,r,\\
    z_{(i)j}=\exp(U_{(i)j})
    % &=\exp\{\mu_j+\sigma_j\sqrt{2\log n}+\op\}\\
    % &=\exp(\mu_j+\sigma_j\sqrt{2\log n})\{1+\op\}\\
    &=\exp(\sigma_j\sqrt{2\log n})\{e^{\mu_j}+\op\}
      ,\quad i=n-r+1,...,n.
    \end{split}
  \end{equation}
Without loss of generality, assume that $\rho_{lj}\ge0$, $l,j=1,...,p$. From~\eqref{eq:43},
\begin{equation}\label{eq:48}
  \begin{split}
    z_j^{(i)l}=\exp(U_j^{(i)l})
    &=\exp(-\rho_{lj}\sigma_j\sqrt{2\log n})\Op=\op
      ,\quad i=1,...,r,\\
    z_j^{(i)l}=\exp(U_j^{(i)l})
    &=\exp\{\sigma_j\sqrt{2\log n}
    -(1-\rho_{lj})\sigma_j\sqrt{2\log n}+\mu_j+\Op\}\\
    &=\exp(\sigma_j\sqrt{2\log n})\op
      ,\quad i=n-r+1,...,n.
  \end{split}
\end{equation}
Using \eqref{eq:47} and \eqref{eq:48}, from \eqref{eq:39}, \eqref{eq:40} and \eqref{eq:41}, we obtain that 
  \begin{align}
    \Omega_{jj}=
    &r\exp(2\sigma_j\sqrt{2\log n})\{e^{2\mu_j}+\op\},\label{eq:49}\\
    \Omega_{j_1j_2}=
    &2r\exp\Big\{(\sigma_{j_1}+\sigma_{j_2})\sqrt{2\log n}\Big\}\op,
      \label{eq:50}\\
    v_j=&r\exp(\sigma_j\sqrt{2\log n})\{e^{\mu_j}+\op\}.\label{eq:51}
  \end{align} 
From \eqref{eq:19}, \eqref{eq:49}-\eqref{eq:51}, for $\A_n=\diag\Big\{1, \exp\big(\sigma_1\sqrt{2\log n}\big), ..., \exp\big(\sigma_p\sqrt{2\log n}\big)\Big\}$,
\begin{equation}\label{eq:52}
  \A_n^{-1}(\X^*_{\mathrm{D}})\tp\X^*_{\mathrm{D}}\A_n^{-1}=
  \A_n^{-1}\begin{bmatrix}
    k & \mathbf{v}\tp \\
    \mathbf{v} & \bOmega
  \end{bmatrix}\A_n^{-1}
  =\begin{bmatrix}
    k & r\mathbf{v}_1\tp \\
    r\mathbf{v}_1 & r\B_5,
  \end{bmatrix}+\op
\end{equation}
where $\mathbf{v}_1=(e^{\mu_1}, ..., e^{\mu_p})\tp$ and  $\B_5=\diag(e^{2\mu_1}, ..., e^{2\mu_p})$. From \eqref{eq:52},
\begin{align*}
  \Var(\A_n\hat\bbeta^{\mathrm{D}}|\X)
  =\sigma^2\A_n\{(\X^*_{\mathrm{D}})\tp\X^*_{\mathrm{D}}\}^{-1}\A_n
  &=\sigma^2\begin{bmatrix}
    k & r\mathbf{v}_1\tp \\
    r\mathbf{v}_1 & r\B_5,
  \end{bmatrix}^{-1}+\op\\
  &=\frac{2\sigma^2}{k}\begin{bmatrix}
    1 & -\mathbf{u}\tp \\
    -\mathbf{u} & p\bLambda+\mathbf{u}\mathbf{u}\tp,
  \end{bmatrix}+\op.
\end{align*}
  \end{proof}

\subsection{Proof of results in Table~\ref{tab:1}}
{ When the covariate has a $t$ distribution, from Theorem~\ref{thm:4}, for simple linear model, the variance of the estimator of $\beta_1$ using the D-OPT IBOSS approach is of the same order as $(z_{(n)1}-z_{(1)1})^{-2}$. From Theorems 2.1.2 and 2.9.2 of \cite{galambos1987asymptotic}, we obtain that $z_{(n)1}-z_{(1)1}\asymp_P n^{1/\nu}$. Thus, the variance is of the order $n^{-2/\nu}$.} 

For the full data approach, the variance of the estimator of $\beta_1$
is of the same order as $(\sumn z_{i1}^2)^{-1}$. When $z_1$ has a $t$
distribution with degrees of freedom $\nu>2$, from Kolmogorov's strong
law of large numbers (SLLN), $\sumn z_{i1}^2=O(n)$ almost surely. If
$\nu\le2$, $\Exp[\{z_{i1}^2\}^{1/(2/\nu+\alpha)}]<\infty$ for any
$\alpha>0$. Thus, from Marcinkiewicz-Zygmund SLLN \citep[Theorem 2 of
Section 5.2 of][]{ChowTeicher2003},
$\sumn z_{ij}^2=o(n^{2/\nu+\alpha})$ almost surely for any
$\alpha>0$. This shows that the order of $(\sumn z_{i1}^2)^{-1}$ is
slower than $n^{-(2/\nu+\alpha)}$ for any $\alpha>0$.
  
For the UNI approach, the lower bound for the variance of the estimator of $\beta_1$ is of the same order as $n(\sumn z_{i1}^2)^{-1}$, which is of order $O(1)$ when $\nu>2$ and is
slower than $n^{2/\nu-1+\alpha}$ for any $\alpha>0$ when $\nu\le2$.

For the intercept $\beta_0$, the variance of the estimator is of the same order as the inverse of the sample size used in each method.

% \bibliographystyle{natbib}
% \bibliography{ref}

\begin{thebibliography}{}

\bibitem[Candes and Tao(2007)]{candes2007dantzig}
Candes, E. and Tao, T. (2007).
\newblock The dantzig selector: statistical estimation when p is much larger
  than n.
\newblock \emph{The Annals of Statistics}  2313--2351.

\bibitem[Chow and Teicher(2003)]{ChowTeicher2003}
Chow, Y. S.~C. and Teicher, H. (2003).
\newblock \emph{Probability Theory: Independence, Interchangeability,
  Martingales}.
\newblock Springer, New York.

\bibitem[Drineas \emph{et~al.}(2012)Drineas, Magdon-Ismail, Mahoney, and
  Woodruff]{Drineas:12}
Drineas, P., Magdon-Ismail, M., Mahoney, M., and Woodruff, D. (2012).
\newblock Faster approximation of matrix coherence and statistical leverage.
\newblock \emph{Journal of Machine Learning Research} \textbf{13}, 3475--3506.

\bibitem[Drineas \emph{et~al.}(2011)Drineas, Mahoney, Muthukrishnan, and
  Sarlos]{Drineas:11}
Drineas, P., Mahoney, M., Muthukrishnan, S., and Sarlos, T. (2011).
\newblock Faster least squares approximation.
\newblock \emph{Numerische Mathematik} \textbf{117}, 219--249.

\bibitem[Drineas \emph{et~al.}(2006)Drineas, Mahoney, and
  Muthukrishnan]{drineas2006sampling}
Drineas, P., Mahoney, M.~W., and Muthukrishnan, S. (2006).
\newblock Sampling algorithms for $l_2$ regression and applications.
\newblock In \emph{Proceedings of the seventeenth annual ACM-SIAM symposium on
  Discrete algorithm},  1127--1136. Society for Industrial and Applied
  Mathematics.

\bibitem[Fan and Lv(2008)]{fan2008sure}
Fan, J. and Lv, J. (2008).
\newblock Sure independence screening for ultrahigh dimensional feature space.
\newblock \emph{Journal of the Royal Statistical Society: Series B (Statistical
  Methodology)} \textbf{70}, 5, 849--911.

\bibitem[Fonollosa \emph{et~al.}(2015)Fonollosa, Sheik, Huerta, and
  Marco]{fonollosa2015reservoir}
Fonollosa, J., Sheik, S., Huerta, R., and Marco, S. (2015).
\newblock Reservoir computing compensates slow response of chemosensor arrays
  exposed to fast varying gas concentrations in continuous monitoring.
\newblock \emph{Sensors and Actuators B: Chemical} \textbf{215}, 618--629.

\bibitem[Galambos(1987)]{galambos1987asymptotic}
Galambos, J. (1987).
\newblock \emph{The asymptotic theory of extreme order statistics}.
\newblock Florida: Robert E. Krieger.

\bibitem[Goodson(2011)]{goodson2011mathematical}
Goodson, D.~Z. (2011).
\newblock \emph{Mathematical methods for physical and analytical chemistry}.
\newblock John Wiley \& Sons.

\bibitem[Hadamard(1893)]{hadamard1893}
Hadamard, J. (1893).
\newblock R{\'e}solution d'une question relative aux d{\'e}terminants.
\newblock \emph{Bull. sci. math} \textbf{17}, 1, 240--246.

\bibitem[Hall(1979)]{hall1979relative}
Hall, P. (1979).
\newblock On the relative stability of large order statistics.
\newblock In \emph{Mathematical Proceedings of the Cambridge Philosophical
  Society}, vol.~86,  467--475. Cambridge Univ Press.

\bibitem[Kiefer(1959)]{kiefer1959}
Kiefer, J. (1959).
\newblock Optimum experimental designs.
\newblock \emph{Journal of the Royal Statistical Society. Series B
  (Methodological)} \textbf{21}, 2, 272--319.

\bibitem[Lin and Xie(2011)]{LinXie2011}
Lin, N. and Xie, R. (2011).
\newblock Aggregated estimating equation estimation.
\newblock \emph{Statistics and Its Interface} \textbf{4}, 73--83.

\bibitem[Ma \emph{et~al.}(2014)Ma, Mahoney, and Yu]{PingMa2014-ICML}
Ma, P., Mahoney, M., and Yu, B. (2014).
\newblock A statistical perspective on algorithmic leveraging.
\newblock In \emph{Proceedings of the 31st International Conference on Machine
  Learning (ICML-14)},  91--99.

\bibitem[Ma \emph{et~al.}(2015)Ma, Mahoney, and Yu]{PingMa2014-JMLR}
Ma, P., Mahoney, M., and Yu, B. (2015).
\newblock A statistical perspective on algorithmic leveraging.
\newblock \emph{Journal of Machine Learning Research} \textbf{16}, 861--911.

\bibitem[Ma and Sun(2015)]{Ma2014}
Ma, P. and Sun, X. (2015).
\newblock Leveraging for big data regression.
\newblock \emph{Wiley Interdisciplinary Reviews: Computational Statistics}
  \textbf{7}, 1, 70--76.

\bibitem[Mart\'{i}nez(2004)]{Martinez2004}
Mart\'{i}nez, C. (2004).
\newblock On partial sorting.
\newblock Tech. rep., 10th Seminar on the Analysis of Algorithms.

\bibitem[Meinshausen \emph{et~al.}(2009)Meinshausen, Yu,
  \emph{et~al.}]{meinshausen2009lasso}
Meinshausen, N., Yu, B., \emph{et~al.} (2009).
\newblock Lasso-type recovery of sparse representations for high-dimensional
  data.
\newblock \emph{The Annals of Statistics} \textbf{37}, 1, 246--270.

\bibitem[Musser(1997)]{musser1997introspective}
Musser, D.~R. (1997).
\newblock Introspective sorting and selection algorithms.
\newblock \emph{Software: Practice and Experience} \textbf{27}, 8, 983--993.

\bibitem[Nordstr{\"o}m(2011)]{Nordstrom2011}
Nordstr{\"o}m, K. (2011).
\newblock Convexity of the inverse and {M}oore--{P}enrose inverse.
\newblock \emph{Linear Algebra and its Applications} \textbf{434}, 6,
  1489--1512.

\bibitem[{R Core Team}(2015)]{R}
{R Core Team} (2015).
\newblock \emph{R: A Language and Environment for Statistical Computing}.
\newblock R Foundation for Statistical Computing, Vienna, Austria.

\bibitem[Schifano \emph{et~al.}(2016)Schifano, Wu, Wang, Yan, and
  Chen]{schifano2016online}
Schifano, E.~D., Wu, J., Wang, C., Yan, J., and Chen, M.-H. (2016).
\newblock Online updating of statistical inference in the big data setting.
\newblock \emph{Technometrics} \textbf{58}, 3, 393--403.

\bibitem[Stroustrup(1986)]{stroustrup1986c}
Stroustrup, B. (1986).
\newblock \emph{The C++ programming language}.
\newblock Pearson Education India.

\bibitem[Thompson \emph{et~al.}(1992)Thompson, Sowers, Frongillo, and
  Parpia]{Thompson1992}
Thompson, E.~E., Sowers, M., Frongillo, E., and Parpia, B. (1992).
\newblock Sources of fiber and fat in diets of u.s. women aged 19 to 50:
  implications for nutrition education and policy.
\newblock \emph{American Journal of Public Health} \textbf{82}, 695--702.

\bibitem[Tibshirani(1996)]{tibshirani1996regression}
Tibshirani, R. (1996).
\newblock Regression shrinkage and selection via the lasso.
\newblock \emph{Journal of the Royal Statistical Society. Series B
  (Methodological)}  267--288.

\bibitem[Wang \emph{et~al.}(2017)Wang, Zhu, and Ma]{WangZhuMa2017}
Wang, H., Zhu, R., and Ma, P. (2017).
\newblock Optimal subsampling for large sample logistic regression.
\newblock \emph{Journal of the American Statistical Association} \textbf{0},
  ja, 0--0.

\end{thebibliography}

\begin{thebibliography}{}

\bibitem[Battey \emph{et~al.}(2015)Battey, Fan, Liu, Lu, and
  Zhu]{battey2015distributed}
Battey, H., Fan, J., Liu, H., Lu, J., and Zhu, Z. (2015).
\newblock Distributed estimation and inference with statistical guarantees.
\newblock \emph{arXiv preprint arXiv:1509.05457} .

\bibitem[Friedman \emph{et~al.}(2010)Friedman, Hastie, and
  Tibshirani]{friedman2010glmnet}
Friedman, J., Hastie, T., and Tibshirani, R. (2010).
\newblock Regularization paths for generalized linear models via coordinate
  descent.
\newblock \emph{Journal of Statistical Software} \textbf{33}, 1, 1--22.

\bibitem[Zou and Hastie(2005)]{zou2005regularization}
Zou, H. and Hastie, T. (2005).
\newblock Regularization and variable selection via the elastic net.
\newblock \emph{Journal of the Royal Statistical Society: Series B (Statistical
  Methodology)} \textbf{67}, 2, 301--320.

\end{thebibliography}

\newpage
\setcounter{equation}{0}
\setcounter{page}{1}
\setcounter{section}{0}
\setcounter{figure}{0}
\renewcommand{\theequation}{S.\arabic{equation}}
\renewcommand{\thesection}{S.\arabic{section}}
\renewcommand{\thefigure}{S.\arabic{figure}}

\begin{center}\Large
  Supplementary Material\\ for ``Information-Based Optimal Subdata Selection for Big Data Linear Regression''
\end{center}

\spacingset{1.45}

% \section{Additional numerical results}
We present additional numerical results about the performance of the IBOSS method. 
\section{Predictive performance}
In this section, we investigate the performance of IBOSS in predicting the mean response for a given setting of covariates. We focus on the mean squared prediction error (MSPE), % that is defined to be
% .  of predicting an individual response is
  \begin{align}\label{eq:s1}
    \text{MSPE}
    =E[\{E(y_{new})-\hat{y}_{new}\}^2]
    =E[\{\x_{new}\tp(\hat{\bbeta}-\bbeta)\}^2].
  \end{align}
Note that the mean squared prediction error for predicting a future response is 
  \begin{align}
    E\{(y_{new}-\hat{y}_{new})^2\}
    &=E[\{y_{new}-E(y_{new})\}^2]+E[\{E(y_{new})-\hat{y}_{new}\}^2]
    =\sigma^2+\text{MSPE},
  \end{align}
and the variance of $y_{new}$, $\sigma^2$, cannot be reduced by choosing a better subdata or a larger subdata sample size $k$. Thus it is reasonable to focus on the MSPE in \eqref{eq:s1} to evaluate the performance of IBOSS. For prediction, the estimation of $\beta_0$ is also important, so we use $\hat\beta_0^{Da}=\bar y-\bar\z\tp\hat{\bbeta}^{\mathrm{D}}_1$ as indicated in the paper. 

We use the same five cases considered in the paper to generate full
data sets. In addition, we consider another case, Case 6, in which the covariates are from a multivariate $t$ distribution with degrees of freedom $\nu=1$. This is a case often used in evaluating the performance of the LEV method. 
\begin{enumerate}[{Case} 6., leftmargin=*]
  \addtolength{\itemsep}{-0.5\baselineskip}
\item $\z_i$'s have a multivariate $t$ distribution with degrees of freedom $\nu=1$, i.e., $\z_i\sim t_1(\mathbf{0}, \bSigma)$.
\end{enumerate}
For each case, we implement different methods to obtain parameter estimates, and
then generate a new sample of size 5,000 to calculate the MSPEs. The
simulation is repeated 1,000 times and empirical MSPEs are calculated.
Figure~\ref{fig:s1} presents plots of the log$_{10}$ of the MSPEs
against log$_{10}(n)$. For prediction, the relative performance of IBOSS compared with
other methods are similar to that of parameter estimation. That
is, the D-OPT IBOSS method uniformly dominates the subsampling-based
methods UNI and LEV, and its advantage is more significant if the
tail of the covariate distribution is heavier. Specifically for Case 6, it is seen that the performance of D-OPT IBOSS is almost identical to that of the full data approach, and LEV significantly outperforms the UNI. 
      
\begin{figure} % [H]
  \centering
  \begin{subfigure}{0.49\textwidth}
    \includegraphics[width=\textwidth]{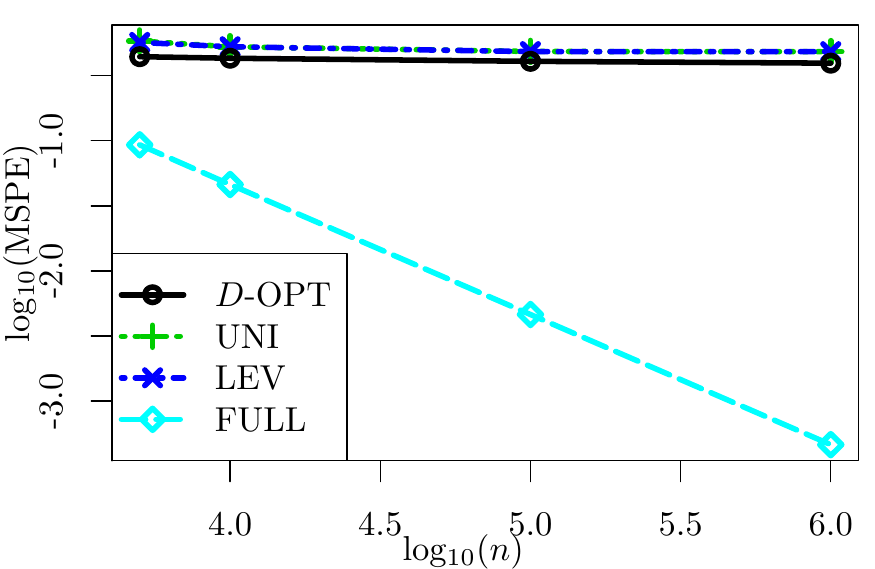}\\[-10mm]
    \caption{Case 1: $\z_i$'s are normal.}
  \end{subfigure}
  \begin{subfigure}{0.49\textwidth}
    \includegraphics[width=\textwidth]{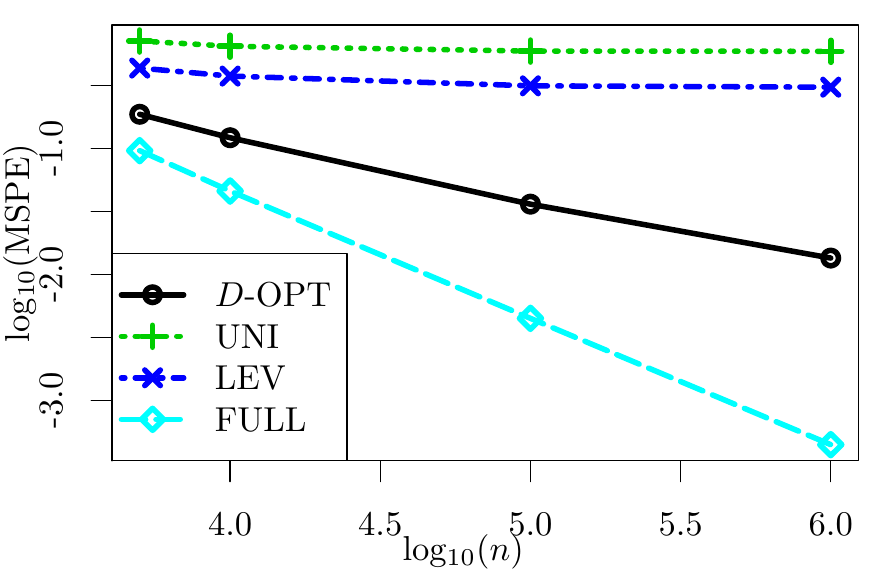}\\[-10mm]
    \caption{Case 2: $\z_i$'s are lognormal.}
  \end{subfigure}\\[3mm]
  \begin{subfigure}{0.49\textwidth}
    \includegraphics[width=\textwidth]{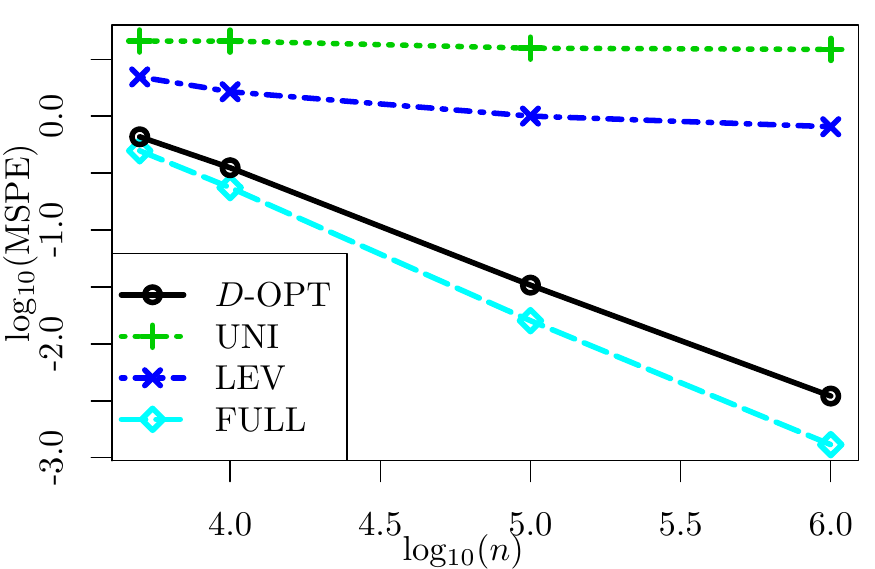}\\[-10mm]
    \caption{Case 3: $\z_i$'s are $t_2$.}
  \end{subfigure}
  \begin{subfigure}{0.49\textwidth}
    \includegraphics[width=\textwidth]{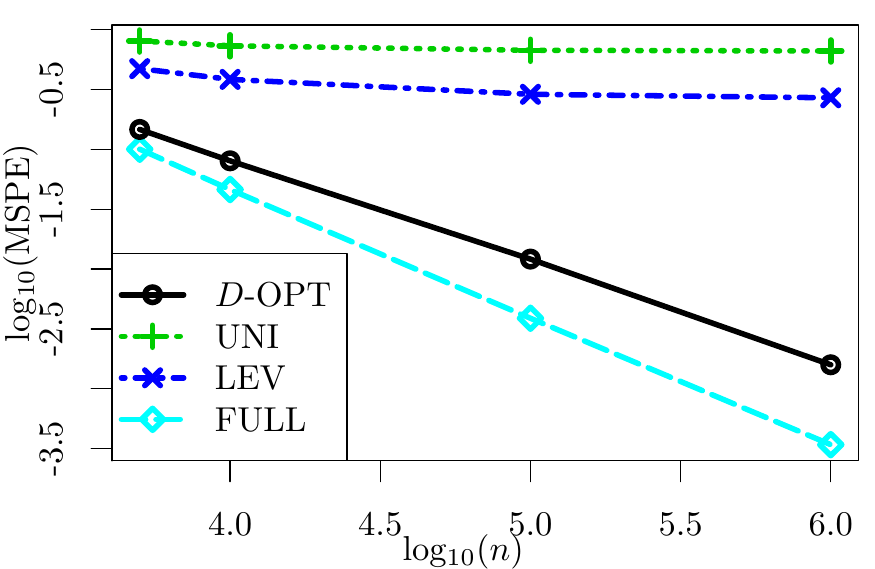}\\[-10mm]
    \caption{Case 4: $\z_i$'s are a mixture.}
  \end{subfigure}\\[3mm]
  \begin{subfigure}{0.49\textwidth}
    \includegraphics[width=\textwidth]{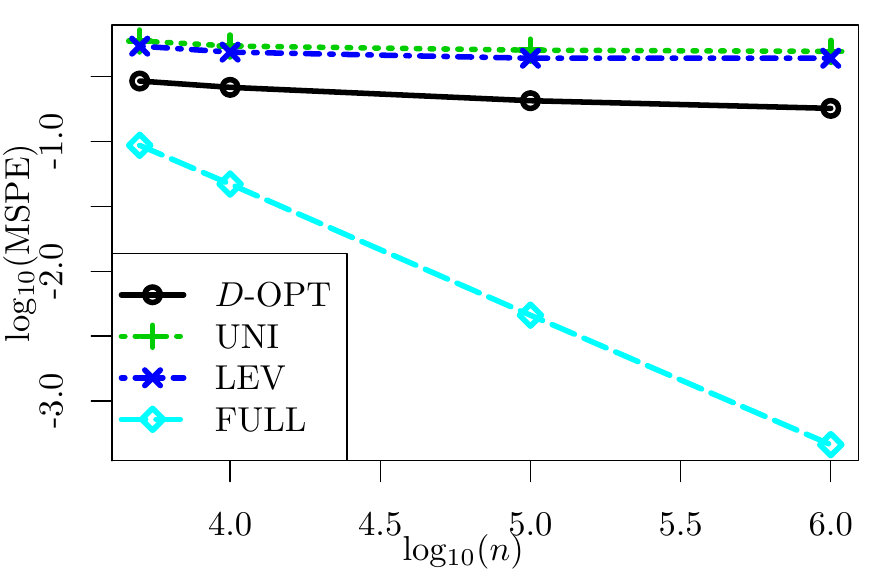}\\[-10mm]
    \caption{Case 5: $\z_i$'s include interaction terms.}
  \end{subfigure}
  \begin{subfigure}{0.49\textwidth}
    \includegraphics[width=\textwidth]{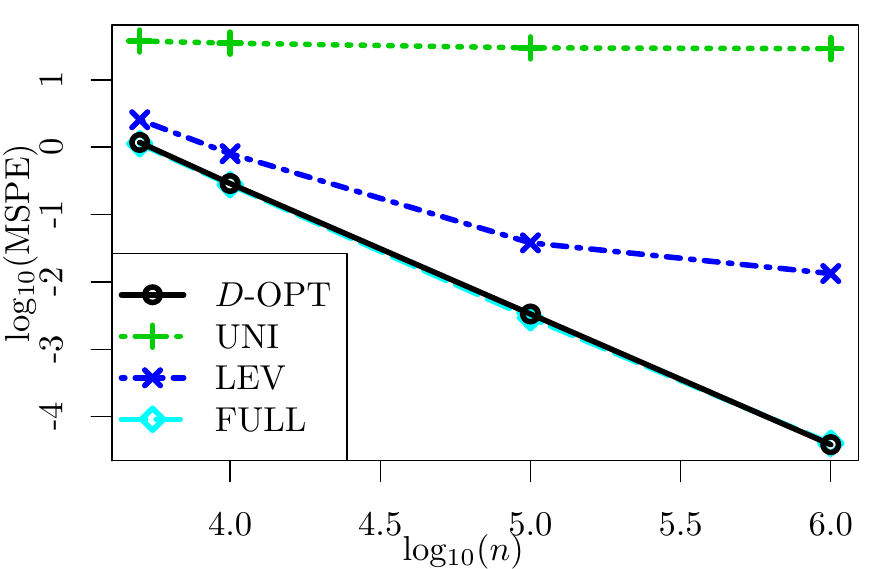}\\[-10mm]
    \caption{Case 6: $\z_i$'s are $t_1$.}
  \end{subfigure}
  \caption{MSPEs for predicting mean responses for six
    different distributions of the covariates $\z_i$. The subdata
    size $k$ is fixed at $k=1000$ and the full data size $n$
    changes. Logarithm with base 10 is taken of $n$ and MSPEs for
    better presentation of the figures.}
  \label{fig:s1}
\end{figure}

\section{Column permutation}
In this section, we provide numerical results accessing the effect of column permutation on the IBOSS method. To differentiate the effect of each column in the covariate matrix, we change the covariance matrix $\bSigma$ such that $\Sigma_{ij}=0.5^{|i-j|}$ if $i\neq j$, and $\Sigma_{ij}=1+3(i-1)/p$ if $i=j$, $i,j=1, ..., 50$. With this setup, the correlation structure for the covariates is unexchangeable and variances for different columns are different. Using this covariance matrix, we generate covariates $\z_i$'s according to Case 5 in Section 5.1 of the paper. The IBOSS method is applied with the original order of covariate columns as well as with a single random permutation of covariate columns. Results are presented in Figure~\ref{fig:s5}. It is seen that the performances of IBOSS for the two approaches are very similar. This agrees with the theoretical results. 

\begin{figure} % [H]
  \centering
  \begin{subfigure}{0.49\textwidth}
    \includegraphics[width=\textwidth]{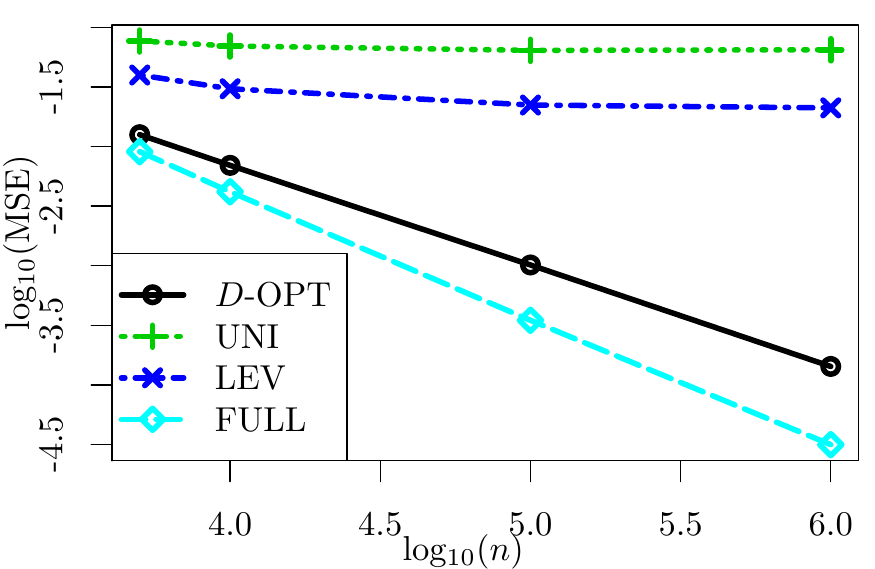}\\[-10mm]
    \caption{Original order, slope parameter}
  \end{subfigure}
  \begin{subfigure}{0.49\textwidth}
    \includegraphics[width=\textwidth]{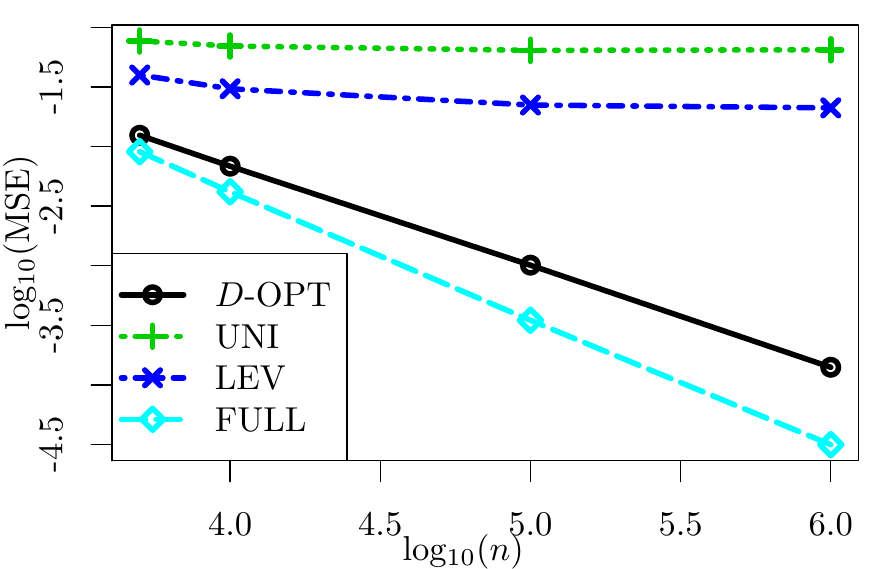}\\[-10mm]
    \caption{Shuffled order, slope parameter}
  \end{subfigure}\\[3mm]
  \begin{subfigure}{0.49\textwidth}
    \includegraphics[width=\textwidth]{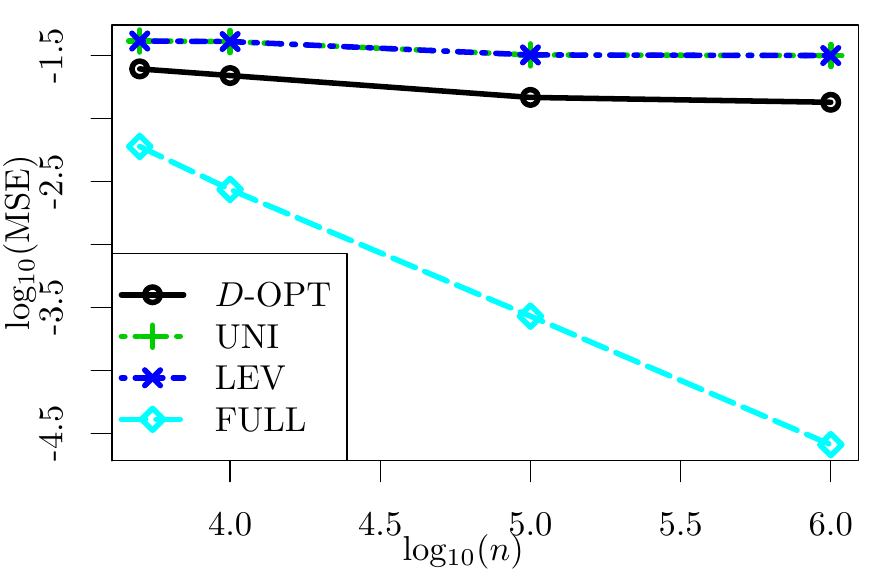}\\[-10mm]
    \caption{Original order, intercept parameter}
  \end{subfigure}
  \begin{subfigure}{0.49\textwidth}
    \includegraphics[width=\textwidth]{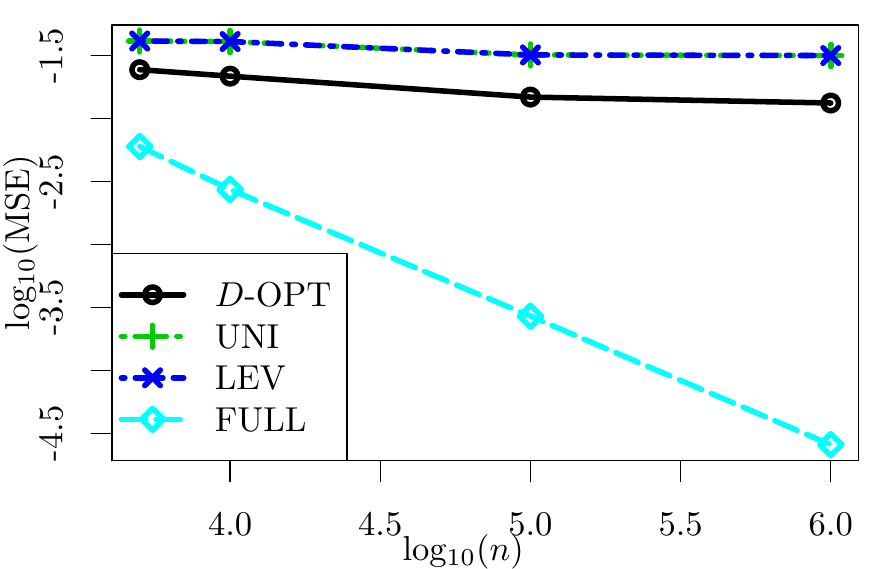}\\[-10mm]
    \caption{Shuffled order, intercept parameter}
  \end{subfigure}
  \caption{MSEs for estimating the slope parameter (top panel) and the intercept parameter (bottom panel) with different orders of the covariate columns. The left panel presents results with the original order of covariate columns and the right panel presents results with the randomly shuffled order of covariate columns. The subdata size $k$ is fixed at $k=1000$ and the full data size $n$
    changes. Logarithm with base 10 is taken of $n$ and MSEs for
    better presentation of the figures.}
  \label{fig:s5}
\end{figure}

\section{Interaction model}
In this section, we consider a case that the true model contains all the main effects and all the pairwise interaction terms. However, only the main effects are used in selecting subdata. 
Data are generated from the following linear model,
\begin{align}
  y_i&=\beta_0+\sum_{j=1}^{10}z_{ij}\beta_j
       +\sum_{j_1\neq j_2}^{10}z_{ij_1}z_{ij_2}\beta_{j_1j_2}+\varepsilon_i,\
       \quad i=1,...,n,
\end{align}
where the true value of regression coefficients are
$\beta_j=\beta_{j_1j_2}=1$ for $j,j_1,j_2=1,...,10$, and $\varepsilon_i$'s are
i.i.d. $N(0,9)$. Two different distributions are considered to generate covariates $\z_i$'s: one is a multivariate
normal distribution $\z_i\sim N(\mathbf{0}, \bSigma_{10\times10})$ and the other is a multivariate
lognormal distribution $\z_i\sim LN(\mathbf{0}, \bSigma_{10\times10})$,
where $\bSigma_{10\times10}$ is a 10 by 10 covariance matrix with $\Sigma_{ij}=0.5^{I(i\neq j)}$, for $i, j=1, ..., 10$. In selecting subdata, only the main effects are used. The interaction terms are not used in subdata selection but are used in parameter estimation.

Figure~\ref{fig:s2} presents the MSEs for estimating the slope parameters, which are calculated from 1000 iterations of the simulation. It is seen that IBOSS is still the most efficient method among subdata-based methods for both of the distributions.

\begin{figure}[H]
  \centering
  \begin{subfigure}{0.49\textwidth}
    \includegraphics[width=\textwidth]{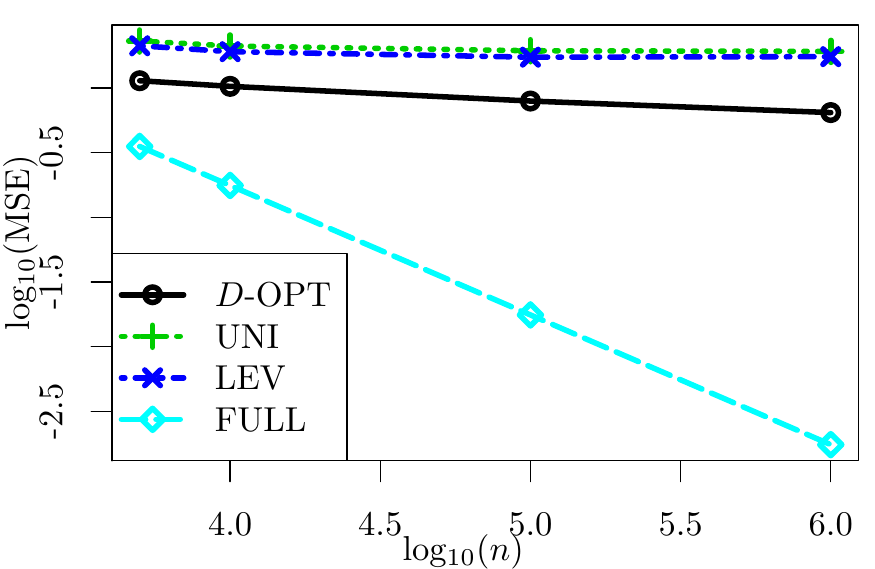}\\[-10mm]
    \caption{Case 1: $\z_i$'s are normal.}
  \end{subfigure}
  \begin{subfigure}{0.49\textwidth}
    \includegraphics[width=\textwidth]{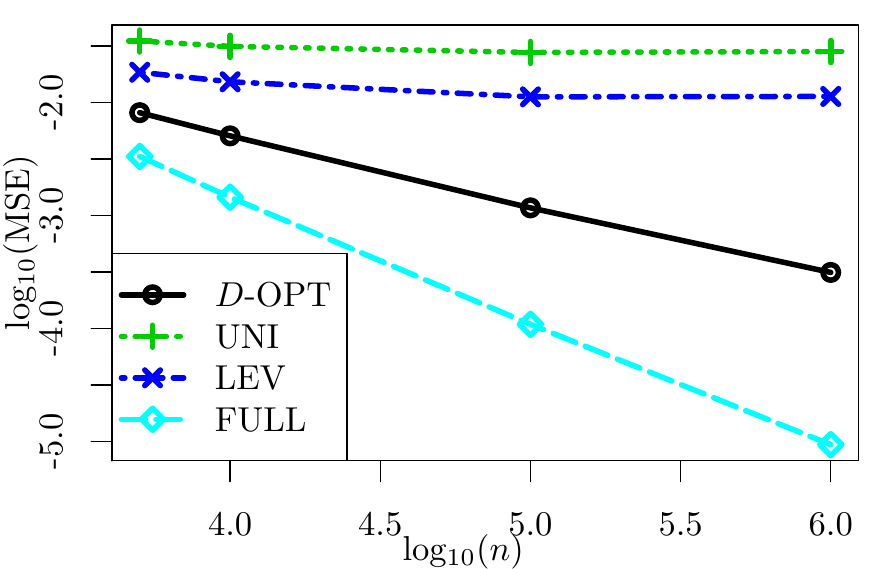}\\[-10mm]
    \caption{Case 2: $\z_i$'s are lognormal.}
  \end{subfigure}
  \caption{MSEs for estimating the slope parameter for two
    different distributions of the covariates $\z_i$. The subdata
    size $k$ is fixed at $k=1000$ and the full data size $n$
    changes. Logarithm with base 10 is taken of $n$ and MSEs for
    better presentation of the figures.}
  \label{fig:s2}
\end{figure}

\section{Nonlinear relationships}
In this section, we consider the scenario that true relationships between the response and the covariates are nonlinear, and transformations cannot linearize the relationships, i.e., a finite-dimensional linear model cannot be correct. We consider the following two models

% Y  <- c(X %*% beta0) + 3*exp(xb)/(1+exp(xb)) + rnorm(n, 0, 3)
% if (model == 2)
% Y  <- c(X %*% beta0) + 30*log(1+exp(xb)) + rnorm(n, 0, 3)
\begin{align}
  y_i&=\beta_0+\sum_{j=1}^{p-1}z_{ij}\beta_j
       +\frac{3e^{z^{(t)}_{ip}}}{1+e^{z^{(t)}_{ip}}}+\varepsilon_i,\
       \quad i=1,...,n,\tag{WM1}\label{WM:1}\\
  y_i&=\beta_0+\sum_{j=1}^{p-1}z_{ij}\beta_j
       +30\log\Big(1+e^{z^{(t)}_{ip}}\Big)+\varepsilon_i,\
       \quad i=1,...,n,\tag{WM2}\label{WM:2}
\end{align}
where $z^{(t)}_{ip}=z_{ip}I(z_{ip}\le100)+100I(z_{ip}>100)$. 
Covariates and parameter setups are the same as those of Case 4 for the mixture distribution. Although full data are generated from nonlinear model~\eqref{WM:1} or \eqref{WM:2}, the linear main effects model is used for subdata selection and analysis. 
% , except that the last covariate $z_{ip}$'s are set to $z_{ip}=100$ if $z_{ip}>100$. This is to keep the values of the nonlinear functions to be within the range of computer accuracy capacity. 

Figure~\ref{fig:s3} presents plots of the log$_{10}$ of the MSEs of estimating the slope parameter and the intercept parameter against log$_{10}(n)$, and plots of the log$_{10}$ of the MSPEs of predicting the mean response. It is seen that, including the full data approach, no method dominates others and larger sample sizes do not necessarily mean more accurate results. When the underlying model is incorrect, the problem is very complicated and there is no simple answer to which method will produce satisfactory results. We present the numerical studies here to show that IBOSS does not always produce the worst results for this scenario, but we have no intention to state that the IBOSS works better than other methods.

\begin{figure}  [H]
  \centering
  \begin{subfigure}{0.49\textwidth}
    \includegraphics[width=\textwidth]{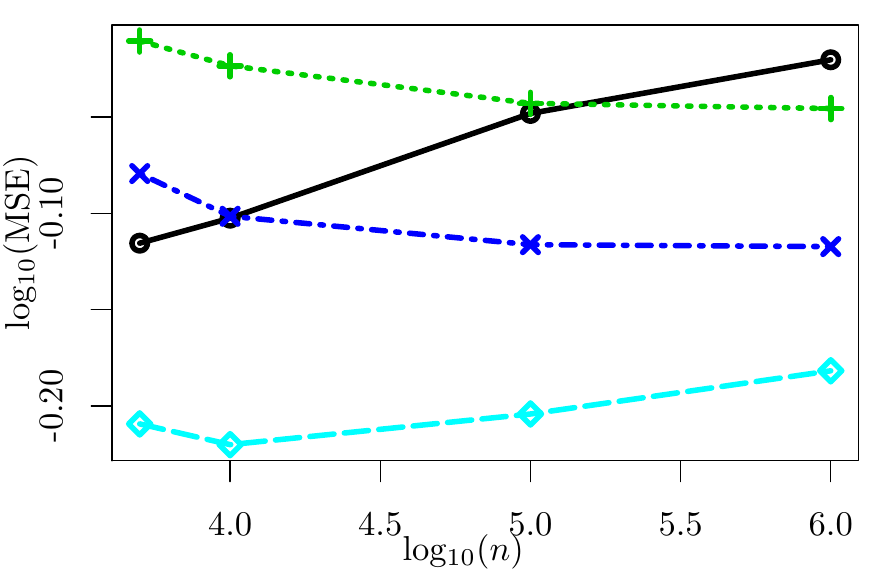}\\[-10mm]
    \caption{Model~\eqref{WM:1}, slope parameter}
  \end{subfigure}
  \begin{subfigure}{0.49\textwidth}
    \includegraphics[width=\textwidth]{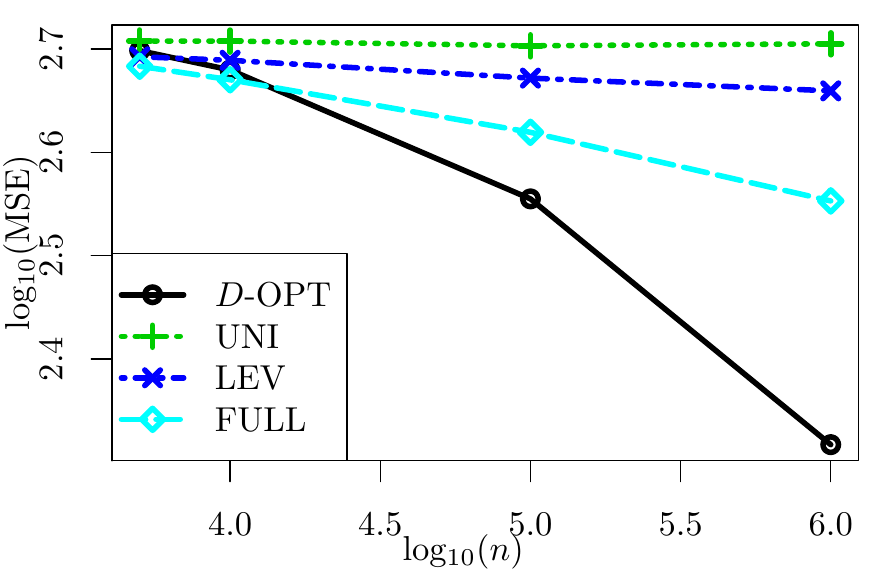}\\[-10mm]
    \caption{Model~\eqref{WM:2}, slope parameter}
  \end{subfigure}\\[3mm]
  \begin{subfigure}{0.49\textwidth}
    \includegraphics[width=\textwidth]{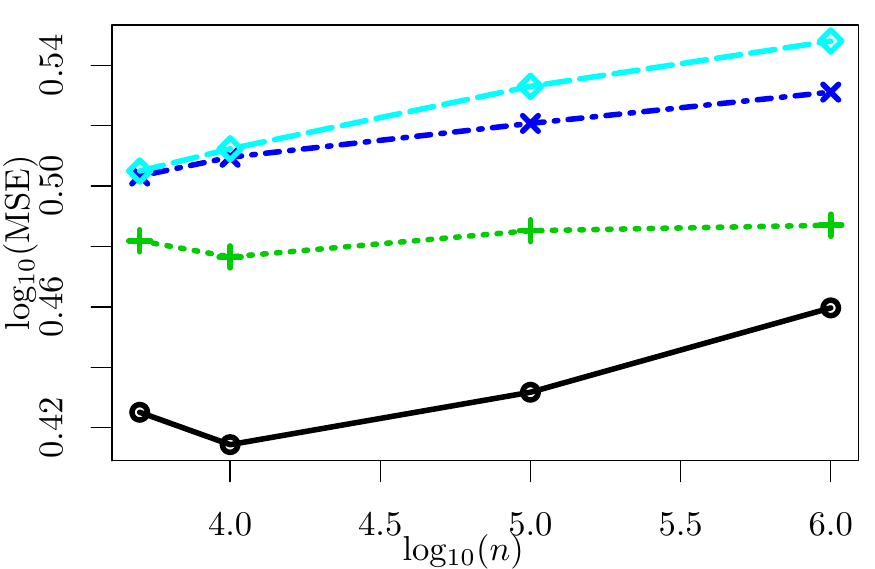}\\[-10mm]
    \caption{Model~\eqref{WM:1}, intercept parameter}
  \end{subfigure}
  \begin{subfigure}{0.49\textwidth}
    \includegraphics[width=\textwidth]{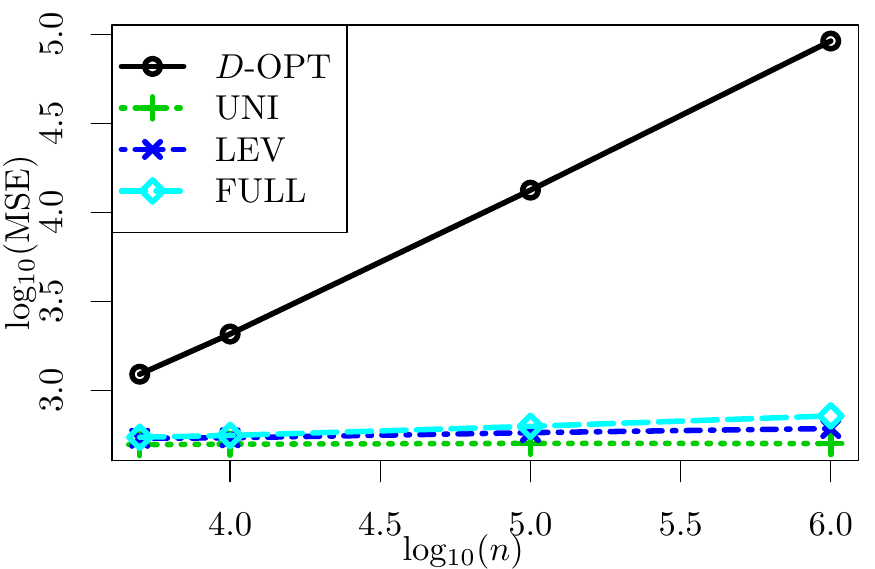}\\[-10mm]
    \caption{Model~\eqref{WM:2}, intercept parameter}
  \end{subfigure}\\[3mm]
  \begin{subfigure}{0.49\textwidth}
    \includegraphics[width=\textwidth]{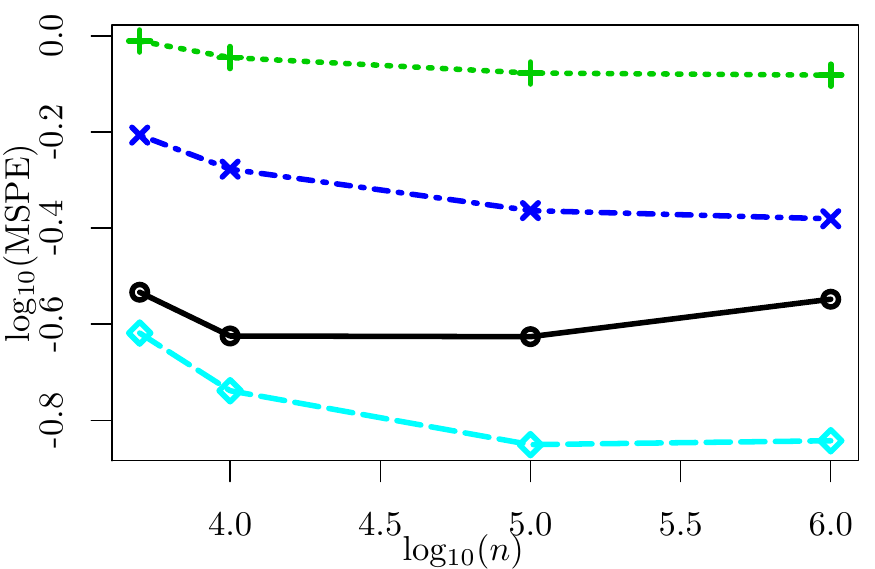}\\[-10mm]
    \caption{Model~\eqref{WM:1}, prediction}
  \end{subfigure}
  \begin{subfigure}{0.49\textwidth}
    \includegraphics[width=\textwidth]{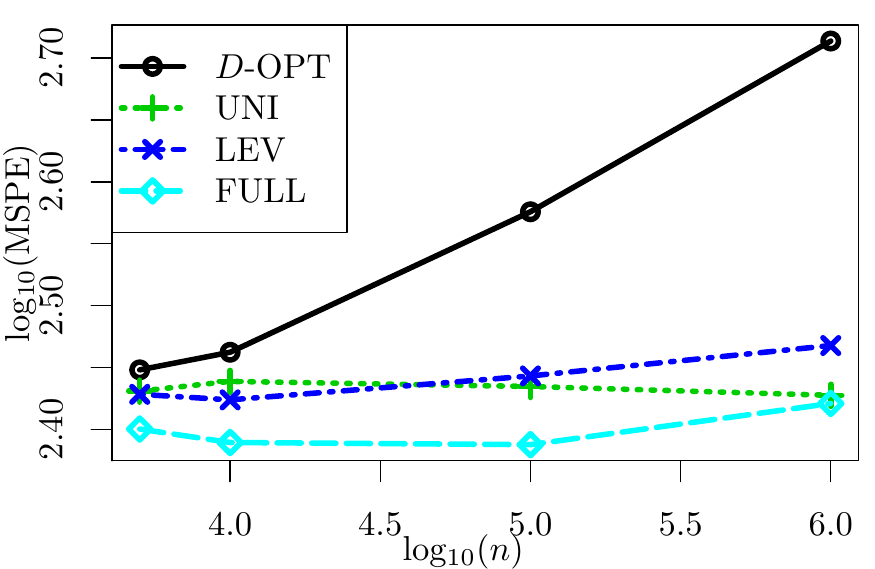}\\[-10mm]
    \caption{Model~\eqref{WM:2}, prediction}
  \end{subfigure}
  \caption{MSEs for estimating the slope parameter (top row), MSEs for estimating the intercept parameter (middle row), and MSPEs for predicting the mean response (bottom row) when true models are nonlinear. The left column is for model~\eqref{WM:1} and the right column is for model~\eqref{WM:2}. The subdata
    size $k$ is fixed at $k=1000$ and the full data size $n$
    changes. Logarithm with base 10 is taken of $n$ and MSEs for
    better presentation of the figures.}
  \label{fig:s3}
\end{figure}

\section{Accuracy-cost tradeoff of the IBOSS method}
In this section, we provide additional results showing the accuracy-cost tradeoff of the IBOSS method. Full data of size $n=5\times10^6$ are generated using the same setup of Case 1. The IBOSS method is implemented with subdata sample sizes of $k=10^2, 10^3,10^4,10^5$ and $10^6$, and the average CPU times and MSEs are calculated from 100 repetitions of the simulation. Results are reported in Figure~\ref{fig:s4}. 
It is seen that as the required CPU time increases, the MSE decreases, which indicates a clear tradeoff between computational cost and estimation accuracy for the IBOSS method. However, as the CPU time increases, the MSE can drop sharply. For example, when the CPU time increases from 6.4976 seconds (corresponding to $k=10^2$) to 7.0839 seconds, the MSE decreases from 13.57091 to 0.00786855. Thus the IBOSS has the advantage to significantly increase the estimation accuracy with little increase in computational cost.

\begin{figure}[H]
  \centering
    \includegraphics[width=\textwidth]{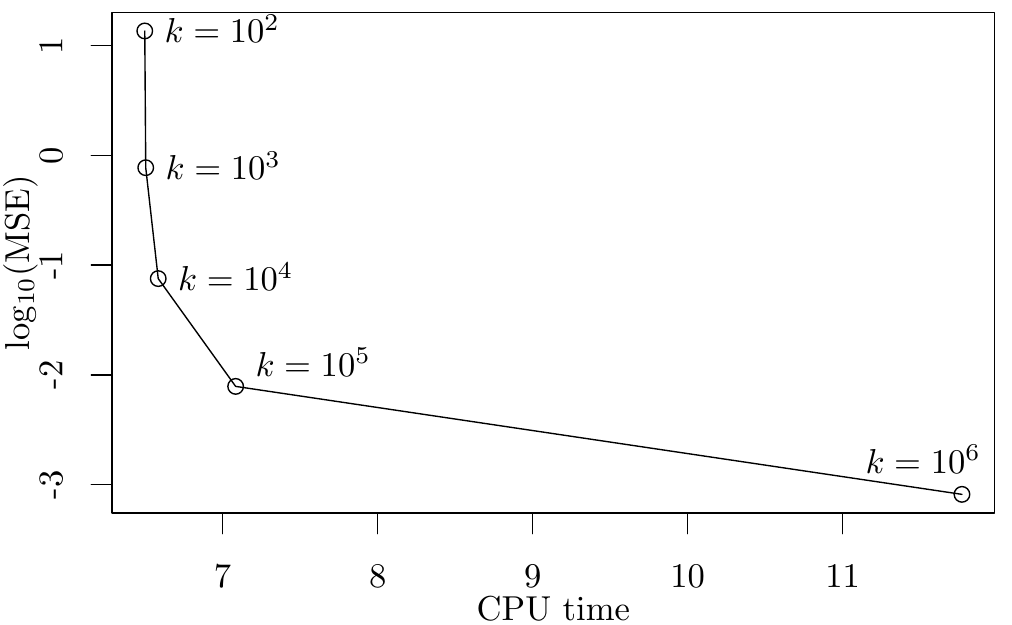}
  \caption{Average CPU times and  MSEs for different subdata sample size $k$ when the covariates are from a multivariate normal distribution. The full data size is set to $n=5\times10^6$ with a dimension $p=50$.}
  \label{fig:s4}
\end{figure}

We perform additional experiments to further investigate the accuracy-cost tradeoff of the IBOSS for both large $n$ and large $p$, and draw comparisons with the performance of repeating the UNI method. Full data are generated with $n=5\times10^5$ and $p=500$, and subdata of sizes $k=10^3, 5\times10^3, 10^4$, $5\times10^4$, and $10^5$ are taken using the IBOSS method or the UNI method. For the UNI method, it is repeated multiple times so that it consumes similar CPU times to the IBOSS method, and the average of the estimates from all repetitions are used as the final estimate. Figure~\ref{fig:s6} presents the results when the covariates are from the multivariate normal distribution (Case 1) and the mixture distribution (Case 4) described in Section 5 of the main paper. 
The average CPU times and MSEs for the slope parameters are calculated from 100 repetitions of the simulation. 
For Case 1 with multivariate normal covariates, the repeated UNI method may produce smaller MSEs compared with the IBOSS method using similar CPU times. However, the differences are not very significant compared with the advantage of the IBOSS method for Case 4, in which the covariate distribution has a heaver tail. 
% From the numerical and theoretical results in the main paper, for normal distributed covariates, the advantage of the IBOSS method compared with the UNI method may not be as significant as that for heavy-tail distributed covariates. 

\begin{figure}[H]
  \centering
  \begin{subfigure}{0.495\textwidth}
    \centering
    \includegraphics[width=\textwidth]{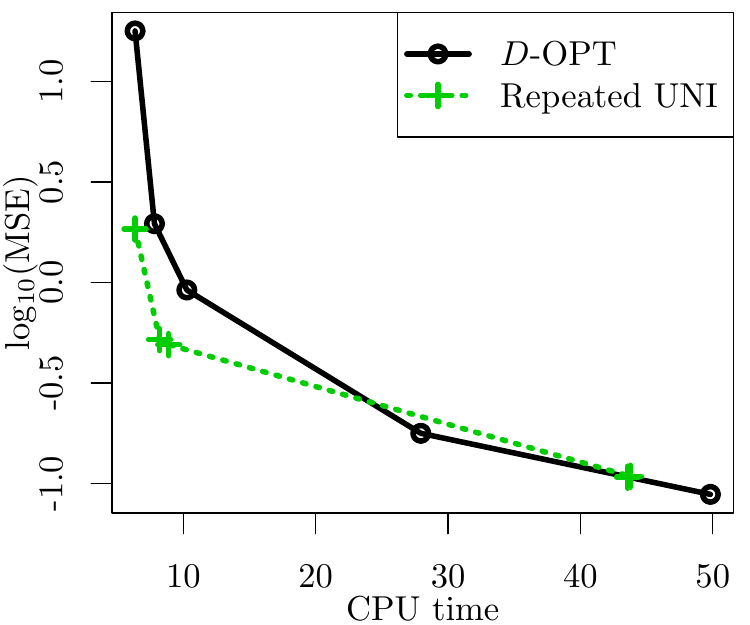}
    \caption{Case 1: $\z_i$'s are normal.}
  \end{subfigure}
  \begin{subfigure}{0.495\textwidth}
    \centering
    \includegraphics[width=\textwidth]{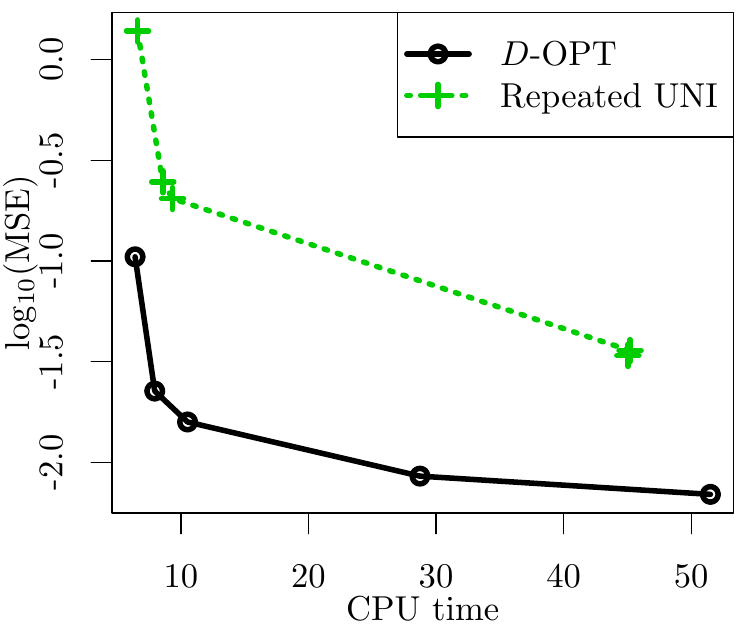}
    \caption{Case 4: $\z_i$'s are a mixture.}
  \end{subfigure}
  \caption{MSEs for different CPU times when the covariates are from a multivariate normal distribution (a) and a mixture distribution (b). The full data size is set to $n=5\times10^5$ with dimension $p=500$.  Subdata sample size are $k=10^3, 5\times10^3, 10^4$, $5\times10^4$, and $10^5$.}
  \label{fig:s6}
\end{figure}

\section{Comparison with the divide-and-conquer method}
In this section, we provide numerical results comparing the IBOSS method and the divide-and-conquer (DC) method proposed in Section 4.3 of \cite{battey2015distributed}. The DC method divide the full data into $S$ subdata sets (The notation $k$ is used in \cite{battey2015distributed}; we use $S$ here because $k$ is used to denote the subdata size.), and the ordinary least squares estimate, say $\hat{\bbeta}_s$, is calculated for each subdata. The DC estimate is the average of $\hat{\bbeta}_s$'s, i.e., $\bar{\bbeta}=S^{-1}\sum_{s=1}^S\hat{\bbeta}_s$. We choose $S=\floor{n^{1/4}}$. In our implementation, if $n/S$ is not an integer, the last subdata will have a sample size of $n-\floor{n/S}*(S-1)$.

Figure~\ref{fig:s7} gives the average CPU times and MSEs for the slope parameters with dimension $p=50$ and different full data size $n$, with choices of $5\times10^3, 10^4, 10^5$, and $10^6$. The average CPU times and MSEs are calculated from 100 repetitions of the simulation. It is seen that the relative performances of estimation efficiency between the IBOSS D-OPT method and the DC method depend on the covariate distribution. The DC method is better when covariates are normally distributed; the IBOSS D-OPT method and the DC method perform similarly when the covariate has a mixture distribution; the IBOSS D-OPT dominates the DC method when the covariate has a $t_1$ distribution. In terms of computational cost in Figure~\ref{fig:s7} (d), the IBOSS D-OPT is more efficient than the DC method especially for large values of $n$. Note that the CPU times for either the DC method or the IBOSS D-OPT method do not depend on the covariate distribution. 

\begin{figure}[H]
  \centering
  \begin{subfigure}{0.49\textwidth}
    \includegraphics[width=\textwidth]{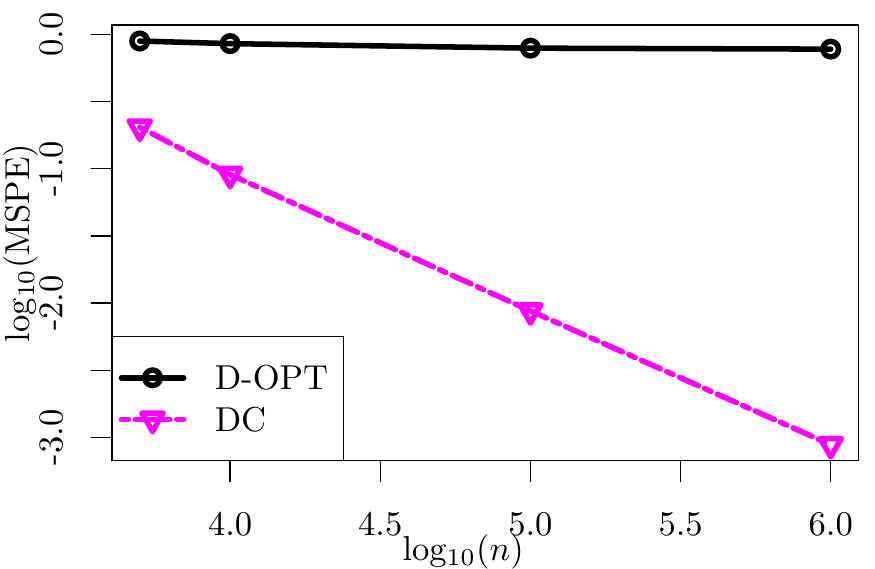}\\[-10mm]
    \caption{Case 1: $\z_i$'s are normal.}
  \end{subfigure}
  \begin{subfigure}{0.49\textwidth}
    \includegraphics[width=\textwidth]{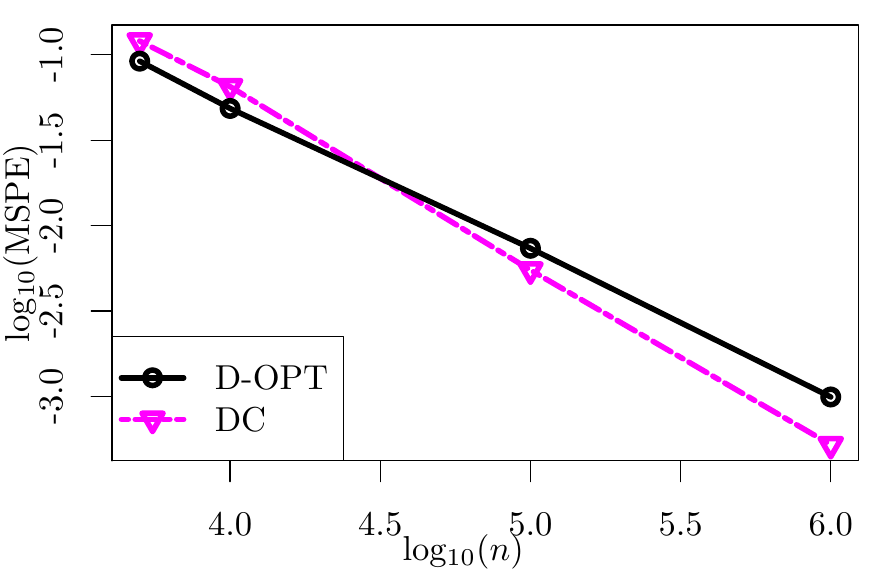}\\[-10mm]
    \caption{Case 4: $\z_i$'s are a mixture.}
  \end{subfigure}\\[3mm]
  \begin{subfigure}{0.49\textwidth}
    \includegraphics[width=\textwidth]{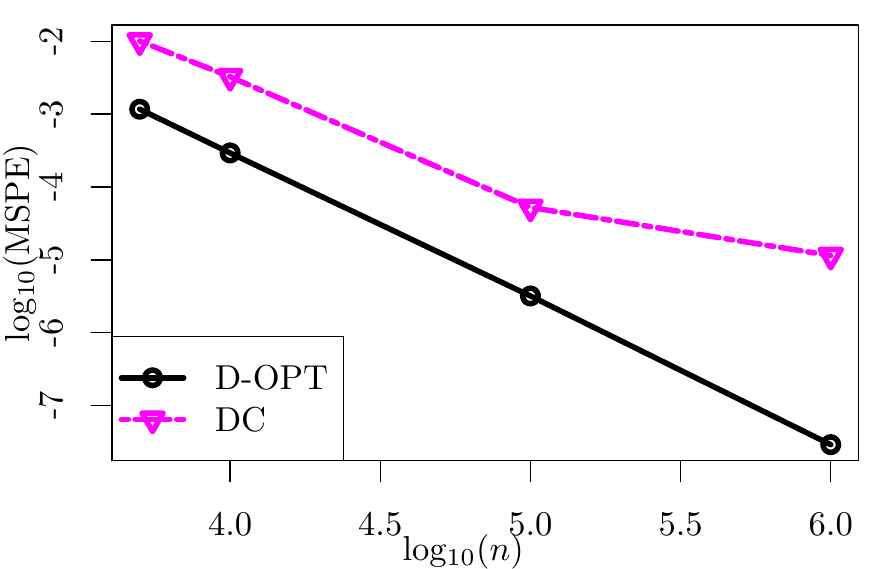}\\[-10mm]
    \caption{Case 6: $\z_i$'s are $t_1$.}
  \end{subfigure}
  \begin{subfigure}{0.49\textwidth}
    \includegraphics[width=\textwidth]{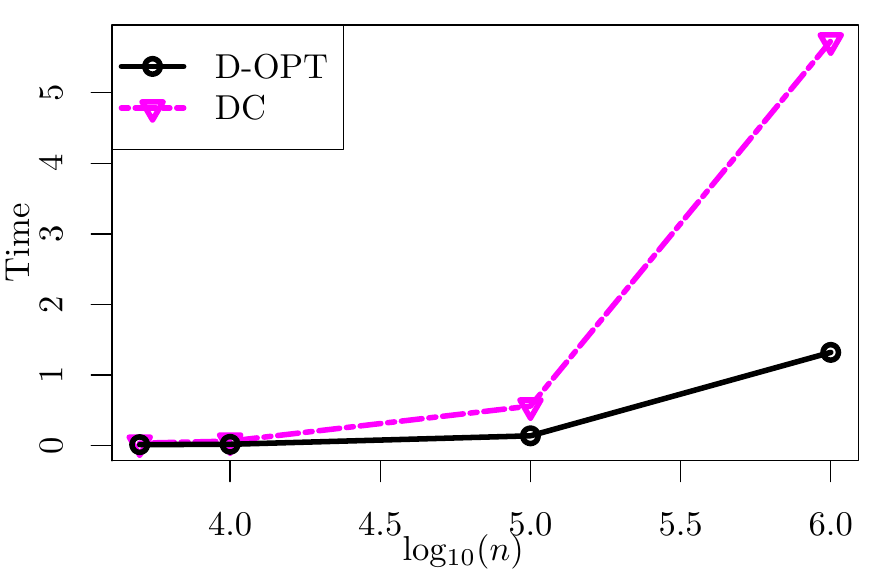}\\[-10mm]
    \caption{CPU times.}
  \end{subfigure}
  \caption{MSEs and CPU times for estimating the slope parameter:
    (a)-(c) give results for MSEs and (d) gives results for CPU
    times. The subdata size $k$ is fixed at $k=1000$ and the full data
    size $n$ changes with fix dimension $p=50$. Logarithm with base 10 is
    taken of $n$ and MSEs for better presentation of the figures.}
  \label{fig:s7}
\end{figure}

To further compare the IBOSS D-OPT method and the DC method with a larger $p$, we increase the dimension to be $p=500$. 
Figure~\ref{fig:s8} gives the average CPU times and MSEs for the slope parameters. Full data are generated with sample sizes $n=5\times10^3, 10^4, 10^5$, and $5\times10^5$. Subdata sample size for the IBOSS method is $k=1000$. It is seen that the relative performances of estimation efficiency between the IBOSS D-OPT method and the DC method depend on the covariate distribution are similar to those with $p=50$. In terms of computational cost in Figure~\ref{fig:s8} (d), the advantage of the IBOSS D-OPT method is more significant compared with the DC method.

\begin{figure}[H]
  \centering
  \begin{subfigure}{0.49\textwidth}
    \includegraphics[width=\textwidth]{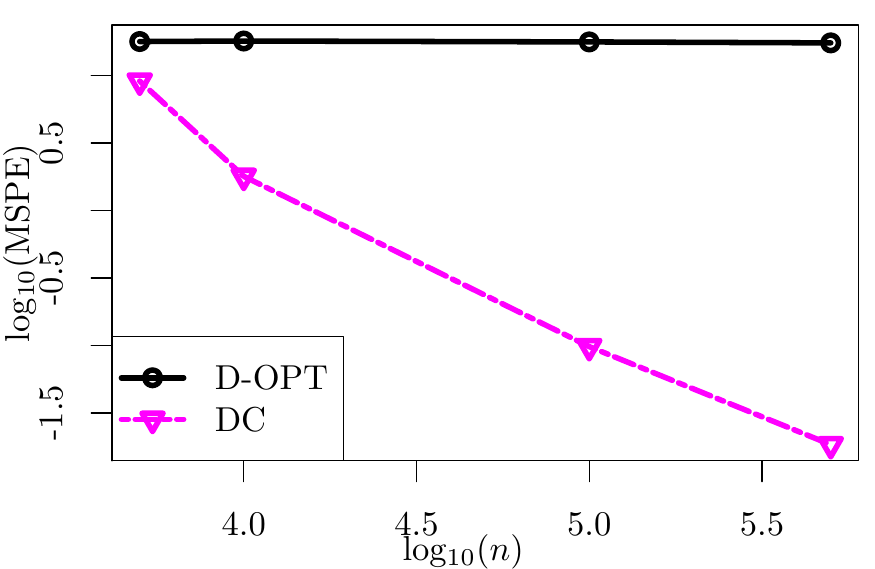}\\[-10mm]
    \caption{Case 1: $\z_i$'s are normal.}
  \end{subfigure}
  \begin{subfigure}{0.49\textwidth}
    \includegraphics[width=\textwidth]{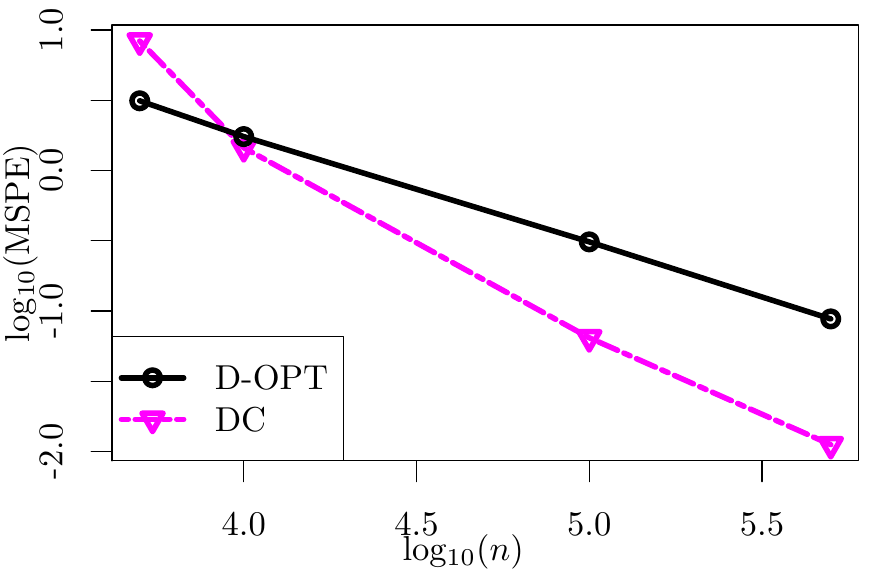}\\[-10mm]
    \caption{Case 4: $\z_i$'s are a mixture.}
  \end{subfigure}\\[3mm]
  \begin{subfigure}{0.49\textwidth}
    \includegraphics[width=\textwidth]{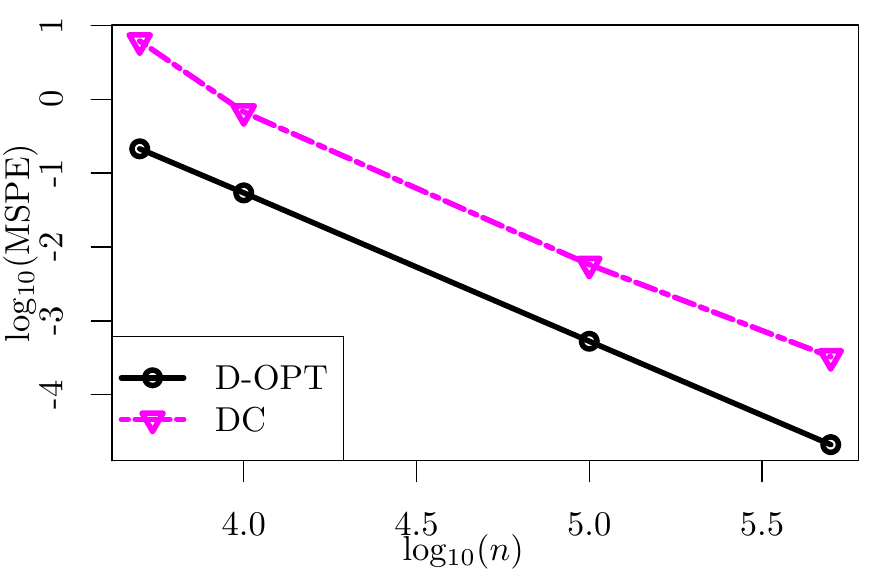}\\[-10mm]
    \caption{Case 6: $\z_i$'s are $t_1$.}
  \end{subfigure}
  \begin{subfigure}{0.49\textwidth}
    \includegraphics[width=\textwidth]{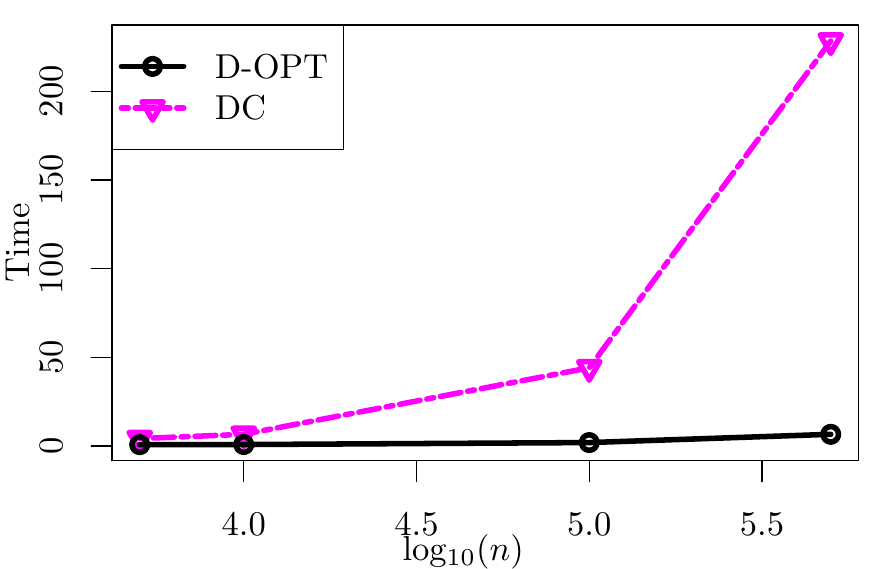}\\[-10mm]
    \caption{CPU times.}
  \end{subfigure}
  \caption{MSEs and CPU times for estimating the slope parameter:
    (a)-(c) give results for MSEs and (d) gives results for CPU
    times. The subdata size $k$ is fixed at $k=1000$ and the full data
    size $n$ changes with fixed dimension $p=500$. Logarithm with base 10 is
    taken of $n$ and MSEs for better presentation of the figures.}
  \label{fig:s8}
\end{figure}

\section{Performance of IBOSS with regularization method}
In this section, we provide numerical results to evaluate the performance of the IBOSS method in application to regularization methods. We use the IBOSS method to select subdata, and then feed it to the elastic net regularization\citep{zou2005regularization} method. 
Full data with dimension $p=60$ are generated for sample sizes $n$, with choices of $5\times10^3, 10^4, 10^5$, and $10^6$. The intercept is set to $\beta_0=1$, while the slope parameter $\bbeta_1$ has a sparse structure with the first 10 element being 0.1 and the rest 50 element being 0. The elastic net method is implemented using the glmnet R package \citep{friedman2010glmnet}. Tuning parameters are selected using the cross validation method provided in the R package.

We calculate the MSPEs based on 100 repetitions of the simulation. In each repetition, we implement different methods to obtain a subdata set of $k=1000$, apply the elastic net to the subdata set to estimate a model, and then use the model to calculate the MSPEs based on a new sample of size 5,000. Figure~\ref{fig:s9} presents the results of the simulation. It is seen that the relative performance of IBOSS compared with
other methods are similar to that of parameter estimation in the main paper. That is, the D-OPT IBOSS method uniformly dominates the subsampling-based
methods UNI and LEV, and its advantage is more significant if the
tail of the covariate distribution is heavier.

We also implement the ridge regression method. The results are similar so we omit them. 
\begin{figure}[H]
  \centering
  \begin{subfigure}{0.49\textwidth}
    \includegraphics[width=\textwidth]{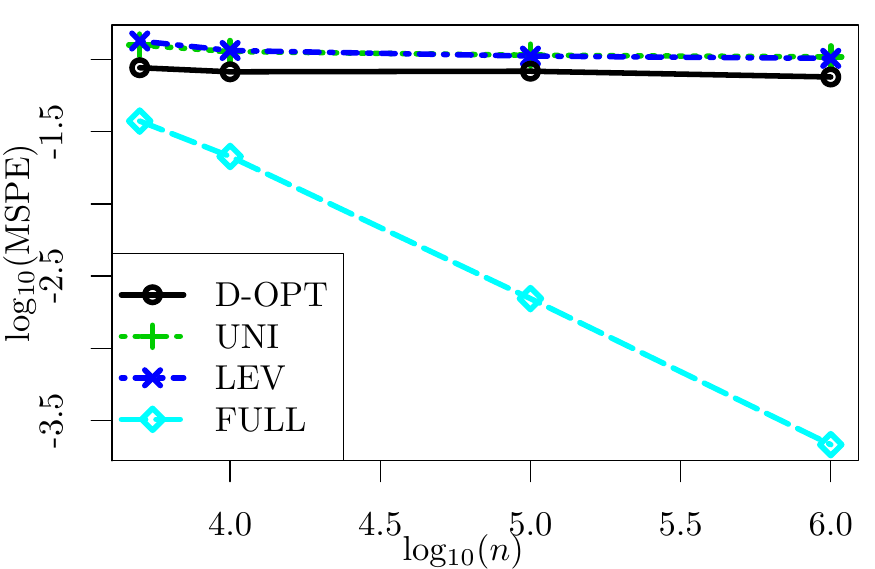}\\[-10mm]
    \caption{Case 1: $\z_i$'s are normal.}
  \end{subfigure}
  \begin{subfigure}{0.49\textwidth}
    \includegraphics[width=\textwidth]{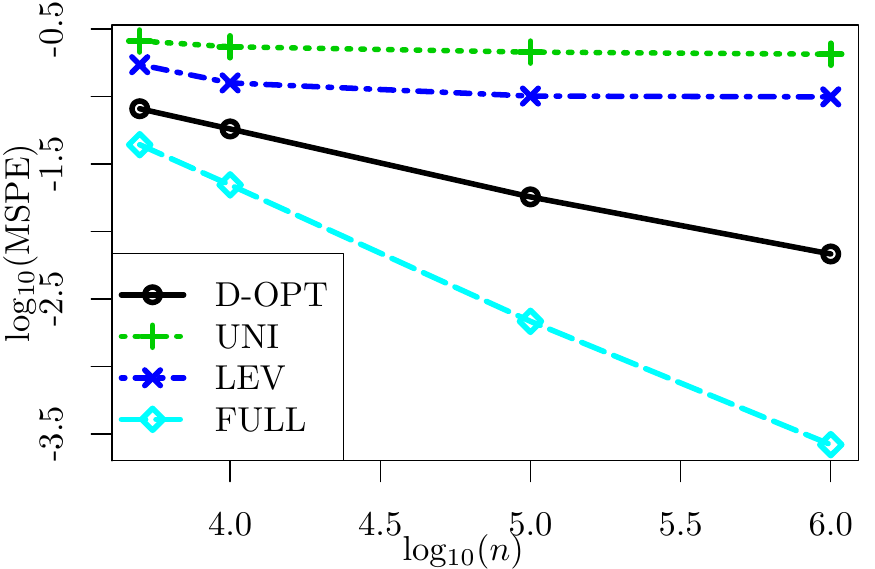}\\[-10mm]
    \caption{Case 2: $\z_i$'s are lognormal.}
  \end{subfigure}\\[3mm]
  \begin{subfigure}{0.49\textwidth}
    \includegraphics[width=\textwidth]{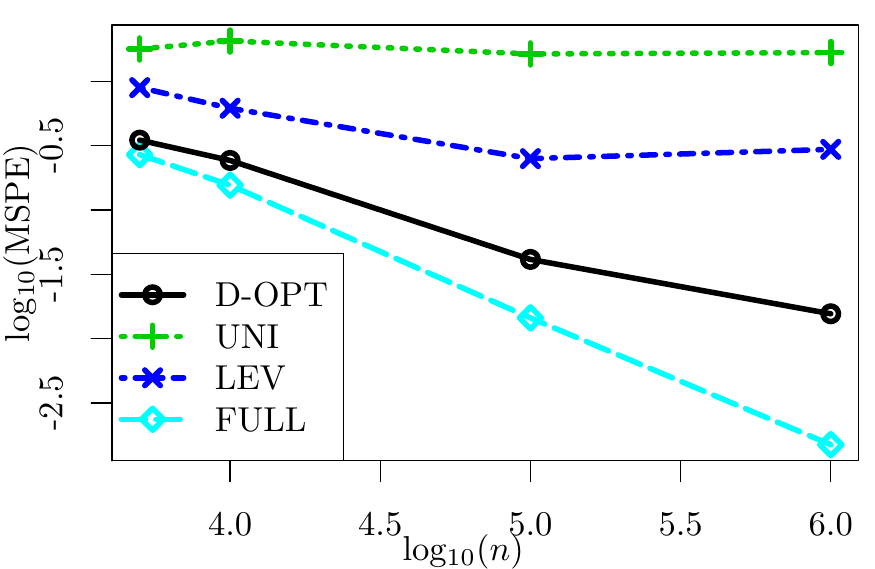}\\[-10mm]
    \caption{Case 3: $\z_i$'s are $t_2$.}
  \end{subfigure}
  \begin{subfigure}{0.49\textwidth}
    \includegraphics[width=\textwidth]{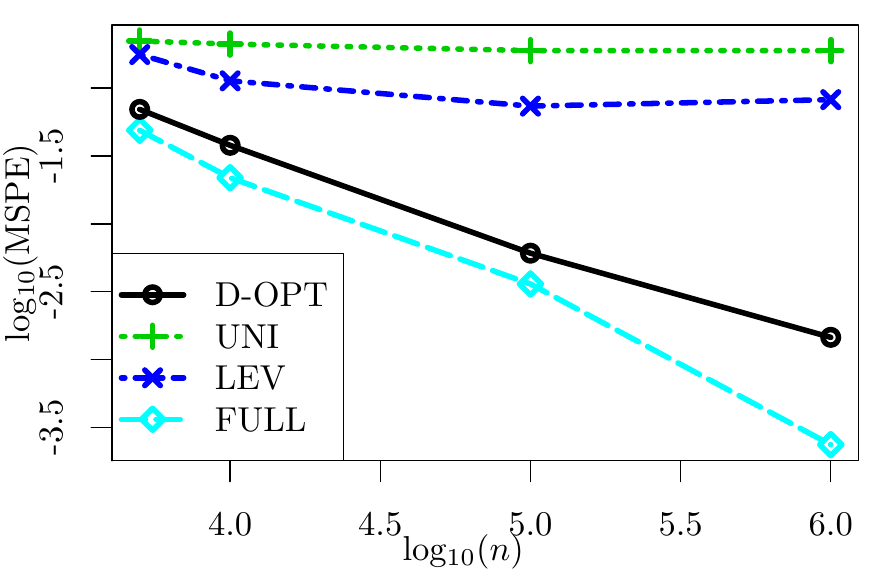}\\[-10mm]
    \caption{Case 4: $\z_i$'s are a mixture.}
  \end{subfigure}
  \caption{MSPEs for predicting mean responses using the elastic net
    method with the subdata of size $k=1000$ selected from the full data. Logarithm with base 10 is taken of the full data sample
    size $n$ and MSPEs for better presentation of the figures.}
  \label{fig:s9}
\end{figure}

\section{Unequal variance}
In this section, we provide a simple numerical study to evaluate the
performance of the IBOSS method when the error term in the linear
model is heteroscedastic. We use same setup in the main paper to
generate the full data except that the standard deviations of the
error terms are different and are generated from the exponential
distribution with rate parameter 1, i.e., the variance for each error term is randomly generated from a squared exponential random variable. Figure~\ref{fig:s10} presents MSE
for estimating the slope parameter. It is seen that the relative
performance of IBOSS compared with other methods are similar to that
of parameter estimation in the main paper. That is, the D-OPT IBOSS
method uniformly dominates the subsampling-based methods UNI and LEV,
and its advantage is more significant if the tail of the covariate
distribution is heavier. Note that when the error terms have unequal variances, transformations are often used to stabilize the variances or weighted least squares are often used instead of the ordinal least squares. These questions are beyond the scope of this paper and we will investigate them in another project.

\begin{figure}[H]
  \centering
  \begin{subfigure}{0.49\textwidth}
    \includegraphics[width=\textwidth]{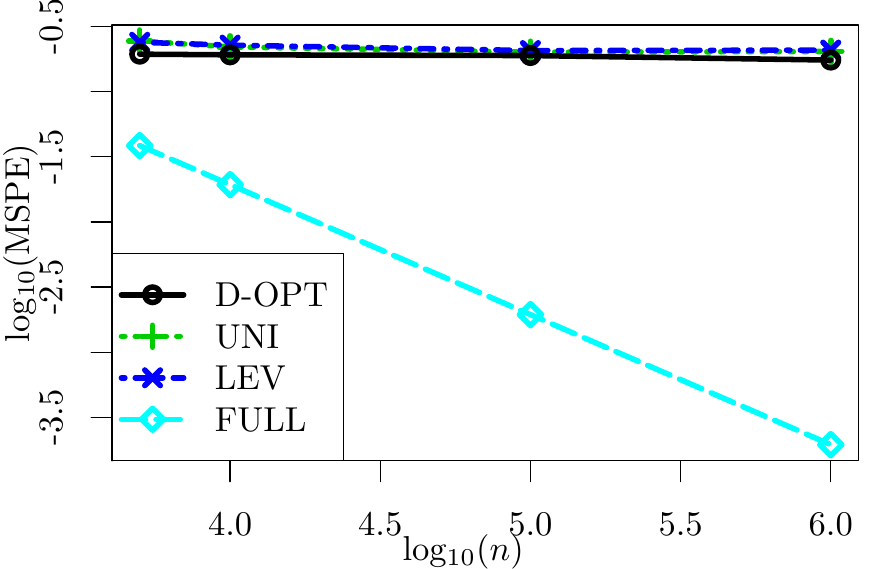}\\[-10mm]
    \caption{Case 1: $\z_i$'s are normal.}
  \end{subfigure}
  \begin{subfigure}{0.49\textwidth}
    \includegraphics[width=\textwidth]{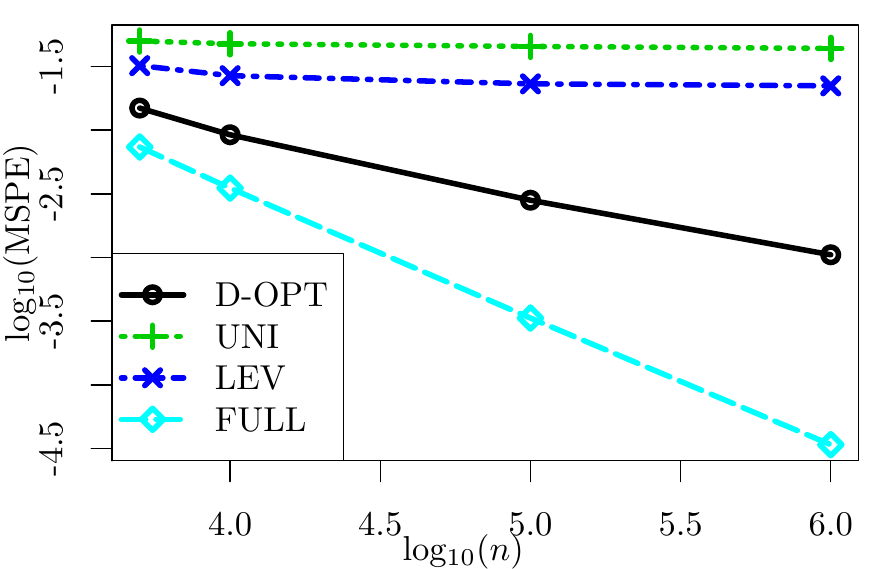}\\[-10mm]
    \caption{Case 2: $\z_i$'s are lognormal.}
  \end{subfigure}\\[3mm]
  \begin{subfigure}{0.49\textwidth}
    \includegraphics[width=\textwidth]{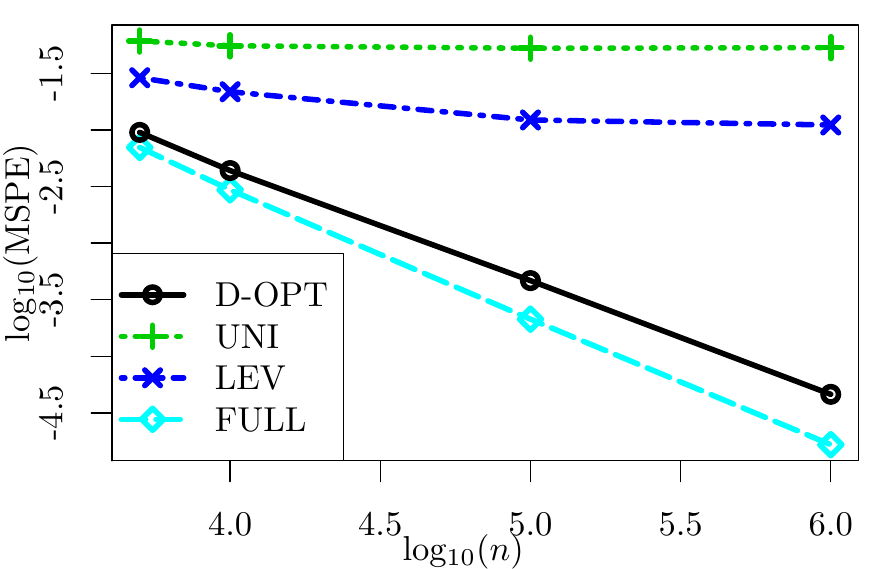}\\[-10mm]
    \caption{Case 3: $\z_i$'s are $t_2$.}
  \end{subfigure}
  \begin{subfigure}{0.49\textwidth}
    \includegraphics[width=\textwidth]{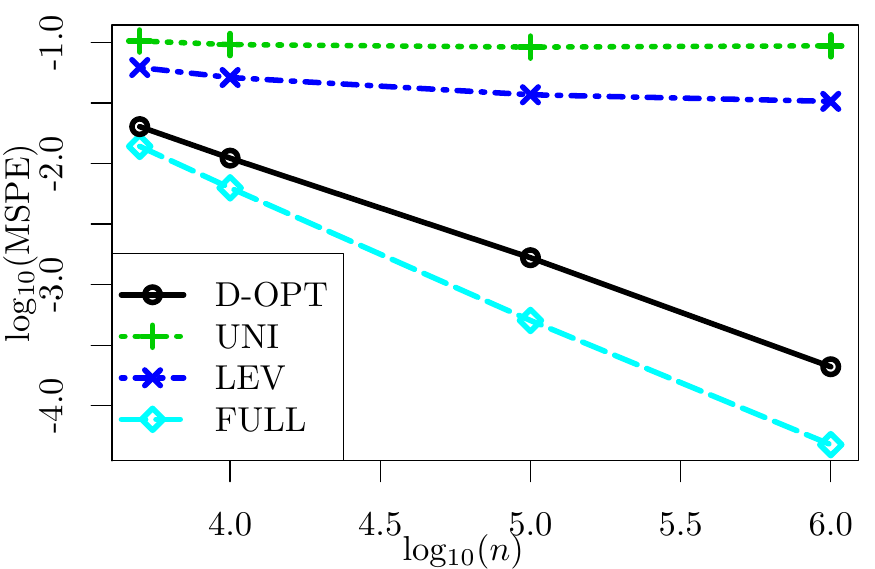}\\[-10mm]
    \caption{Case 4: $\z_i$'s are a mixture.}
  \end{subfigure}
  \caption{MSEs for estimating the slope parameter when the error
    terms are heteroscedastic. The subdata size $k$ is fixed at
    $k=1000$ and the full data size $n$ changes. Logarithm with base
    10 is taken of $n$ and MSEs for better presentation of the
    figures.}
  \label{fig:s10}
\end{figure}

\end{document}